\titleformat*{\section}{\large\bfseries}
\titleformat*{\subsection}{\it}
\newtheorem{thm}{Theorem}
\newtheorem{lem}{Lemma}
\newtheorem{prp}{Proposition}
\newcommand{\R}{\mathbb{R}}
\newcommand{\E}{\mathbb{E}}
\newcommand{\KL}{\mathrm{KL}}
\def\bx{{\boldsymbol{x}}}
\def\bm{{\boldsymbol{m}}}
\def\bI{{\boldsymbol{I}}}
\def\bX{{\boldsymbol{X}}}
\def\bY{{\boldsymbol{Y}}}
\def\bV{{\boldsymbol{V}}}
\def\bbeta{{\boldsymbol{\beta}}}
\def\veta{{\boldsymbol{\eta}}}
\def\veta{{\boldsymbol{\eta}}}
\newcommand{\normalphi}[3]{\phi(#1 | #2, #3)} % \normalphi{y}{mu}{sigma^2}
\title{{\bf Repulsive $g$-Priors for Regression Mixtures}}
\date{}
\begin{document}

\maketitle
\doublespacing

\vspace{-1.5cm}
\begin{center}
{\large 
Yuta Hayashida$^1$ and Shonosuke Sugasawa$^2$\footnote{Corresponding author (Email: sugasawa@econ.keio.ac.jp)}
}

\medskip
%\today

\medskip
\noindent
$^1$Graduate School of Economics, Keio University\\
$^2$Faculty of Economics, Keio University\\
\end{center}

\vspace{0.5cm}
\begin{center}
{\bf \large Abstract}
\end{center}

\vspace{-0cm}
Mixture regression models are powerful tools for capturing heterogeneous covariate–response relationships, yet classical finite mixtures and Bayesian nonparametric alternatives often suffer from instability or overestimation of clusters when component separability is weak. Recent repulsive priors improve parsimony in density mixtures by discouraging nearby components, but their direct extension to regression is nontrivial since separation must respect the predictive geometry induced by covariates. We propose a repulsive $g$-prior for regression mixtures that enforces separation in the Mahalanobis metric, penalizing components indistinguishable in the predictive mean space. This construction preserves conjugacy-like updates while introducing geometry-aware interactions, enabling efficient blocked–collapsed Gibbs sampling. Theoretically, we establish tractable normalizing bounds, posterior contraction rates, and shrinkage of tail mass on the number of components. Simulations under correlated and overlapping designs demonstrate improved clustering and prediction relative to independent, Euclidean-repulsive, and sparsity-inducing baselines.

\bigskip\noindent
{\bf Key words}: mixture-of-experts; nonparametric Bayes; posterior consistency

%----------------------------------------%
%           Introduction                 %
%----------------------------------------%
\section{Introduction}
Mixture regression models are widely used to capture heterogeneous relationships across fields such as marketing, biostatistics, and econometrics \citep[e.g.][]{wedel1993latent,yao2010functional,hamilton2016macroeconomic}. Classical finite-mixture formulations, including hierarchical and adaptive mixtures of experts, have been extensively studied and are routinely employed \citep[e.g.][]{mclachlan2004finite,jordan1994hierarchical,jacobs1991adaptive}. However, these parametric models require fixing the number of components in advance and can be unstable when components overlap or covariates are high-dimensional. Bayesian nonparametric approaches such as Dirichlet process mixtures \citep{thomas1973bayesian,neal2000marcov} and priors on the number of components in finite mixtures, known as the mixture-of-finite-mixtures (MFM), offer more flexibility \citep{Miller02012018}. Yet, these approaches also face well-documented challenges for inference on cluster structure: Dirichlet process/Pitman–Yor mixtures can over-estimate the number of clusters, while overfitted finite mixtures tend to split components rather than emptying them unless the prior is carefully tuned \citep{miller2013inconsistency,xu2016bayesian}.

In mixture estimation, ensuring sufficient separability between components is often crucial for stable inference and interpretability. To this end, rather than using conventional independent priors for component-specific parameters, one can specify ``repulsive priors" as a joint prior that places low mass on nearby components, thereby discouraging redundant clusters and improving parsimony \citep{petralia2012repulsive,quinlan2017parsimonious}.
\cite{Xie02012020} developed a repulsive prior for Gaussian mixtures with theoretical guarantees, showing additional posterior shrinkage on the tail probability of the component count relative to independent priors. 
However, it cannot be directly imported to a regression setting, since the separation of the regression coefficients does not indicate the separation of the regression function when covariates are correlated or ill-conditioned.
Hence, the repulsive prior for the regression coefficients should take account of geometry induced by covariates.

To solve the aforementioned issue, we propose a repulsive $g$-prior for regression mixtures that enforces separation in the predictive geometry determined by the covariates. 
The prior measures pairwise distances between component coefficients in the Mahalanobis metric, whereby repulsion is strongest along well-identified directions and mild where the design is uninformative.
Equivalently, the repulsive $g$-prior discourages components that are nearly indistinguishable in the predictive mean space. 
Regarding its theoretical properties, adapting the normalization and tail–shrinkage arguments of \cite{Xie02012020} to our Mahalanobis penalty, we obtain a linear-in-$K$ bound on the normalizing constant and show shrinkage of posterior tail mass on the number of components. 
For posterior computation, we develop an efficient Gibbs sampler for the proposed repulsive $g$-prior, retaining conjugacy updates with a geometry-aware accept–reject step.

In related work, existing approaches primarily regulate model size or shrinkage without taking account of geometry induced by covariates. 
The complexity is usually controlled by independent component priors combined with either Dirichlet process mixtures or the mixture-of-finite-mixtures (MFM) prior \citep{Miller02012018, neal2000marcov}, or by sparsity-inducing priors on the weights that empty redundant components \citep{rousseau2011asymptotic}.
However, these mechanisms do not ensure separation in the predictive values. 
Repulsive priors for density mixtures instead penalize Euclidean proximity between component parameters \citep{petralia2012repulsive,Xie02012020}, but they are not tailored to regression geometry. 
More recent developments move beyond Euclidean isotropy, such as Wasserstein-based repulsion for density \citep{huang2025bayesian} and anisotropic repulsion in latent-factor clustering \citep{ghilotti2024bayesian}.
Yet these focus on density or latent-space structure rather than regression-specific predictive geometry.

The remainder of the paper is organized as follows. 
Section~2 introduces the regression‐mixture specification with the repulsive $g$-prior and develops a blocked-collapsed Gibbs sampler for posterior computation.
Section~3 establishes theoretical guarantees, where we prove strong posterior consistency, contraction rate, and quantify shrinkage of the posterior mass on the number of components.
Section 4 reports simulation studies to compare clustering and prediction performance of the proposed repulsive $g$-prior and other priors. 
Finally, Section 5 provides concluding remarks.

%----------------------------------------%
%              Method                    %
%----------------------------------------%
\section{Repulsive $g$-Priors}

\subsection{Model settings}

Let $y_i$ be a response variable and $\bx_i$ be a vector of covariates, for $i=1,\ldots,n$, where $n$ is the sample size.
We consider the following Gaussian regression mixture model:
\begin{equation}\label{eq:base_model}
f(y_i|\bx_i, \Theta)=\sum_{k=1}^K \omega_k \phi(y_i; \bx_i^\top \bbeta_k, \sigma_k^2),
\end{equation}
where $\omega_k$ is an unknown mixing proportion such that $\sum_{k=1}^K \omega_k=1$ with $\omega_k\geq 0$, and $\Theta=\{(\omega_k, \bbeta_k,\sigma_k),\ k=1,\ldots,K\}$ is a set of unknown parameters. 
Here $\phi(\cdot; \mu, \sigma^2)$ denotes the density function of the normal distribution with mean $\mu$ and variance $\sigma^2$. 
To allow model complexity to adapt to the data, we endow the number of components with a mixture-of-finite-mixtures (MFM) prior \citep{Miller02012018}, which places a discrete prior on $K$ and, conditional on $K$, assigns symmetric Dirichlet weights to the mixture proportions, thereby letting the posterior automatically infer an appropriate number of clusters.
Specifically, we assume that  
$$
(\omega_1, \ldots, \omega_K) \mid K \sim \mathcal{D}(\alpha,\ldots,\alpha), \qquad K \sim p(K), \qquad K \in \mathbb{N}_+,
$$
where $\mathcal{D}(\alpha,\ldots,\alpha)$ denotes the symmetric Dirichlet distribution. 

In the existing approaches, the regression coefficients, $\bbeta_1,\ldots,\bbeta_K$, are usually assumed independent across clusters. However, as shown in \cite{xu2016bayesian}, this standard assumption in mixture models often results in overlapping or redundant clusters, making the interpretation of the clusters challenging. 
To overcome this issue, we introduce repulsion to encourage distinct clusters in the mixture model, which gives a general form of the joint prior as follows: 
\begin{align*}
  p(\bbeta_1, \sigma_1, \ldots, \bbeta_K, \sigma_K \mid K) 
  = \frac{1}{Z_K}\left[\prod_{k=1}^{K}p_\beta(\bbeta_k)p_\sigma(\sigma_k)\right] h_K(\bbeta_1,\ldots,\bbeta_K)
\end{align*}
where $Z_K=\int\cdots\int h_K(\bbeta_1,\ldots,\bbeta_K)\big[\prod^K_{k=1}p_\beta(\bbeta_k)\big]d\bbeta_1\cdots d\bbeta_K$ is the normalization constant, and $h_K$ is a function that implements repulsion between the $\bbeta_k$. Notice that the repulsive prior defined here for regression mixtures is a simple extension of repulsive prior introduced in \cite{Xie02012020} for Gaussian mixture model. 
\cite{Xie02012020} have suggested $h_K(\bbeta_1,\ldots,\bbeta_K)= \min_{k<k'} G\bigl(\|\bbeta_k-\bbeta_{k'}\|\bigr)$,
where $G:\mathbb{R}_+\rightarrow[0,1]$ is strictly increasing with $G(0)=0$, such as $G(t)=t/(t+g_0)$, where $g_0>0$ is hyperparameter for repulsion. 
This form could apply repulsion via the Euclidean distances between pairs of coefficients, ensuring cluster separation.

Most existing repulsive priors have been developed for location–scale mixtures whose component means are given independent spherical normal priors. 
Extending them to regression mixtures is non-trivial because the design matrix $\bX=(\bx_1,\ldots,\bx_n)^\top$ induces non-spherical, data-dependent covariance structures.
Figure \ref{fig:x_xbeta_space} illustrates this difficulty. In  $\bbeta$-space, the Euclidean distances among three vectors, $\bbeta$, $\bbeta'$, $\bbeta''$, determine the strength of a spherical repulsive prior, with $\bbeta-\bbeta'$ attracting the strongest penalty and $\bbeta-\bbeta''$ the weakest. After projection to mean space via the linear map $f(\bbeta)=\bX\bbeta$, this ordering is reversed: $\bX\bbeta$ and $\bX\bbeta'$ become well separated, whereas $\bX\bbeta$ and $\bX\bbeta''$ nearly coincide. The example demonstrates that distance-based repulsion imposed in $\bbeta$-space fails to guarantee separation of component means once the design matrix is applied, motivating priors that measure repulsion directly in the induced $\bX\bbeta$-space.

% Figure 
\begin{figure}
\centering
\includegraphics[width=0.7\linewidth]{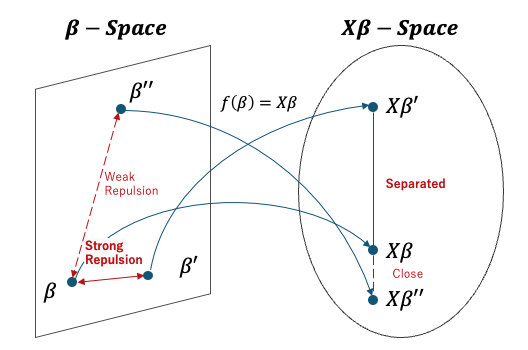}
\caption{ Distortion of repulsion under the design matrix.}
\label{fig:x_xbeta_space}
\end{figure}

To overcome the limitation, we derive the repulsive $g$-prior by applying the standard repulsive function to parameters transformed (whitened) via Zellner’s $g$-prior and subsequently reversing this transformation.
Consider constructing a prior for two regression coefficients, $\bbeta_1$ and $\bbeta_2$.
Throughout this paper, we assume that $\bX^\top \bX$ is non-singular. 
Then, the $g$-prior for $\bbeta_1$ and $\bbeta_2$ is defined as $N(0, g\sigma^2(\bX^\top \bX)^{-1})$.
This is equivalent to assuming that $\sqrt{g}(\bX^\top \bX)^{-1/2}\bbeta_k\sim N(0, \sigma^2 \bI_p)$ \ $(k=1,2)$, which means that the transformed parameter $\veta_k\equiv \sqrt{g}(\bX^\top \bX)^{-1/2}\bbeta_k$ can be treated as a multivariate parameter whose element following a independent prior with the same variance. 
Hence, instead of $\bbeta_1$ and $\bbeta_2$, we may consider a repulsive prior for $\veta_1$ and $\veta_2$ as follows: 
$$
\pi(\veta_1, \veta_2)\propto \phi(\veta_1; 0, \sigma^2 \bI_p)\phi(\veta_2; 0, \sigma^2 \bI_p)h(\|\veta_1-\veta_2\|^2).
$$
Note that $\|\veta_1-\veta_2\|^2=g(\bbeta_1-\bbeta_2)^\top (\bX^\top \bX)^{-1}(\bbeta_1-\bbeta_2)$, which can be regarded as the Mahalanobis distance between $\bbeta_1$ and $\bbeta_2$ with covariance matrix of $\bX$. 
Finally, the repulsive prior for $\bbeta_1$ and $\bbeta_2$ is obtained as 
$$
\pi(\bbeta_1, \bbeta_2)\propto \phi(\bbeta_1; 0, \sigma^2 A)\phi(\bbeta_2; 0, \sigma^2A)h((\bbeta_1-\bbeta_2)^\top A(\bbeta_1-\bbeta_2)),
$$
where $A=g(\bX^\top \bX)^{-1}$.
Then, the resulting joint prior is obtained as 
\begin{equation*}
\begin{split}
p(\bbeta_1, &\sigma_1, \ldots, \bbeta_K, \sigma_K \mid K) \\
  &= \frac{1}{Z_K}\left[\prod_{k=1}^{K}\phi(\bbeta_k;0, g\sigma^2(\bX^\top \bX)^{-1})p_\sigma(\sigma_k)\right] h_K(\bbeta_1,\ldots,\bbeta_K)
\end{split}
\end{equation*}
Because the $g$-prior shrinks each $\bbeta_k$ toward the origin in the metric induced by $\bX^{\top}\bX$, it provides scale-invariant and conjugate regularisation that meshes naturally with Gaussian likelihoods and already embeds information about the geometry of the covariates.  Building on this geometric insight, we depart from the usual practice of simply “plugging in” existing repulsive functions. Instead, we propose a novel form for the repulsive function that explicitly exploits the same $\bX^{\top}\bX$ metric:
\begin{align}
  h_K(\bbeta_1,\ldots,\bbeta_K)
  = \min_{k<k'} G\left((\bbeta_k-\bbeta_{k'})^{\top}g(\bX^\top \bX)^{-1}(\bbeta_k-\bbeta_{k'})\right)
\label{eq:proposed_repulsive}
\end{align}
so that the strength of repulsion between clusters directly reflects the covariance-shaped geometry of the predictors. This modification introduces explicit dependence on the design matrix $\bX$, in contrast to previous approaches. 
By imposing standard conditions on the distribution $p_x$, we can derive theoretical results for the relationship between $Z_K$ and $K$, similar to those established in \cite{Xie02012020}.

\begin{thm}
\label{thm:NormalizationConstant}
  Suppose each $\bbeta_k$ follows the $g$-prior and the form of repulsive function $h_K$ is given by (\ref{eq:proposed_repulsive}). If 
  \begin{align*}
    \iint\Bigl[\log G\!\bigl((\bbeta_1-\bbeta_2)^{\!\top}g(\bX^\top \bX)^{-1}(\bbeta_1-\bbeta_2)\bigr)\Bigr]^{2}p_\beta(\bbeta_1)p_\beta(\bbeta_2)\,d\bbeta_1d\bbeta_2<\infty,
  \end{align*}
  then $0 \leq -\log{Z_K} \leq c_1 K$ for some constant $c_1 > 0$.
\end{thm}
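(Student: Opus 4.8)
The plan is to use the whitening that defines the repulsive $g$-prior to rewrite $Z_K$ as an expectation over i.i.d.\ isotropic Gaussians, and then to bound that expectation from below by restricting attention to configurations whose closest pair is separated by a $K$-dependent radius. Write $Z_K=\E[h_K(\bbeta_1,\ldots,\bbeta_K)]$ with $\bbeta_1,\ldots,\bbeta_K$ i.i.d.\ from the $g$-prior, and pass to the whitened variables $\veta_k=\sqrt{g}\,(\bX^\top\bX)^{-1/2}\bbeta_k$, which (as in the construction of the prior) are i.i.d.\ $N(0,\sigma^2\bI_p)$ and satisfy $(\bbeta_k-\bbeta_{k'})^\top g(\bX^\top\bX)^{-1}(\bbeta_k-\bbeta_{k'})=\|\veta_k-\veta_{k'}\|^2$. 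Hence $Z_K=\E[G(D_{\min}^2)]$, where $D_{\min}=\min_{k<k'}\|\veta_k-\veta_{k'}\|$, and the hypothesis of the theorem becomes $\E[(\log G(\|\veta_1-\veta_2\|^2))^2]<\infty$. Since $G$ takes values in $[0,1]$, we have $0\le Z_K\le 1$, which gives the lower bound $-\log Z_K\ge 0$ at once, and also $Z_K>0$, so $-\log Z_K<\infty$ for each fixed $K$.

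For the upper bound, fix $a_K>0$ and use monotonicity of $G$ to get $Z_K\ge G(a_K)\,\Prob(D_{\min}^2\ge a_K)$. Because $\{D_{\min}^2<a_K\}\subseteq\bigcup_{k<k'}\{\|\veta_k-\veta_{k'}\|^2<a_K\}$ and $\|\veta_1-\veta_2\|^2$ is a scaled $\chi^2_p$ variable, the elementary bound $\gamma(s,x)\le x^s/s$ on the lower incomplete gamma function yields $\Prob(\|\veta_1-\veta_2\|^2<a)\le C_{p,\sigma}\,a^{p/2}$ for all $a>0$; a union bound then gives $\Prob(D_{\min}^2<a_K)\le\binom{K}{2}C_{p,\sigma}\,a_K^{p/2}$. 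Taking $a_K=c_0K^{-4/p}$ with $c_0=c_0(p,\sigma)$ small enough makes this at most $1/2$ for every $K\ge 2$, so $\Prob(D_{\min}^2\ge a_K)\ge 1/2$ and therefore $-\log Z_K\le -\log G(a_K)+\log 2$.

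It remains to show $-\log G(c_0K^{-4/p})=o(K)$, and this is where (and why) the second-moment hypothesis is used. Put $\psi(r)=-\log G(r)\ge 0$; since $G$ is increasing, $\psi$ is decreasing, and because the density of $\|\veta_1-\veta_2\|^2$ behaves like $r^{p/2-1}$ near $0$, the hypothesis gives $\int_0^1\psi(r)^2 r^{p/2-1}\,dr<\infty$. A dyadic comparison — for $0<r\le 1/2$, $\int_r^{2r}\psi(s)^2 s^{p/2-1}\,ds\ge c_p\,\psi(2r)^2 r^{p/2}$ by monotonicity of $\psi$, while the left-hand side tends to $0$ as $r\to 0$ — forces $\psi(2r)^2 r^{p/2}\to 0$, i.e.\ $\psi(r)=o(r^{-p/4})$ as $r\to 0$. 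Evaluating at $r=a_K=c_0K^{-4/p}$ gives $\psi(a_K)=o\!\bigl((K^{-4/p})^{-p/4}\bigr)=o(K)$, hence $-\log Z_K\le\psi(a_K)+\log 2=o(K)$; absorbing the finitely many small values of $K$ into the constant gives $-\log Z_K\le c_1K$ for all $K\in\mathbb{N}_+$. The main obstacle is exactly this last implication: converting integrability of $(\log G)^2$ against the weight $r^{p/2-1}$ into the pointwise decay $\psi(a_K)=o(K)$. Monotonicity of $G$ is essential here (it forbids spikes and validates the dyadic estimate), and the separation scale $a_K\asymp K^{-4/p}$ — dictated by matching the union-bound mass $\binom{K}{2}a_K^{p/2}$ to a constant — is precisely the one at which a second moment, rather than merely a first moment of $\log G$ (which would only deliver $o(K^2)$), is the correct assumption; the whitening identity, the incomplete-gamma tail bound, and the union bound are otherwise routine.
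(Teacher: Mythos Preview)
Your argument is correct and more informative than the paper's own proof. Both begin with the same whitening $\veta_k=\sqrt{g}\,(\bX^\top\bX)^{-1/2}\bbeta_k$, which converts the Mahalanobis distance in the penalty to a Euclidean one and expresses $Z_K$ as an expectation over i.i.d.\ Gaussians; at that point the paper simply invokes Theorem~A.1 of Xie et al.\ (2020), whereas you supply a direct, self-contained proof. Your route --- lower-bounding $Z_K\ge G(a_K)\,\Prob(D_{\min}^2\ge a_K)$, balancing the union bound over $\binom{K}{2}$ pairs by taking $a_K\asymp K^{-4/p}$, and then converting the second-moment hypothesis into the pointwise decay $\psi(r)=o(r^{-p/4})$ via a dyadic estimate --- makes explicit why a \emph{second} moment of $\log G$ (rather than merely a first, which as you note would only yield $o(K^2)$) is the natural assumption for a linear-in-$K$ bound. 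One small caveat: the isotropy claim $\veta_k\sim N(0,\sigma^2\bI_p)$, inherited from the paper's construction section, is not quite right under this particular transformation, but nothing in your argument actually needs it --- i.i.d.\ nondegenerate Gaussian suffices for both the small-ball bound $\Prob(\|\veta_1-\veta_2\|^2<a)\le C\,a^{p/2}$ and the $r^{p/2-1}$ behaviour of the density near zero.
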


Theorem~\ref{thm:NormalizationConstant} shows that, under the metric induced by $\bX^\top\bX$, the normalizing constant $Z_K$ does not collapse and $-\log Z_K$ grows at most linearly in $K$. This ensures that the prior remains proper as the number of components increases and that the repulsive $g$-prior defines a valid probability distribution even for large $K$.

\subsection{Posterior computation}
We employ a blocked-collapsed Gibbs sampler for the Gaussian regression mixture (\ref{eq:base_model}) with the repulsive $g$-prior. 
To this end, we introduce a parameter of grouping assignment, denoted by $z_i\in \{1,2,\ldots\}$, for each subject. 
The detailed sampling steps are given as follows:

\begin{itemize}
\item 
(Precomputation of normalization constant)\ \ 
Fix $K_{\max}\!\ge\!2$. For $k=1,\ldots,K_{\max}$, define the prior-side normalizing constant
\[
Z_k
\;=\;
\int \cdots \int 
h_k(\bbeta_1,\ldots,\bbeta_k)\,
\Bigg[\prod_{j=1}^k p_\beta(\bbeta_j)\Bigg]\,
d\bbeta_1\cdots d\bbeta_k,
\]
where $p_\beta(\cdot)$ is the prior density for the regression coefficients. 
In practice, $Z_K$  is computed numerically using Monte Carlo integration. This is achieved by drawing a large number of independent samples, $\bbeta_j \sim p_\beta$, and then averaging the corresponding values of $h_k(\bbeta_1,\ldots,\bbeta_k)$.

\item
(Sampling of grouping assignment)\ \  First, generate auxiliary parameters as $\sigma_{\rm new}^{2} \sim \text{Inverse-Gamma}(a_0, b_0)$ truncated on $[\underline{\sigma}^2, \overline{\sigma}^2]$ and $\bbeta_{\rm new} \sim N(\mathbf{0}, g\sigma_{\rm new}^2(\bX^\top \bX)^{-1})$ and accept $\bbeta_{\rm new}$ with probability $h_{K+1}(\bbeta_1, \dots, \bbeta_K, \bbeta_{\rm new})$.
Then, for $i=1,\cdots,n$, the new assignment $z_i$ is generated from a multinomial distribution with probability 
\begin{align*}
&P(z_i=c|-) \propto (|c|+\alpha) \cdot \phi(y_i; \bx_i^\top\bbeta_c, \sigma_c^2), \quad  \ \ \  c \in \mathcal{C}_{-i}.\\
&P(z_i={\rm new}|-) \propto \frac{V_n(|\mathcal{C}_{-i}|+1)\alpha}{V_n(|\mathcal{C}_{-i}|)}
\cdot \phi(y_i; \bx_i^\top\bbeta_{\rm new}, \sigma_{\rm new}^2),
\end{align*}
where $V_n(t)$ denotes, 
\[
V_n(t)\ =\ \sum_{K=t}^{\infty} 
p_K(K)\ \frac{\Gamma(K+1)}{\Gamma(K-t+1)}\,
\frac{\Gamma(\alpha K)}{\Gamma(\alpha K + n)}.
\]
and $\mathcal{C}_{-i}$ is a set of group indices without the $i$th observation.

\item
(Sampling the number of components) \ \ 
%from $p(K \mid C, Y, \bX, \{\sigma_c^{2*}\} )$
Set $\ell=|\mathcal{C}|$ and consider candidate values $K\in\{\ell,\ell{+}1,\ldots,\ell{+}m\}$.
For each such $K$, define
\[
\widetilde Z_K
\ :=\
\int \cdots \int 
h_K\!\big(\{\bbeta_{c}\}_{c\in\mathcal{C}}\cup\{\bbeta_{c}\}_{c\in\mathcal{C}_\varnothing}\big)
\left[
  \prod_{c\in\mathcal{C}} p\!\left(\theta_c \mid \{y_i:\ i\in c\},\bX\right)
\right]
\left[
  \prod_{c\in\mathcal{C}_\varnothing} p\!\left(\theta_c\right)
\right]
\,d\theta,
\]
where $\theta_c=(\bbeta_c,\sigma_c^2)$ and $\mathcal{C}_\varnothing$ indexes the $K-\ell$ empty clusters, and compute numerically by Monte Carlo.
Then sample $K$ from the discrete posterior using the precomputed $Z_K$ as
$$
p\left(K \mid -\right)
\ \propto\
\frac{\widetilde Z_K}{Z_K}\,
\frac{K!}{(K-\ell)!\,(K+n)!},
\qquad 
K\in\{\ell,\ell{+}1,\ldots,\ell{+}m\}.
$$

\item
(Sample cluster-wise variance) \ \ 
For $c=1,\ldots,K$, generate $\sigma_c^2$ from its full conditional posterior
%For each cluster $c\in C$ with $n_c$ members, 
$$
\sigma_c^2\sim \text{Inverse-Gamma}\left(a_0 + \frac{|c|}{2}, b_0 + \frac{1}{2} \sum_{z_i\in \mathcal{C}} (y_i - \bx_i^\top\bbeta_c)^2 \right).
$$

\item
(Sampling of cluster-wise coefficients) \ \ 
For $c=1,\ldots,K$, generate a proposal $\bbeta_{c}'$ from its full conditional distribution as follows: 
\begin{itemize}
\item[-]
For non-empty cluster, generate $\bbeta_c'$ from its full conditional posterior $N(\bV_c \bm_c, \bV_c)$, where
\begin{align*}
\bV_c = \left(\frac{1}{\sigma_c^{2}}\bX_c^\top \bX_c + \frac{1}{g\sigma_c^{2}}\bX^\top \bX\right)^{-1},  \ \ \ \ \ 
\bm_c = \frac{1}{\sigma_c^{2}}\bX_c^\top \bY_c,
\end{align*}
where $\bX_c$ and $\bY_c$ are sub-matrix and sub-vector of $\bX$ and $\bY$ satisfying $z_i=c$, respectively. 

\item[-]
For empty cluster, generate $\beta_{c}'$ from its $g$-prior $N(\mathbf{0}, g\sigma_{c}^2(\bX^\top \bX)^{-1})$.
\end{itemize}
The proposal $\bbeta_c'$ \ $(c=1,\ldots,K)$ is accepted with probability $h_K(\bbeta_1', \dots, \bbeta_K')$.

\end{itemize}

%----------------------------------------%
%              Theory                    %
%----------------------------------------%
\section{Theoretical Properties}

% Xが依存している点がassumptionにどうやって効いているかを確認する
Based on the theoretical framework by \cite{Xie02012020}, we extend the theoretical analysis of Bayesian repulsive mixture models to the context of regression mixtures with a repulsive $g$-prior. 
Furthermore, we analyze the shrinkage effect of the repulsive prior on the posterior of the number of components $K$, highlighting both the technical and practical advantages of the proposed prior in regression mixture.

\subsection{Assumptions}
The first set of conditions are requirements for the true distribution ($f_0$, $F_0$, $p_X$) and the general structure.

\begin{enumerate}[label=\textbf{A\arabic*.}, noitemsep, topsep=0cm]
    \item The true mixing distribution $F_0$ on $\Theta = \R^p \times [\underline{\sigma}^2, \overline{\sigma}^2]$ has a sub-Gaussian tail for the regression coefficients $\bbeta$: $\int \|\bbeta\|^k dF_0(\bbeta, \sigma^2) < \infty$ for all $k \ge 1$.
    \item The function $G$ used in the repulsive function $h_K$ satisfies: for some $\delta_g > 0, c_g > 0$, we have $G(x) \ge c_g \epsilon$ whenever $x \ge \epsilon$ and $\epsilon \in (0, \delta_g)$.
    \item The function $G$ and the base prior $p_\beta$ satisfy the integrability condition required for the bound on the normalizing constant $Z_K$:
    $$ 
    \iint_{\R^p \times \R^p} [\log G((\bbeta_1-\bbeta_2)^\top g(\bX^\top \bX)^{-1}(\bbeta_1-\bbeta_2))]^2 p_\beta(\bbeta_1) p_\beta(\bbeta_2) d\bbeta_1 d\bbeta_2 < \infty.
    $$
    \item The true mixing distribution $F_0$ has support for $\sigma^2$ contained within known bounds: there exist $0 < \underline{\sigma}^2 \le \overline{\sigma}^2 < \infty$ such that $\text{supp}(F_0(\cdot, \sigma^2)) \subset [\underline{\sigma}^2, \overline{\sigma}^2]$. We also assume the prior $p_{\sigma^2}$ has the same support $[\underline{\sigma}^2, \overline{\sigma}^2]$ .
    \item The true covariate density $p_X(\bx)$ is bounded, and $x$ satisfies  $\|\bx\|_2 \le M_X< \infty$.
\end{enumerate}

Assumption A1 restricts the true mixing distribution to have sub-Gaussian tails, ruling out excessively heavy-tailed coefficients. Assumption A2 imposes minimal regularity on the repulsive function $G$ to ensure sufficient separation between components. Assumption A3 guarantees that the normalization constant $1/Z_K$ does not grow super-exponentially with $K$. Assumption A4 requires both the true and prior variances to be bounded away from zero and infinity. Finally, Assumption A5 bounds the covariate distribution, which simplifies entropy calculations in the later theoretical analysis.

We also need some requirements for the prior distributions $\Pi$ over the mixing measure $F = \sum_{k=1}^K w_k \delta_{(\bbeta_k, \sigma_k^2)}$.

\begin{enumerate}[resume, label=\textbf{A\arabic*.}, noitemsep, topsep=0cm]
    \item The prior on weights is $(w_1, \dots, w_K | K) \sim \mathcal{D}_K(\alpha)$ with $\alpha \in (0, 1]$.
    \item The base prior density $p_\beta(\bbeta)$ for the regression coefficients has a sub-Gaussian tail: $\int_{\{\|\bbeta\| \ge t\}} p_\beta(\bbeta) d\bbeta \le B_2 e^{-b_2 t^2}$ for some $B_2, b_2 > 0$ .
    \item The base prior density $p_\beta(\bbeta)$ is positive and continuous everywhere on $\R^p$: $p_\beta(\bbeta) > 0$ for all $\bbeta \in \R^p$.
    \item The prior density $p_{\sigma^2}(\sigma^2)$ has support $[\underline{\sigma}^2, \overline{\sigma}^2]$ and is positive and continuous on its support.
    \item The prior on the number of components $p_K(K)$ decays sufficiently fast but not too fast for large $K$. There exist $B_4, b_4 > 0$ such that for sufficiently large $K$:
    $$ p_K(K) \ge e^{-b_4 K \log K}, \quad \sum_{N=K}^{\infty} p_K(N) \le e^{-B_4 K \log K} $$
\end{enumerate}

Assumption A6 assumes a weakly informative Dirichlet prior for the component weights, which is standard and ensures adequate flexibility. Assumption A7 requires the base prior for the regression coefficients to 
have sub-Gaussian tails, preventing the prior from concentrating on extreme values. 
Assumption A8 ensures that the prior for $\bbeta$ is everywhere positive and continuous, so that all regions of the parameter space are accessible. 
Assumption A9 requires the prior for the noise variance to be positive and continuous within its support. 
Finally, Assumption A10 controls the tail behavior of the prior on the number of components, ensuring it neither decays too slowly nor too quickly as $K$ increases.

\subsection{Consistency and contraction rate}

% 最小限の説明にとどめるようにする
% assumptionは出すけど、具体的な途中の説明はなくてOK
% 解釈的なことを説明することに注力する

%-------------%
To establish the strong consistency of the proposed model, we follow the general approach in Theorem 1 of \cite{canale2017posterior}, suitably adapted to the regression mixture setting. Specifically, we construct a sequence of sieve submodels of $\mathcal{M}(\mathbb{R}^p \times \mathbb{R}_+)$ defined by
\[
\mathcal{F}_{K_n}=\left\{f_F(\cdot|\cdot):F= \sum_{k=1}^{K} \omega_k \delta_{(\beta_k, \sigma_k)},K\le K_n,\beta_k\times\sigma_k^2\in\R^p\times\R_+\right\}
\]
and the following partition of the submodel $\mathcal{F}_{K_n}$
\[
\mathcal{G}_K(a_K)=\mathcal{F}_{K}\left(\prod_{k=1}^K(a_k,a_k+1]\right),\,a_K=(a_1,\cdots,a_K)\in\mathbb{N}^K,\,K=1,\cdots K_n,
\]
where
\[
\mathcal{F}_{K}\left(\prod_{k=1}^K(a_k,b_k]\right)=\left\{f_F(\cdot|\cdot):F= \sum_{k=1}^{K} \omega_k \delta_{(\beta_k, \sigma_k)},\|\beta\|_\infty\in(a_k,b_k]\right\}
\]
This construction generalizes the partitions introduced in \cite{Xie02012020} to the regression mixture case, allowing for control over the complexity of the parameter space.
According to Theorem 1 of \cite{canale2017posterior}, it suffices to verify two main conditions: (i) the true density $f_0$ is in the Kullback–Leibler (KL) support of the prior $\Pi$; and (ii) there exist $\tilde{b}>0$ and a sequence $(K_n)_{n=1}^\infty$ such that for sufficiently large $n$, the following summability condition holds for all $\epsilon > 0$:
\begin{align}
\label{eq:summability_condition}
    \lim_{n\to\infty}e^{-(4-\tilde{b})n\epsilon^2}
\sum_{K=1}^{K_n}\sum_{a_1=0}^{\infty}\dots\sum_{a_K=0}^{\infty}
\sqrt{\mathcal{N}(\epsilon,\mathcal G_K(\mathbf a_K),\|\cdot\|_{1})}
   \sqrt{\Pi(\mathcal G_K(\mathbf a_K))}=0.
\end{align}
where $\mathcal{N}(\epsilon, \mathcal{G}_K(\mathbf{a}_K), \|\cdot\|_1)$ denotes the $\epsilon$-covering number of $\mathcal{G}_K(\mathbf{a}_K)$ under the $L^1$ norm.
As shown in the Supplementary Materials, we can verify all required conditions of Theorem 1 in \cite{canale2017posterior} for our regression mixture model with a repulsive $g$-prior. 
%In particular, Lemma~\ref{lemma:KLsupport} ensures that $f_0$ is in the KL-support of $\Pi$, while Lemmas~\ref{lemma:CoveringNumber} and~\ref{lemma:SumUpperBound} guarantee that the summability condition (\ref{eq:summability_condition}) is satisfied. 
Then, we obtain the following strong consistency result:

\begin{thm}
Under Assumptions A1-A10, the posterior $\Pi(\cdot|y_1,\cdots,y_n)$ is strongly consistent at $f_0$.
\end{thm}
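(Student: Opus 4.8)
The plan is to invoke Theorem~1 of \cite{canale2017posterior} in the form quoted above: once (i) $f_0$ lies in the KL support of $\Pi$ and (ii) the summability bound (\ref{eq:summability_condition}) holds along the sieve $\mathcal{F}_{K_n}$ with its partition $\{\mathcal{G}_K(\mathbf a_K)\}$, strong consistency follows immediately, so the whole proof reduces to establishing these two facts for the regression mixture with the repulsive $g$-prior. The only genuinely new ingredient relative to \cite{Xie02012020} is the appearance of the design-dependent Mahalanobis metric $g(\bX^\top\bX)^{-1}$ inside $h_K$ and inside the base prior; the argument is carried out conditionally on $\bX$ (equivalently, uniformly over designs with $\bX^\top\bX$ nonsingular and $\|\bx\|\le M_X$), with the $g$-prior scaled so that $g(\bX^\top\bX)^{-1}$ stays bounded — consistent with A7 — so that all constants below are design-independent.

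For (i) I would proceed in two moves. First, approximate $f_0(\cdot\,|\cdot)=\int\phi(\cdot\,;\bx^\top\bbeta,\sigma^2)\,dF_0(\bbeta,\sigma^2)$ by a finite mixture $f_{K^\ast}$ with $K^\ast$ \emph{distinct} atoms: the sub-Gaussian tail of $F_0$ in $\bbeta$ (A1) lets us truncate $F_0$ to a compact set while losing negligible mass, and on that set $(\bbeta,\sigma^2)\mapsto\phi(\cdot\,;\bx^\top\bbeta,\sigma^2)$ is uniformly Lipschitz in $\|\cdot\|_1$ thanks to $\|\bx\|\le M_X$ (A5) and $\sigma^2\in[\underline\sigma^2,\overline\sigma^2]$ (A4), so a fine discretization of the truncated $F_0$ gives $\KL(f_0,f_{K^\ast})<\epsilon/2$, and a tiny perturbation keeps the atoms distinct without spoiling this. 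Second, I would show $\Pi$ charges every $\|\cdot\|_1$-neighbourhood of $(K^\ast,\{\omega_k^\ast\},\{\bbeta_k^\ast\},\{\sigma_k^{\ast2}\})$: $p_K(K^\ast)>0$ by A10, the symmetric Dirichlet (A6) charges the relative interior of the simplex, $p_\beta$ and $p_{\sigma^2}$ are positive and continuous there (A8, A9), and — this is where repulsion enters — because the $\bbeta_k^\ast$ are distinct in the $\bX^\top\bX$-metric, $h_{K^\ast}$ is bounded below by a positive constant on a small neighbourhood while $Z_{K^\ast}\ge e^{-c_1 K^\ast}>0$ by Theorem~\ref{thm:NormalizationConstant}; continuity of $\KL$ in the component parameters (again via A4, A5) upgrades this parameter-neighbourhood to a KL-neighbourhood, giving $\Pi(\{f:\KL(f_0,f)<\epsilon\})>0$.

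For (ii) I would bound the covering number and the prior mass of each cell separately. On $\mathcal{G}_K(\mathbf a_K)$ every $\|\bbeta_k\|_\infty\le a_k+1$ and $\sigma_k^2\in[\underline\sigma^2,\overline\sigma^2]$; matching components and using that the weights sum to one, the Lipschitz bound above yields $\log\mathcal{N}(\epsilon,\mathcal{G}_K(\mathbf a_K),\|\cdot\|_1)\lesssim K(p+2)\log(C/\epsilon)+p\sum_{k=1}^K\log(a_k+2)$ with $C$ depending only on $M_X,\underline\sigma^2,\overline\sigma^2$. For the prior mass, $h_K\le 1$ together with $1/Z_K\le e^{c_1 K}$ from Theorem~\ref{thm:NormalizationConstant} gives $\Pi(\mathcal{G}_K(\mathbf a_K))\le e^{c_1 K}p_K(K)\prod_{k=1}^K\min\{1,\,B_2 e^{-b_2 a_k^2}\}$ by A7. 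Taking square roots and summing over $\mathbf a_K\in\mathbb{N}^K$, the Gaussian tails dominate the polynomial factors $(a_k+2)^{p/2}$, so the $\mathbf a_K$-sum is at most $C_\beta^{\,K}$ for a constant $C_\beta$; hence the inner double sum is at most $\exp\!\big(\tfrac12 K(p+2)\log(C/\epsilon)+c_2 K\big)\,p_K(K)^{1/2}$. Choosing $K_n\to\infty$ slowly enough — fast enough for $\Pi(\mathcal{F}_{K_n}^{\,c})=\Pi(K>K_n)\le e^{-B_4 K_n\log K_n}$ to be negligible (A10) yet $K_n\log K_n=o(n\epsilon^2)$ — the series over $K\le K_n$ converges by the $e^{-B_4 K\log K}$ decay of $p_K$ (A10) and the whole expression is killed by the prefactor $e^{-(4-\tilde b)n\epsilon^2}$, which is precisely (\ref{eq:summability_condition}).

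The \textbf{main obstacle} is the two-sided control of the repulsive factor $h_K$ and its normalizer $Z_K$. In the KL-support step one needs $h_{K^\ast}$ bounded \emph{away from zero}, which forces the approximating mixture to have atoms that are genuinely distinct in the data-dependent $\bX^\top\bX$-metric (not merely Euclidean), and one needs the strictly positive lower bound $Z_{K^\ast}\ge e^{-c_1 K^\ast}$ — it is here that the linear-in-$K$ conclusion of Theorem~\ref{thm:NormalizationConstant}, rather than mere finiteness of $Z_K$, is doing the work. In the entropy step one instead needs $1/Z_K\le e^{c_1 K}$ not to outgrow the prior's tail, which again hinges on Theorem~\ref{thm:NormalizationConstant} together with the $K\log K$ decay imposed by A10 (and implicitly on its constant $B_4$ being large enough relative to $p$). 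The remaining steps — truncation, discretization, Lipschitz covering numbers, and the Gaussian-tail summation over $\mathbf a_K$ — are routine adaptations of the arguments in \cite{Xie02012020} and \cite{canale2017posterior}, with $\bX$ entering only through the fixed constant $M_X$ and the metric $g(\bX^\top\bX)^{-1}$.
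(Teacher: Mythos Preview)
Your proposal is correct and takes essentially the same route as the paper: verify KL support by truncating and discretizing $F_0$, then establish the summability condition (\ref{eq:summability_condition}) by combining an $L^1$-covering bound on each $\mathcal{G}_K(\mathbf a_K)$ with the sub-Gaussian prior tail (A7) and the bound $1/Z_K\le e^{c_1K}$ from Theorem~\ref{thm:NormalizationConstant}. The one bookkeeping difference is that you retain the factor $p_K(K)^{1/2}$ in the prior-mass estimate and use its $e^{-(B_4/2)K\log K}$ decay to make the $K$-series converge outright, whereas the paper simply bounds $p_K(K)\le 1$, obtains a sum of size $K_n(M/\epsilon^{(p+2)/2})^{K_n}$, and then chooses $K_n=\lfloor n/\log n\rfloor$ so that this is $e^{O(n/\log n)}$ and is still dominated by the prefactor $e^{-(4-\tilde b)n\epsilon^2}$. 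One minor slip to fix: your parenthetical constraint ``$K_n\log K_n=o(n\epsilon^2)$'' is both unnecessary under your own argument (the series already converges for any $K_n$) and incompatible with the sieve-complement requirement $\Pi(K>K_n)\le e^{-B_4K_n\log K_n}=o(e^{-cn})$, which forces $K_n\log K_n\gtrsim n$; taking $K_n=\lfloor n/\log n\rfloor$ as the paper does resolves this. (Relatedly, no size condition on $B_4$ relative to $p$ is actually needed: $K\log K$ eventually dominates any linear-in-$K$ term.)
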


We next consider the posterior contraction rate for the regression mixture model with the repulsive $g$-prior. 
The result is as follows: 

\begin{thm}
Under Assumptions A1-A10, the posterior distribution $\Pi(\cdot\mid y_1,\cdots,y_n)$ contracts at $f_0$ with rate $\epsilon_n=(\log n)^{t}/\sqrt{n}$, $t>p+\frac{\alpha+2}{4}$.
\end{thm}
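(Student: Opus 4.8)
\medskip
\noindent\emph{Proof strategy.}
The plan is to apply a general posterior contraction theorem of Ghosal--Ghosh--van der Vaart type, in the mixture-adapted form already invoked for consistency (cf.\ \cite{canale2017posterior} and \cite{Xie02012020}). With $\epsilon_n=(\log n)^{t}/\sqrt{n}$, so that $n\epsilon_n^{2}=(\log n)^{2t}\to\infty$, it suffices to produce a sieve $\mathcal{F}_n$ and constants $c_1,c_2>0$ such that: (P) a Kullback--Leibler-type neighbourhood $B_n$ of $f_0$ has prior mass $\Pi(B_n)\ge e^{-c_1 n\epsilon_n^{2}}$; (S) $\Pi(\mathcal{F}_n^{c})\le e^{-(c_1+4)n\epsilon_n^{2}}$; and (E) $\log\mathcal{N}(\epsilon_n,\mathcal{F}_n,\|\cdot\|_1)\le c_2 n\epsilon_n^{2}$. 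I would take $\mathcal{F}_n=\mathcal{F}_{K_n}\cap\{\|\bbeta_k\|\le M_n\ \text{for all }k\}$, reusing the sieve $\mathcal{F}_{K_n}$ and the partition into cells $\mathcal{G}_K(\mathbf{a}_K)$ from the consistency argument, now with a coefficient-truncation radius $M_n\uparrow\infty$ and $K_n\uparrow\infty$ to be calibrated against (P)--(E).

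For (S) I would split $\Pi(\mathcal{F}_n^{c})\le\Pi(K>K_n)+\Pi(\exists\,k\le K_n:\ \|\bbeta_k\|>M_n)$. Assumption~A10 gives $\Pi(K>K_n)\le e^{-B_4 K_n\log K_n}$, while Theorem~\ref{thm:NormalizationConstant} yields $Z_K\ge e^{-c_1 K}$, so the repulsive $g$-prior density is dominated by $e^{c_1 K}\prod_k p_\beta(\bbeta_k)$ and the sub-Gaussian tail in~A7 bounds the second term by a multiple of $K_n e^{c_1 K_n}e^{-b_2 M_n^{2}}$; taking $M_n\asymp(\log n)^{t}$ and $K_n$ a suitable power of $\log n$ makes both pieces at most $e^{-(c_1+4)n\epsilon_n^{2}}$. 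For (E) I would cover $\mathcal{F}_n$ by discretising, for each of the at most $K_n$ components, the coefficient $\bbeta_k$ on a $p$-dimensional $\delta$-grid in the ball of radius $M_n$, the variance $\sigma_k^{2}$ on a $\delta$-grid in $[\underline{\sigma}^{2},\overline{\sigma}^{2}]$, and the weights on a grid on the simplex; since $\|\bx\|\le M_X$ (A5) and $\sigma_k^{2}\ge\underline{\sigma}^{2}$ (A4), the map from component parameters to $f_F$ is Lipschitz in $\|\cdot\|_1$ with an explicit constant, so $\delta\asymp\epsilon_n/K_n$ furnishes an $\epsilon_n$-net and gives $\log\mathcal{N}(\epsilon_n,\mathcal{F}_n,\|\cdot\|_1)\lesssim K_n\bigl(p\log(M_n K_n/\epsilon_n)+\log(1/\epsilon_n)\bigr)\lesssim p\,K_n\log n$; summing over $K\le K_n$ and over the partition cells leaves the order unchanged.

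For (P) I would first approximate $f_0$ in Hellinger distance within $\epsilon_n^{2}$ by a finite mixture $f_{F^{\ast}}$ with $K^{\ast}$ components, where $K^{\ast}$ grows like a power of $\log(1/\epsilon_n)$ with exponent tied to $p$ (exploiting supersmoothness of the Gaussian kernel and the bounded covariate support~A5); if $f_0$ is itself a finite mixture one simply sets $K^{\ast}=K_0$. Conditioning on $K=K^{\ast}$, which by~A10 carries prior probability at least $e^{-b_4 K^{\ast}\log K^{\ast}}$, I would lower-bound the prior mass of a Euclidean neighbourhood of the parameters of $f_{F^{\ast}}$ using positivity and continuity of $p_\beta$ and $p_{\sigma^{2}}$ (A8--A9), the Dirichlet small-ball estimate (A6), which contributes a factor of order $\epsilon_n^{c\alpha K^{\ast}}$ for the weights, and Theorem~\ref{thm:NormalizationConstant} together with~A2, which keeps $h_{K^{\ast}}/Z_{K^{\ast}}$ bounded below on the well-separated neighbourhood; the Lipschitz bounds from (E) then convert this region into the KL neighbourhood $B_n$. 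Balancing the resulting estimate, roughly $\log\Pi(B_n)\gtrsim -K^{\ast}(p+\alpha+2)\log(1/\epsilon_n)$ up to lower-order terms, against $\log\Pi(B_n)\ge -c_1 n\epsilon_n^{2}$ while simultaneously meeting (S) and (E) with the same $K_n,M_n$ is what pins down the admissible range of $t$, and carrying out this accounting yields $t>p+\tfrac{\alpha+2}{4}$.

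I expect step~(P) to be the main obstacle, and within it two intertwined points. The first is controlling the mass the $g$-prior $N(0,g\sigma^{2}(\bX^\top\bX)^{-1})$ assigns to a region where $\bX\bbeta_k$ is uniformly close to $\bX\bbeta_{0k}$: because of the design-dependent covariance this is best done in the whitened coordinates $\veta_k=\sqrt{g}(\bX^\top\bX)^{-1/2}\bbeta_k$ used to define the prior, and it forces $g$ to grow appropriately with $n$ so that a ball of radius $\asymp(\log n)^{t}$ around $\veta_{0k}$ retains enough mass. The second is carrying the repulsive factor $h_K$, which is everywhere at most $1$ but degenerates where two coefficients coincide, through the prior-mass, sieve, and entropy estimates in a way that is uniform in $\bX$; it is the combination of these two controls with the per-component cost --- the $p$ from the $p$-dimensional coefficient grid and the $\alpha$ from the Dirichlet weight concentration --- that produces the stated condition on $t$.
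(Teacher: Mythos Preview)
Your overall plan---verify prior-mass, sieve, and entropy conditions using a truncated-$K$ sieve and a finite-mixture approximation of $f_0$---is sound and mirrors the paper's structure. The genuine gap is in the accounting that produces the exponent $t>p+(\alpha+2)/4$. You attribute the $\alpha$ to the Dirichlet concentration parameter and write the weight contribution as $\epsilon_n^{c\alpha K^{\ast}}$; this is not the source. The Dirichlet small-ball estimate gives only $e^{-cK^{\ast}\log(1/\epsilon)}$, with no $\alpha$ in the exponent (Lemma~A.2 of \cite{ghosal2001entropies}). In the paper's argument the $\alpha$ is the tail exponent in a \emph{lower} bound $p_\beta(\bbeta)\ge B_3 e^{-b_3\|\bbeta\|^{\alpha}}$ on the base prior density (for the Gaussian $g$-prior, $\alpha=2$). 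The approximating atoms satisfy $\|\bbeta_k^{\ast}\|\asymp\sqrt{\log(1/\epsilon)}$, so each factor $p_\beta(\bbeta_k^{\ast})$ contributes $e^{-b_3(\log 1/\epsilon)^{\alpha/2}}$; combined with $K^{\ast}\lesssim(\log 1/\epsilon)^{2p}$ components, the dominant term in $-\log\Pi(B_n)$ is of order $(\log 1/\epsilon)^{2p+\alpha/2}$, not the linear-in-$\log(1/\epsilon)$ form $-K^{\ast}(p+\alpha+2)\log(1/\epsilon_n)$ you wrote. Your appeal to ``positivity and continuity of $p_\beta$'' (A8) precisely misses this decay, and with your bookkeeping you would land on $t>p+\tfrac12$ rather than the stated threshold.

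A second, related point: the paper runs a two-rate argument in the style of \cite{kruijer2010adaptive}, with $\underline\epsilon_n=(\log n)^{t_0}/\sqrt{n}$ governing prior concentration and $\overline\epsilon_n=(\log n)^{t}/\sqrt{n}$ governing the summability condition over the partition cells $\mathcal{G}_K(\mathbf{a}_K)$ (rather than a direct covering of a truncated sieve). The prior-mass estimate forces $t_0>p+\alpha/4$, and balancing the summability condition with $K_n=\lfloor\tfrac{2}{p+2}(\log n)^{2t-1}\rfloor$ requires $t>t_0+\tfrac12$; it is this extra $\tfrac12$ that yields $(\alpha+2)/4=\alpha/4+\tfrac12$. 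Your single-rate balance, as written, does not separate these two roles and will not recover the stated exponent without both the correct prior-tail term and the $t>t_0+\tfrac12$ slack.
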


Thus, the above property guarantees that the repulsive prior does not adversely affect the rate of posterior contraction relative to standard location mixture models, while still encouraging separation between mixture components in the regression setting.

\subsection{Shrinkage effect on the posterior of $K$}

While the previous sections have established the theoretical soundness of the proposed regression mixture model with a repulsive $g$-prior, an important practical advantage of the repulsive prior lies in its ability to control model complexity by preventing the overestimation of the number of components $K$. In many applications, standard mixture models with independent priors tend to allocate redundant or overlapping clusters, which can lead to unnecessarily large values of $K$ and complicate interpretation.

To formalize this advantage, we analyze the shrinkage effect of the repulsive prior on the posterior distribution of $K$. Our analysis extends Theorem 4 of \cite{Xie02012020}, which originally demonstrated this shrinkage phenomenon for location mixture models, to the context of regression mixtures with a $g$-prior structure. The result below quantifies how the repulsive prior penalizes excessive clustering through the normalization constant, leading to a tighter posterior distribution on $K$.

\begin{thm}
\label{thm:ShrinkageRate}
Assume a mixture of regressions model where the error variance is fixed to $\sigma_0^2$. Let the base prior for the regression coefficients $\boldsymbol{\beta}_k$ be the $g$-prior, $p(\boldsymbol{\beta}) = N(\boldsymbol{\beta} | \boldsymbol{0}, g\sigma_0^2(\bX^\top \bX)^{-1})$, and the repulsive function be $h_K(\boldsymbol{\beta}_1, \dots, \boldsymbol{\beta}_K) = \min_{k<k'} G(d_M(\boldsymbol{\beta}_k, \boldsymbol{\beta}_{k'}))$, where $d_M$ is the squared Mahalanobis-like distance and $G(d) = d/(g_0+d)$ for some $g_0 \ge 0$. 
Assume the prior on the number of components is $p(K) \propto Z_K \lambda^K/K!$. 
Let the true data generating process be $f_0(\boldsymbol{y}|\boldsymbol{X}) = \int \prod_{i=1}^n \phi(y_i|x_i^\top\boldsymbol{m}_i, \sigma_0^2) dF_0(\boldsymbol{m}_1, \dots, \boldsymbol{m}_n)$. Then, for sufficiently large $N$, the expected posterior tail probability of $K$ satisfies the following inequality: 
$$
\mathbb{E}_{f_0}[\Pi(K \ge N | \boldsymbol{y}, \boldsymbol{X})] \le C(\lambda, \boldsymbol{X}) \cdot \chi(g_0, \boldsymbol{X}, n, N) \sum_{K=N+1}^{\infty} \frac{\lambda^K}{(e^\lambda - 1)K!}
$$
where $C(\lambda, \boldsymbol{X})$ is a constant. The shrinkage constant $\chi(g_0, \boldsymbol{X}, n, N)$ is given by
$$
\chi(g_0, \boldsymbol{X}, n, N) = (1+\delta(g,X)g_{0}^{2/3})^{3/2} \cdot G\left(\sqrt{\frac{2n}{N}\mathbb{E}_{F_0}[\boldsymbol{m}^\top (g^{-1}(\boldsymbol{X}^\top\boldsymbol{X})) \boldsymbol{m}] + C_1}\right)
$$
where $\delta(g,X)$ is a constant depending on $(g,X)$ such that $\delta(g,X)<1$ for sufficiently large $g$, and satisfies $\chi(0, \cdot) = 1$ and $\chi(g_0, \cdot) < 1$ for $g_0>0$.
\end{thm}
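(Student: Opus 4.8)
The strategy is to exploit the cancellation engineered by the prior $p(K)\propto Z_K\lambda^K/K!$. Let
\[
I_K(\bY)=\int f_K(\bY\mid\bX,\bbeta,w)\Big[\textstyle\prod_{k=1}^K p_\beta(\bbeta_k)\Big]h_K(\bbeta_1,\dots,\bbeta_K)\,p(w\mid K)\,dw\,d\bbeta
\]
be the repulsively weighted marginal likelihood; then $Z_K$ drops out of the posterior and $\Pi(K\mid\bY,\bX)\propto(\lambda^K/K!)\,I_K(\bY)$. Keeping only the reference term at the true component count $K_0$ (the essential support size of $F_0$) in the denominator, for every $K$,
\[
\Pi(K\mid\bY,\bX)\le\frac{\lambda^K/K!}{\lambda^{K_0}/K_0!}\,\frac{I_K(\bY)}{I_{K_0}(\bY)} .
\]
It then suffices to bound $I_K(\bY)$ from above for $K\ge N$, $I_{K_0}(\bY)$ from below, take $\E_{f_0}$, and sum over $K\ge N$.

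The core step is the upper bound on $I_K(\bY)$. Pass to the whitened coefficients $\veta_k=g^{-1/2}(\bX^\top\bX)^{1/2}\bbeta_k$, so that $d_M(\bbeta_k,\bbeta_{k'})=\|\veta_k-\veta_{k'}\|$ and the base prior makes the $\veta_k$ i.i.d.\ $N(\mathbf 0,\sigma_0^2\bI_p)$. Under the law weighting $I_K$ the empty components have $\|\veta_k\|^2\sim\sigma_0^2\chi^2_p$ and the non-empty ones concentrate near their whitened cluster means, so on an event $E$ of base-probability $1-o(1)$ one has $\sum_{k=1}^K\|\veta_k\|^2\le(1+\mathrm{slack})\big(n\,\E_{F_0}[\bm^\top(g^{-1}\bX^\top\bX)\bm]+C_1'\big)$, $C_1'$ collecting the empty-cluster energy. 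The averaging inequality $\min_{k<k'}\|\veta_k-\veta_{k'}\|^2\le\tfrac{2}{K-1}\sum_k\|\veta_k\|^2$ and $K\ge N$ then give, on $E$, $\min_{k<k'}d_M(\bbeta_k,\bbeta_{k'})\le\sqrt{\tfrac{2n}{N}\E_{F_0}[\bm^\top(g^{-1}\bX^\top\bX)\bm]+C_1}$, hence $h_K=\min_{k<k'}G(d_M)\le G\!\big(\sqrt{\tfrac{2n}{N}\E_{F_0}[\bm^\top(g^{-1}\bX^\top\bX)\bm]+C_1}\big)$; tuning the radius and confidence level of $E$ in terms of $g_0$, and using $G(cx)\le cG(x)$ for $c\ge1$, inflates the bound by $(1+\delta(g,\bX)g_0^{2/3})^{3/2}$, so that $h_K\le\chi$ on $E$, while $h_K\le1$ on $E^c$ with the $E^c$-remainder $r_K(\bY)$ controlled by the sub-Gaussian tail A7. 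Thus $I_K(\bY)\le\chi\,I_K^0(\bY)+r_K(\bY)$ for $K\ge N$, where $I_K^0$ is the marginal under the independent $g$-prior and $\chi$ does not depend on $K$.

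For the denominator, Assumption A2 gives $h_{K_0}\ge c_g\epsilon_0$ on the part of the base support where the $K_0$ components are pairwise $d_M$-separated by at least $\epsilon_0$, and for the genuinely separated true configuration this part carries a fixed fraction of the independent-prior posterior mass, so $I_{K_0}(\bY)\ge c_0\,I_{K_0}^0(\bY)$. Combined with the evidence lower bound $I_{K_0}^0(\bY)\ge\rho\,f_0(\bY)$, valid on a set of $f_0$-probability $1-o(1)$ because $f_0$ lies in the Kullback--Leibler support of the prior (as used for Theorems~2--3 with $\sigma^2\equiv\sigma_0^2$), this yields on that set $\Pi(K\mid\bY,\bX)\le\tfrac{\lambda^K/K!}{\lambda^{K_0}/K_0!}(c_0\rho)^{-1}\big(\chi\,I_K^0(\bY)/f_0(\bY)+r_K(\bY)/f_0(\bY)\big)$. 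Taking $\E_{f_0}$, using the identity $\E_{f_0}[I_K^0(\bY)/f_0(\bY)]=\int I_K^0(\bY)\,d\bY=1$, bounding the remainder and the complementary-event contribution (trivially at most $\Pi(K\ge N\mid\bY,\bX)\le1$) by $o(1)$, summing over $K\ge N$, and absorbing $\sum_{K\ge N}\lambda^K/K!\le e^\lambda-1$ together with $(\lambda^{K_0}/K_0!)^{-1}(c_0\rho)^{-1}$ into a single constant $C(\lambda,\bX)$, gives the claimed inequality. The qualitative properties follow at once: $\chi=1$ when $g_0=0$ since then $G\equiv1$ and the prefactor equals $1$; and $\chi<1$ for $g_0>0$ because $G$ is strictly below $1$ on every finite argument while $\delta(g,\bX)<1$ for $g$ large, so the factor $(1+\delta g_0^{2/3})^{3/2}$ cannot overturn the gain $G(\sqrt{\,\cdot\,})<1$.

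The step I expect to be the main obstacle is making the ``crowding'' estimate rigorous and matching the stated form: one must show $\sum_k\|\veta_k\|^2$ concentrates with probability close enough to $1$ that the $E^c$-remainder is dominated by $G(\sqrt{\,\cdot\,})$, and it is this trade-off, optimized over the confidence level and the radius of $E$, that produces the precise exponent $g_0^{2/3}$ and power $3/2$; one must also keep the additive slack clean so that it collapses to the single constant $C_1$, and verify that $C_1$ and $C(\lambda,\bX)$ may depend on $n$, $\bX$ and $\lambda$ but not on $N$, so that the inequality genuinely quantifies shrinkage as $N\to\infty$. A secondary hurdle is establishing the evidence bound $I_{K_0}^0(\bY)\ge\rho\,f_0(\bY)$ under the $g$-prior with fixed $\sigma_0^2$ while tracking the constant $\rho$, which reuses but must sharpen the prior-mass arguments behind the consistency and contraction results.
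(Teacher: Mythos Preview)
Your route diverges from the paper's in two places that are not cosmetic, and in both cases the argument as written would not produce the stated bound.

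First, the factor $(1+\delta(g,\bX)g_0^{2/3})^{3/2}$ does not arise from an upper bound on the numerator via an event-splitting trade-off. In the paper it comes entirely from the \emph{lower} bound on the full marginal $p(\bY\mid\bX)$: one applies Jensen to $\log p(\bY\mid\boldsymbol z,K,\bX)$, isolates $\E_{\text{prior}}[\log h_K]$, and bounds $-\log h_K=\log\max_{k<k'}(1+g_0/d_M)$ by $\tfrac32\log\bigl(1+\sum_{k<k'}(g_0/d_M)^{2/3}\bigr)$; the expectation $\E_{\text{prior}}[d_M^{-2/3}]$ is finite because $d_M$ is a scaled $\chi_p^2$ under the $g$-prior, and this is exactly what fixes the exponent $2/3$ and the power $3/2$. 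Your plan to recover these numbers by ``tuning the radius and confidence level of $E$'' is a different mechanism, and you rightly identify it as the main obstacle---it is unlikely to reproduce this specific algebraic form. Relatedly, the paper never singles out a reference $K_0$: the denominator is bounded uniformly over $K$ and $\boldsymbol z$, so no ``true component count'' is needed (and none is assumed to exist, since $F_0$ may have infinite support).

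Second, the crowding estimate has the wrong scaling. The paper introduces the latent partition $\boldsymbol z$, writes $p(\bY\mid\boldsymbol z,K,\bX)=Z_K^{-1}\bigl(\prod_k p(\bY_k\mid\bX_k)\bigr)\E_{\text{post}}[h_K]$, and controls $\E_{\text{post}}[h_K]$ by $\min\le$ average plus Jensen for the concave $G$; after integrating against $f_0$ and averaging over $\boldsymbol z$, the argument of $G$ becomes the expected Mahalanobis distance between \emph{cluster means} $\tilde m_k,\tilde m_{k'}$, which is what produces the $\tfrac{2n}{K}\,\E_{F_0}[\bm^\top(g^{-1}\bX^\top\bX)\bm]$ term. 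Your inequality $\min_{k<k'}\|\veta_k-\veta_{k'}\|^2\le\tfrac{2}{K-1}\sum_k\|\veta_k\|^2$ acts on the component parameters, but the claimed bound $\sum_{k=1}^K\|\veta_k\|^2\lesssim n\,\E_{F_0}[\bm^\top(g^{-1}\bX^\top\bX)\bm]$ conflates a sum over $K$ components with a sum over $n$ observations: under the base law the empty-cluster contribution alone is $\sim Kp\sigma_0^2$, and the occupied clusters contribute $\|\veta_{(\tilde m_k)}\|^2$, neither of which is $\asymp n\mu(\bX)$. So the route to the argument of $G$ in the stated $\chi$ does not go through.
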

Theorem \ref{thm:ShrinkageRate} shows that the presence of the repulsive prior ($g_0 > 0$) leads to a shrinkage factor $\chi(g_0, \cdot) < 1$, resulting in a posterior that is more concentrated around smaller values of $K$. In contrast, when $g_0 = 0$ (i.e., no repulsion), $\chi(0, \cdot) = 1$ and no shrinkage occurs, often leading to persistent overestimation of the number of clusters in practice. This result highlights a key benefit of the proposed approach: by explicitly penalizing overly similar regression coefficients through the geometry of the $g$-prior, the model encourages parsimony and interpretability in the inferred clustering structure.

%----------------------------------------%
%           Simulation Study             %
%----------------------------------------%
\section{Simulation Study}

\subsection{Illustration of repulsive $g$-prior }

Before turning to Monte Carlo simulation studies, we first present a qualitative illustration of the proposed prior. 
To this end, we set $K_{\mathrm{true}} = 3$ (true number of clusters) and $n=3000$ (sample size) with equal sample size across clusters. 
The $i$th observation assigned in cluster $k$ is generated as  
$$
y_i=\beta_{k,0} + \beta_{k,1}x_i+\epsilon_i,
$$
where $x_i\sim \text{Uniform}(0,10)$, $\epsilon_i\sim\mathcal N(0,1)$, and coefficient across clusters, $\beta_k=(\beta_{k,0}, \beta_{k,1})$, are set to $\beta_1=(-5.0,2.5), \beta_2=(0.0,1.0)$ and $\beta_3=(-1.0,1.5)$.
For the generated dataset, we fit three models, standard mixture-of-finite-mixtures (MFM), MFM with standard repulsive prior (RRM), and 
the proposed repulsive $g$-prior (RgRM), described as follows:
\begin{itemize}
\item[-] (RgRM: Repulsive $g$-prior regression mixture) \ 
Consider a finite mixture with an over–specified number of components $K_{\text{fit}}$. Let $\pi$ lie on the simplex and $(\beta_k,\sigma_k^2)$ index component $k$. We place Zellner’s $g$-prior on the coefficients,
$$
\beta_k \mid \sigma_k^2 \sim \mathcal N\!\left(0,\; g\,\sigma_k^2 \,(X^\top X)^{-1}\right),
$$
with $g=n$. A repulsive potential acts on pairs of components through a Mahalanobis metric in coefficient space. The hyperparameter $g_0=1$ sets the strength of repulsion. This model targets well-separated regression clusters while preserving scale adaptivity through the $g$-prior. We use number of occupied components as a number of clusters.

\item[-] (RRM: Repulsive regression mixture) \ 
We keep the same finite mixture structure and repulsive mechanism. We replace the $g$-prior with a normal prior, $\beta_k \sim \mathcal N(0,\tau^2 I_p)$.
Repulsion acts through a Euclidean metric on $\beta$. The hyperparameter $g_0=1$ again controls the strength.

\item[-] (MFM: Mixture of finite mixtures with non-repulsive prior) \ 
This model removes repulsion. It keeps the normal prior for $\beta_k$ with variance $\tau^2$. All other ingredients match the repulsive standard model. This baseline shows how much the repulsive term contributes on its own.
\end{itemize}
We fix $K_{\text{fit}}=20$ (the maximum number of clusters) for all models.
For standard normal priors, we set $\tau^2=1$. 
For each mode, we generate $1000$ posterior samples after discard the first $1000$ as burn-in.

Figure~\ref{fig:toy_results} shows cluster assignments and the size of clusters obtained by each method. 
It is observed that the standard MFM over-estimates the number of components ($\hat{K}=9$) by dividing the true three clusters into multiple clusters and it produces clusters with small observations. 
On the other hand, the standard repulsive prior under-estimates the number of components ($\hat{K}=2$) by merging the two different clusters into a single cluster. 
On the other hand, the proposed method recovers the correct number of clusters ($\hat{K}=3$), and provides reasonable clustering structures of three regression functions. 
This qualitative illustration highlights the motivation and advantages of the proposed method.

\begin{figure}[htbp]
\centering 
\includegraphics[width=\linewidth]{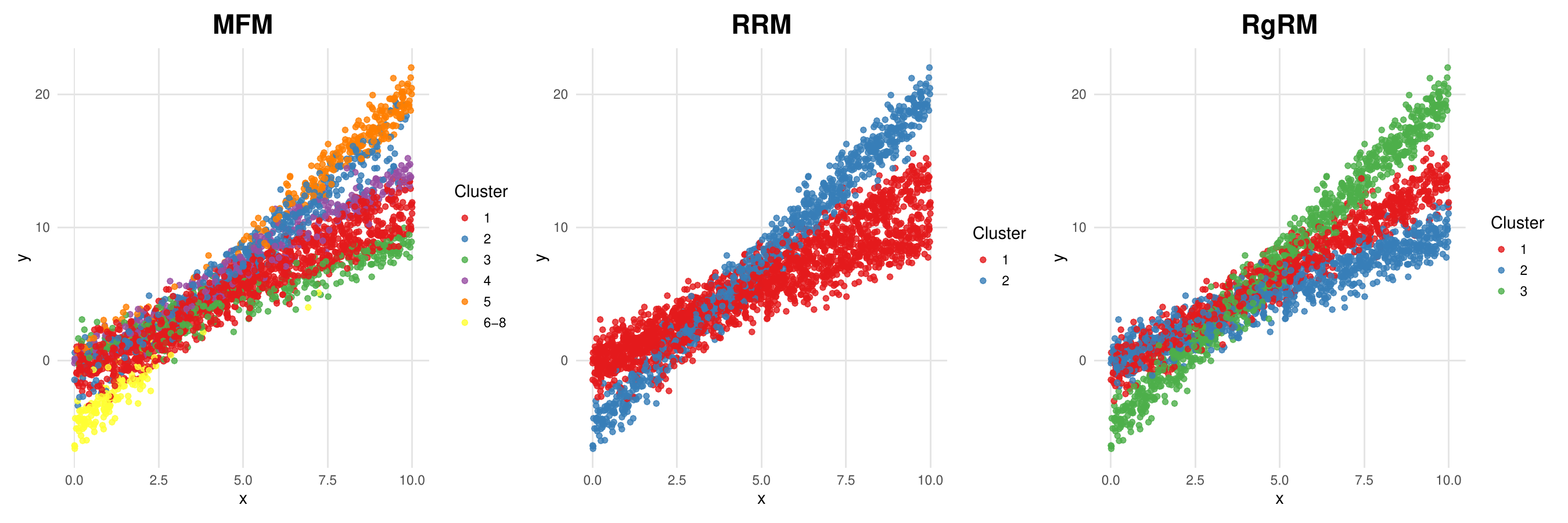}
\includegraphics[width=\linewidth]{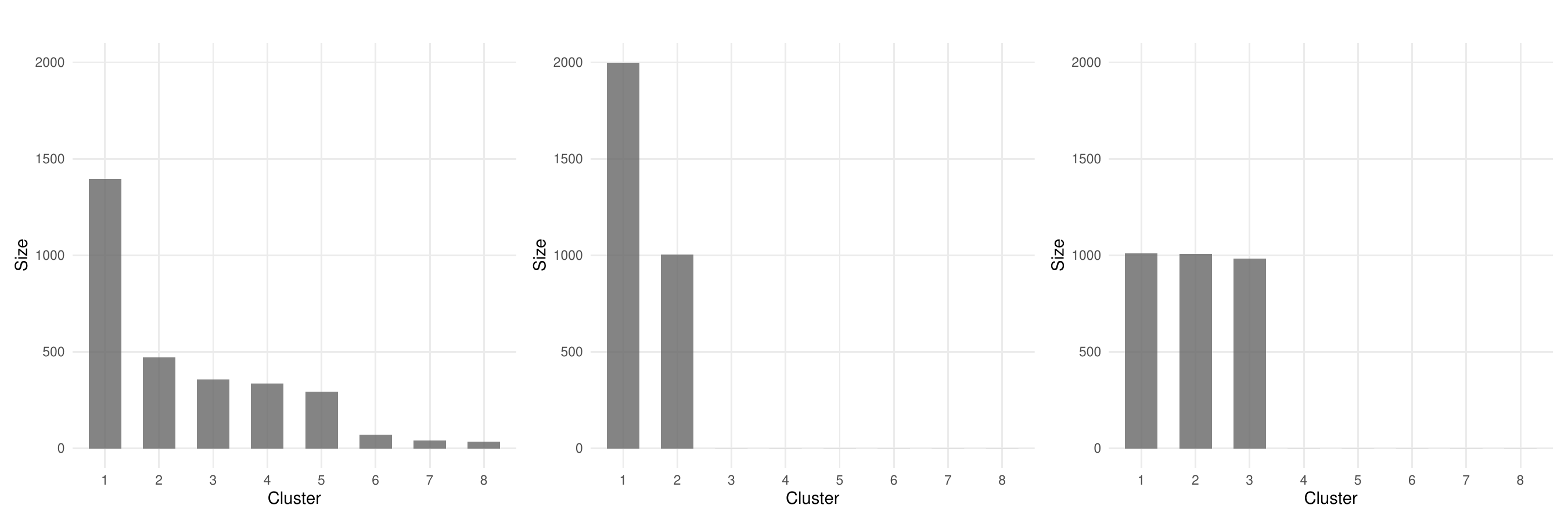}
\caption{Scatter plots with clustering results (upper) and the number of clusters and cluster sizes (lower) obtained by the three methods, under one-shot simulation data.   }
\label{fig:toy_results}
\end{figure}

\subsection{Monte Carlo simulations: data generation and methods}
To evaluate the performance of the proposed model under different data characteristics, we consider three distinct simulation scenarios. These scenarios investigate how robustly each model performs across varying conditions of feature scaling imbalance, feature correlation, and cluster distinctiveness.

We study a mixture of linear regressions with four clusters and four covariates. 
Each dataset contains $n=4\times n_{\text{per}}$ observations with equal allocation across clusters. For observation $i$ in cluster $k$,
$$
y_i=\beta_{k,1}x_{i,1}+\cdots+\beta_{k,4}x_{i,4}+\epsilon_i,\qquad \epsilon_i\sim\mathcal N(0,1).
$$
We draw covariates independently from a multivariate Normal law with mean vector of ones and covariance $\Sigma_X$. All clusters share the same $\Sigma_X$. Coefficients are fixed at the cluster centers $\{\beta_k\}_{k=1}^4$. We consider three scenarios that change $\Sigma_X$ and the coefficient matrix. We set $n_{\text{per}}\in\{25,50,100\}$, which gives total sample sizes $n\in\{100,200,400\}$.
We considered the following three scenarios: 
\begin{itemize}
\item[-] {\bf (Scenario 1: balanced scale and no correlation)} \ 
This baseline uses independent covariates with common scale. Let $\Sigma_X=\mathrm{diag}(10^2,\;10^2,\;10^2,\;10^2)$. Set $\beta_{k,j}=I\{j=k\}$ for $k=1,\ldots,4$.
Each cluster activates a distinct covariate, which yields clear separation.

\item[-] {\bf (Scenario 2: scale imbalance with correlation)} \ 
Here the fourth covariate is much larger in scale and the features are moderately correlated. Let $\Sigma_{i,j}= \sigma_i\sigma_j\,\rho^{|i-j|}$ with $(\sigma_1,\sigma_2,\sigma_3,\sigma_4)=(10,10,10,100)$ and $\rho=0.5$. Keep the diagonal pattern for $\beta$ but downweight the high–variance direction in the fourth cluster: $\beta_{k,j}=1$ for $j=k,\ k\neq 4$, $\beta_{k,j}=0.01$ for $j=k=4$, and $\beta_{k,j}=0$ otherwise.   
This choice reduces the signal to noise along the largest scale.

\item[-] {\bf (Scenario 3: overlapping coefficients)} \
We use the same $\Sigma_X$ as in Scenario 2. 
Coefficients now overlap across clusters as $\beta_1=(1,1,0,0)$, $\beta_2=(0,1,1,0)$, $\beta_3=(0,0,1,0.01)$, $\beta_4=(1,0,0,0.01)$.

Two clusters share adjacent effects, and the fourth covariate contributes only weakly. This setting creates close clusters and frequent ambiguities.
\end{itemize}
Across scenarios we fix the error variance at one and omit within–cluster coefficient noise. Thus each cluster is defined by a fixed $\beta_k$ and the common covariate law. The design isolates the roles of scale, correlation, and coefficient overlap on mixture recovery.

We compare the three Bayesian mixture regressions, as described in the previous section, and also consider the following finite mixture regression: 
\begin{itemize}
\item[-] (SID: Finite mixture with sparsity-inducing Dirichlet priors) \ 
SID keeps $K_{\text{fit}}$ components but pushes many weights toward zero through a small-mass Dirichlet prior \citep{rousseau2011asymptotic}.
Let $\pi \sim \mathrm{Dirichlet}(\alpha K_{\rm fit}^{-1},\ldots,\alpha K_{\rm fit}^{-1})$, $\beta_k \sim \mathcal{N}(0,\tau^2 I_p)$ and $\sigma_k^2 \sim \mathrm{InvGamma}(a_0,b_0)$.
Empty or nearly empty components receive near-zero weight, so the posterior favors a small effective number of clusters, $K_{\text{eff}}=\#\{k:\pi_k>10^{-3}\}$.  We study two levels of concentration, ``high" and ``low" by setting $\alpha$ at two fixed values, $\alpha=0.1$ (SID1) and $\alpha=0.02$ (SID2).
\end{itemize}

For SID we set $(a_0,b_0)=(4,4)$ and study two values of $\alpha$. 
For the posterior computation of SID, we used the blocked sampler in \cite{rousseau2011asymptotic}.
As in the previous section, we generated 1000 posterior samples after discarding the first 1000 samples as burn-in, for each simulated data.

Based on posterior samples, we compute cluster labels by $\arg\max_k P(z_i=k\mid y_i,x_i)$. 
For $\hat{K}$, we use $K_{\text{eff}}$ for SID and the number of occupied components for the other models. 
For prediction, we post-hoc refit an ordinary least squares estimator within each estimated cluster using the observations assigned to that cluster and set $\hat y_i=\bx_i^\top \tilde{\bbeta}_{\hat z_i}$, which removes prior-specific shrinkage effects and makes predictions comparable across priors.
Before computing metrics, we address label switching by enforcing a consistent labeling convention across posterior samples.
We then summarize performance with adjusted rand index (ARI), Purity, estimated number of cluster $\hat{K}$ and root mean squared errors (RMSE). 
ARI and Purity are defined in  \cite{manning2008introduction}. 
Finally, RMSE is defined as $\mathrm{RMSE}= \sqrt{n^{-1}\sum_{i=1}^n \bigl(y_i-\hat y_i\bigr)^2 }$, where $\hat y_i=\bx_i^\top \hat{\bbeta}_i$ with $\hat{\bbeta}_i=\bbeta_{z_i}$  in MFM based models and  $\hat{\bbeta}_i=\bar{\bbeta}_{z_i}$ in SID models where $\bar{\beta_k}$ represents the posterior mean coefficient vector for component $k$.
For each scenario, we generated $200$ independent datasets and compute each metric on every dataset and report the averaged values across replications, with the Monte Carlo standard errors.

\subsection{Monte Carlo simulation: results}

Table \ref{tab:results} reports the results under $n=100$ and $n=200$, where the results with $n=400$ are provided in the Supplementary Material. 
Across all conditions, the proposed RgRM offers the most stable balance between accuracy and prediction error and keeps $\hat K$ close to the true value $(K=4)$. 
RRM is competitive when separation is generous but its error grows as designs become more demanding. 
MFM overestimates the component count by a wide margin and performs poorly on accuracy. SID1 and SID2 tend to underestimate components in the harder settings, which is reflected in weaker clustering and purity.

In the independent and balanced design, RgRM and RRM match the accuracy of SID1 and SID2, yet only RgRM keeps $K$ calibrated and holds error down. MFM inflates $K$, which indicates that control of $K$ requires a repulsive mechanism.
With correlated covariates and with coefficient overlap, the differences sharpen. RgRM keeps $K$ stable and maintains low error. RRM underestimates $K$ because strong repulsion merges clusters that are far in prediction but close in coefficient space. Both RgRM and RRM outperform SID1 and SID2, which lack repulsion between clusters.
Overall, enforcing separation in the $X^\top X$ geometry yields calibrated component counts and a favorable accuracy-error trade off in balanced designs and under scale imbalance, correlation, and coefficient overlap. The gains are most pronounced when separation is limited, where RgRM keeps $\hat K$ near four and avoids the large prediction errors observed in competing approaches.

\begin{table}[htb!]
\centering
 \caption{Average values of adjusted rand tndex (ARI), purity, estimated number of cluster ($\hat{K}$) and root mean squared errors (RMSE), based on 200 Monte Carlo replications under $n=100$ and $n=200$. 
 The Monte Carlo standard errors are given in the parenthesis. }
\label{tab:results}
\begin{tabular}{ccccccccccccccccccc}
\hline
Scenario  & $n$ & Method &  & ARI & RMSE & $\hat{K}$ & Purity \\
\hline
 &  & RgRM &  & 0.55 {\scriptsize (0.13)} & 1.14 {\scriptsize (0.15)} & 4.10 {\scriptsize (0.41)} & 0.80 {\scriptsize (0.08)} \\
 &  & RRM  &  & 0.57 {\scriptsize (0.12)} & 1.19 {\scriptsize (1.00)} & 3.98 {\scriptsize (0.44)} & 0.81 {\scriptsize (0.09)} \\
1 & 100 & MFM   &  & 0.01 {\scriptsize (0.02)} & 2.76 {\scriptsize (1.20)} & 10.3 {\scriptsize (2.13)} & 0.43 {\scriptsize (0.04)} \\
 &  & SID1 &  & 0.62 {\scriptsize (0.09)} & 0.91 {\scriptsize (0.45)} & 4.33 {\scriptsize (0.52)} & 0.84 {\scriptsize (0.05)} \\
 &  & SID2 &  & 0.58 {\scriptsize (0.12)} & 1.33 {\scriptsize (1.21)} & 3.85 {\scriptsize (0.43)} & 0.81 {\scriptsize (0.09)} \\
 \hline
 &  & RgRM &  & 0.42 {\scriptsize (0.13)} & 1.19 {\scriptsize (0.23)} & 4.10 {\scriptsize (0.33)} & 0.72 {\scriptsize (0.09)} \\
 &  & RRM  &  & 0.38 {\scriptsize (0.16)} & 2.00 {\scriptsize (1.42)} & 3.54 {\scriptsize (0.69)} & 0.68 {\scriptsize (0.14)} \\
2 & 100 & MFM   &  & 0.05 {\scriptsize (0.04)} & 5.09 {\scriptsize (1.16)} & 6.55 {\scriptsize (1.91)} & 0.42 {\scriptsize (0.06)} \\
 &  & SID1 &  & 0.18 {\scriptsize (0.18)} & 4.04 {\scriptsize (1.89)} & 2.39 {\scriptsize (1.08)} & 0.45 {\scriptsize (0.19)} \\
 &  & SID2 &  & 0.06 {\scriptsize (0.10)} & 5.41 {\scriptsize (1.29)} & 1.48 {\scriptsize (0.69)} & 0.33 {\scriptsize (0.11)} \\
 \hline
 &  & RgRM &  & 0.30 {\scriptsize (0.15)} & 1.66 {\scriptsize (0.37)} & 4.23 {\scriptsize (0.63)} & 0.62 {\scriptsize (0.11)} \\
 &  & RRM  &  & 0.32 {\scriptsize (0.18)} & 3.31 {\scriptsize (2.01)} & 3.03 {\scriptsize (0.83)} & 0.60 {\scriptsize (0.16)} \\
3 & 100 & MFM   &  & 0.01 {\scriptsize (0.02)} & 7.31 {\scriptsize (2.09)} & 8.81 {\scriptsize (1.99)} & 0.40 {\scriptsize (0.04)} \\
 &  & SID1 &  & 0.26 {\scriptsize (0.20)} & 4.27 {\scriptsize (2.41)} & 2.69 {\scriptsize (1.13)} & 0.52 {\scriptsize (0.19)} \\
 &  & SID2 &  & 0.08 {\scriptsize (0.12)} & 6.55 {\scriptsize (1.76)} & 1.49 {\scriptsize (0.69)} & 0.34 {\scriptsize (0.13)} \\
 \hline
 &  & RgRM &  & 0.61 {\scriptsize (0.06)} & 1.04 {\scriptsize (0.06)} & 4.00 {\scriptsize (0.00)} & 0.83 {\scriptsize (0.03)} \\
 &  & RRM  &  & 0.61 {\scriptsize (0.06)} & 1.00 {\scriptsize (0.07)} & 4.03 {\scriptsize (0.16)} & 0.84 {\scriptsize (0.03)} \\
1 & 200 & MFM   &  & 0.01 {\scriptsize (0.01)} & 2.34 {\scriptsize (0.75)} & 14.1 {\scriptsize (1.83)} & 0.41 {\scriptsize (0.03)} \\
 &  & SID1 &  & 0.65 {\scriptsize (0.06)} & 0.88 {\scriptsize (0.05)} & 4.12 {\scriptsize (0.33)} & 0.86 {\scriptsize (0.03)} \\
 &  & SID2 &  & 0.65 {\scriptsize (0.07)} & 1.02 {\scriptsize (0.63)} & 3.97 {\scriptsize (0.24)} & 0.85 {\scriptsize (0.05)} \\
 \hline
 &  & RgRM &  & 0.49 {\scriptsize (0.07)} & 1.07 {\scriptsize (0.56)} & 3.99 {\scriptsize (0.22)} & 0.78 {\scriptsize (0.05)} \\
 &  & RRM  &  & 0.46 {\scriptsize (0.12)} & 1.52 {\scriptsize (1.25)} & 3.80 {\scriptsize (0.49)} & 0.74 {\scriptsize (0.11)} \\
2 & 200 & MFM   &  & 0.05 {\scriptsize (0.04)} & 5.06 {\scriptsize (1.16)} & 8.62 {\scriptsize (2.60)} & 0.41 {\scriptsize (0.06)} \\
 &  & SID1 &  & 0.35 {\scriptsize (0.18)} & 3.10 {\scriptsize (1.79)} & 2.93 {\scriptsize (1.01)} & 0.60 {\scriptsize (0.18)} \\
 &  & SID2 &  & 0.12 {\scriptsize (0.12)} & 5.21 {\scriptsize (1.09)} & 1.67 {\scriptsize (0.64)} & 0.37 {\scriptsize (0.12)} \\
 \hline
 &  & RgRM &  & 0.51 {\scriptsize (0.10)} & 1.14 {\scriptsize (0.23)} & 4.15 {\scriptsize (0.49)} & 0.78 {\scriptsize (0.07)} \\
 &  & RRM  &  & 0.40 {\scriptsize (0.17)} & 2.81 {\scriptsize (1.77)} & 3.33 {\scriptsize (0.69)} & 0.67 {\scriptsize (0.15)} \\
3 & 200 & MFM   &  & 0.02 {\scriptsize (0.01)} & 7.08 {\scriptsize (1.60)} & 11.2 {\scriptsize (1.60)} & 0.39 {\scriptsize (0.03)} \\
 &  & SID1 &  & 0.36 {\scriptsize (0.19)} & 3.89 {\scriptsize (2.11)} & 2.80 {\scriptsize (0.94)} & 0.59 {\scriptsize (0.17)} \\
 &  & SID2 &  & 0.19 {\scriptsize (0.17)} & 5.62 {\scriptsize (1.96)} & 1.99 {\scriptsize (0.85)} & 0.44 {\scriptsize (0.16)} \\
\hline
\end{tabular}
\end{table}

%------------------------------------------%
%           Concluding Remarks             %
%------------------------------------------%
\section{Concluding Remarks}

This paper introduces a new repulsive prior called ``repulsive $g$-priors" for mixture of regression models, which enforces separation in the predictive geometry induced by covariates and enables efficient posterior computation with geometry-aware Gibbs sampling. 
Our theoretical analysis establishes tractable bounds, posterior contraction, and shrinkage of the posterior tail mass, and simulations demonstrate improved clustering and predictive performance relative to existing priors.
While the current study has focused on standard linear regression mixtures, a natural direction for future research is to extend the repulsive $g$-prior framework to generalized linear models. 
In such cases, the definition of repulsion will require metrics beyond the Mahalanobis distance, tailored to the information-geometric structure of the chosen link and variance functions. 
Developing principled priors under these alternative geometries, together with computational strategies for efficient MCMC implementation, represents an important avenue for future work.

\section*{Acknowledgement}
This work is partially supported by JSPS KAKENHI Grant Numbers 24K21420 and 25H00546.

\vspace{1cm}
%   Reference
\bibliographystyle{chicago}
\bibliography{ref}

%------------------------------------------%
%        Supplementary Material            %
%------------------------------------------%
\clearpage
%   section number in supplement 
\setcounter{equation}{0}
\setcounter{section}{0}
\setcounter{table}{0}
\setcounter{page}{1}
\setcounter{lem}{0}
\setcounter{prp}{0}
\renewcommand{\thesection}{S\arabic{section}}
\renewcommand{\theequation}{S\arabic{equation}}
\renewcommand{\thetable}{S\arabic{table}}
\renewcommand{\thelem}{S\arabic{lem}}
\renewcommand{\theprp}{S\arabic{prp}}

\vspace{1cm}
\begin{center}
{\LARGE
{\bf Supplementary Material for ``Repulsive $g$-Priors for Regression Mixtures"}
}
\end{center}

\bigskip
This Supplementary Material provides the proofs of the theorems presented in the main text, and additional simulation results.

\section*{Notations}
We begin by introducing the notation and mathematical objects used throughout the theoretical analysis. This section closely follows the convention in \cite{Xie02012020}, with necessary modifications for the regression mixture setting.

Let $\mathcal{M}(\Theta)$ denote the space of all probability measures on the parameter space $\Theta$. In our case, $\Theta = \mathbb{R}^p \times [\underline{\sigma}^2, \overline{\sigma}^2]$ represents the set of possible regression coefficients and noise variances for each component in the mixture model. This provides the foundational setting for both the true and the estimated mixing distributions.
For a metric space $(\mathcal F, d)$, the $\varepsilon$-covering number $N(\varepsilon, \mathcal F, d)$ is the minimal number of $d$-balls of radius $\varepsilon$ required to cover $\mathcal F$. 
The (metric) entropy $\log N(\varepsilon, \mathcal F, d)$ quantifies the complexity of function classes and plays a crucial role in bounding the covering numbers of model sieves when establishing posterior consistency and contraction rates.
For any $\bx \in \mathbb{R}^p$, let $f_F(y, \bx) = p_X(\bx) \int \phi(y \mid \bx^\top\bbeta, \sigma^2) dF(\bbeta, \sigma^2)$ denote the (joint) observational model induced by a mixing distribution $F$ and covariate distribution $p_X$. The true data-generating process is assumed to be $f_0(y, \bx)$ of the same form, but governed by the true mixing distribution $F_0$.
We denote $\phi(y \mid \bx^\top\bbeta, \sigma^2)$ as the Gaussian kernel with mean $\bx^\top\bbeta$ and variance $\sigma^2$. For any two densities $f, g$ over $(y, \bx)$, the Kullback–Leibler (KL) divergence is defined as $D_{\KL}(f | g) = \int f(\bx) \log \{f(\bx)/g(\bx)\}dx$. 
Throughout, $|\cdot|_1$ refers to the $L^1$ norm over the density functions.
The above notation will be used throughout to define model sieves, partitions, and KL-type neighborhoods, which are essential for verifying the conditions required for posterior consistency and contraction results.

%------------------------------------------%
%          Proof of Theorem 1              %
%------------------------------------------%
\section{Proof of Theorem 1}

Let $A=g(X^{\top}X)^{-1}$ and define the transformed coefficients $\eta_k=A^{1/2}\beta_k$.
Because $A$ is symmetric positive-definite, we have
$$
(\beta_1-\beta_2)^{\!\top}A(\beta_1-\beta_2)=\|\,\eta_1-\eta_2\|^{2}.
$$
The Jacobian of the change of variables is $|\det A^{1/2}|$, so, setting
$$
p_\eta(\eta)=p_\beta(A^{1/2}\eta)\,|\det A^{1/2}|,
$$
we obtain
$$
\begin{aligned}
&\iint\Bigl[\log G\!\bigl((\beta_1-\beta_2)^{\!\top}A(\beta_1-\beta_2)\bigr)\Bigr]^{2}p_\beta(\beta_1)p_\beta(\beta_2)\,d\beta_1d\beta_2 \\
&\quad=\iint\Bigl[\log G\bigl(\|\eta_1-\eta_2\|^{2}\bigr)\Bigr]^{2}
p_\eta(\eta_1)p_\eta(\eta_2)\,d\eta_1d\eta_2<\infty.
\end{aligned}
$$
Thus the integrability condition required in  \cite{Xie02012020} (Assumption A2) holds for the independent and identically distributed random variables $\eta_1,\dots,\eta_K$.
Applying Theorem A.1 of \cite{Xie02012020} to $\{\eta_k\}$ yields $0\le -\log Z_{K}\le c_1K$ for some finite $c_1$.

%
%------------------------------------------%
%          Proof of Theorem 2              %
%------------------------------------------%
\section{Proof of Theorem~2}

It is sufficient to verify the conditions of Theorem A.1 in \cite{canale2017posterior}, which mainly consists of three lemmas given below. 
First, we show that the true density $f_0$ is in the KL-support of the prior under mild regularity conditions:

% Lemma S1
\begin{lem}
\label{lemma:KLsupport}
Let $f_0(y, x) = p_X(x) \int_{\Theta} \phi(y; x^\top\beta, \sigma^2) dF_0(\beta, \sigma^2)$ be the true density, where $\Theta = \R^p \times [\underline{\sigma}^2, \overline{\sigma}^2]$. 
Define the truncated parameter space $$T_m = \{(\beta, \sigma^2) \in \Theta : \|\beta\| \le m, \underline{\sigma}^2 + 1/m \le \sigma^2 \le \overline{\sigma}^2 - 1/m \}$$ for $m \ge m_0$ such that $F_0(T_{m_0}) > 0$.
Let $F_m(A) = c_mF_0(A \cap T_m)$ and $c_m^{-1}=F_0(T_m)$.
Thus, $f_{F_m}(y, x) = p_X(x) c_m\int_{T_m} \normalphi{y}{x^\top\beta}{\sigma^2} dF_m(\beta, \sigma^2)$.
Then, under A1 and A4, it holds that 
$$ \lim_{m\to\infty} D_{\KL}(f_0 \| f_{F_m}) = \lim_{m\to\infty} \E_{P_0}\left[ \log \frac{f_0(Y|X)}{f_{F_m}(Y|X)} \right] = 0 $$
\end{lem}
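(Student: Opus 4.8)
The plan is to peel off the common covariate factor and reduce to a conditional statement. Since both $f_0$ and $f_{F_m}$ contain the factor $p_X(x)$, it cancels in the log-ratio, so
\[
D_{\KL}(f_0\|f_{F_m})=\int p_X(x)\,D_{\KL}\!\big(f_0(\cdot\mid x)\,\|\,f_{F_m}(\cdot\mid x)\big)\,dx,
\]
where $f_0(y\mid x)=\int_\Theta\phi(y;x^\top\beta,\sigma^2)\,dF_0$ and $f_{F_m}(y\mid x)=c_m\int_{T_m}\phi(y;x^\top\beta,\sigma^2)\,dF_0$. The key quantitative step is an $m$-free lower bound on the latter: for $m\ge m_0$ one has $c_m=F_0(T_m)^{-1}\ge1$ and $T_m\supseteq T_{m_0}$, and on $T_{m_0}$ both $\|\beta\|\le m_0$ and $\underline\sigma^2\le\sigma^2\le\overline\sigma^2$, whence
\[
f_{F_m}(y\mid x)\ \ge\ \int_{T_{m_0}}\phi(y;x^\top\beta,\sigma^2)\,dF_0\ \ge\ \frac{F_0(T_{m_0})}{\sqrt{2\pi}\,\overline\sigma}\exp\!\Big(-\frac{(|y|+m_0\|x\|)^2}{2\underline\sigma^2}\Big).
\]
Combined with $f_0(y\mid x)\le(2\pi\underline\sigma^2)^{-1/2}$ this yields, for every $m\ge m_0$,
\[
\log\frac{f_0(y\mid x)}{f_{F_m}(y\mid x)}\ \le\ H_x(y):=\log\frac{\overline\sigma}{\underline\sigma F_0(T_{m_0})}+\frac{(|y|+m_0\|x\|)^2}{2\underline\sigma^2},
\]
a nonnegative quadratic that does not depend on $m$.

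Next I would establish convergence by two applications of dominated convergence. Pointwise in $(y,x)$, $c_m\to1$ and $\int_{T_m}\phi\,dF_0\to\int_\Theta\phi\,dF_0=f_0(y\mid x)>0$ (bounded convergence, since $\phi$ is bounded and $\bigcup_m T_m$ has full $F_0$-mass under A4), so $\log\{f_0(y\mid x)/f_{F_m}(y\mid x)\}\to0$. Splitting $D_{\KL}(f_0(\cdot\mid x)\|f_{F_m}(\cdot\mid x))=\int f_0(\cdot\mid x)(\log(f_0/f_{F_m}))^{+}-\int f_0(\cdot\mid x)(\log(f_0/f_{F_m}))^{-}$, the positive part is dominated by $f_0(\cdot\mid x)H_x$, which is integrable because $\int y^2 f_0(y\mid x)\,dy=\int(\sigma^2+(x^\top\beta)^2)\,dF_0\le\overline\sigma^2+\|x\|^2\int\|\beta\|^2\,dF_0<\infty$ by A4 and A1, so it tends to $0$; the negative part is at most $\int f_0(\cdot\mid x)\log^{+}(f_{F_m}/f_0)\le\log c_m\to0$ (using $f_{F_m}/f_0\le c_m$). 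Hence $D_{\KL}(f_0(\cdot\mid x)\|f_{F_m}(\cdot\mid x))\to0$ for each $x$. Finally, since $\mathbb{E}_{p_X}\|X\|^2<\infty$ (implied by A5), the bound $D_{\KL}(f_0(\cdot\mid x)\|f_{F_m}(\cdot\mid x))\le\int f_0(y\mid x)H_x(y)\,dy+\log c_m$ is integrable against $p_X(x)\,dx$ uniformly in $m\ge m_0$, and a second dominated-convergence step gives $D_{\KL}(f_0\|f_{F_m})\to0$.

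The main obstacle is the uniformity in $m$ in the first paragraph: the lower bound on $f_{F_m}$ must not deteriorate as the truncation region $T_m$ expands, and it is precisely the monotonicity $T_m\supseteq T_{m_0}$ together with $c_m\ge1$ that supplies a single, $m$-independent envelope $H_x$; once this is available the passages to the limit are routine. A secondary technical point concerns $\mathrm{supp}(F_0)$ at the variance endpoints---if $F_0$ placed atoms at $\underline\sigma^2$ or $\overline\sigma^2$ then $\bigcup_m T_m$ would omit mass---which is handled by choosing $\underline\sigma^2,\overline\sigma^2$ with a slight slack relative to the true support of $F_0$, as A4 permits.
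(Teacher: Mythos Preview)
Your proof is correct and follows the same route as the paper: both produce an $m$-independent lower bound for $f_{F_m}(y\mid x)$ by restricting the mixing integral to a fixed truncation set ($T_{m_0}$ for you, $T_1$ in the paper) and then invoke dominated convergence. Your handling of the negative part via the direct bound $f_{F_m}/f_0\le c_m$ and your explicit two-stage DCT are minor streamlinings of the paper's two-sided envelope argument, and you correctly flag the boundary issue at $\sigma^2\in\{\underline\sigma^2,\overline\sigma^2\}$ that the paper glosses over.
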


% Proof 
\begin{proof}
Without loss of generality, we assume that $T_1$ is non-empty. Clearly, $T_m \uparrow \Theta$ and $c_m \downarrow 1
$ as $m\to\infty$ by the monotone continuity of the probability measure $F_0$. 
Furthermore,  $\normalphi{y}{x^\top\beta}{\sigma^2}\le (2\pi\underline{\sigma}^2)^{-1/2}$ .
Hence, for fixed $x$, $$f_{F_m}(y\mid x) = c_m\int_{T_m} \normalphi{y}{x^\top\beta}{\sigma^2} dF_m(\beta, \sigma^2) \to \int_{\Theta} \normalphi{y}{x^\top\beta}{\sigma^2} dF_0(\beta, \sigma^2)= f_0(y\mid x)$$ by the bounded convergence theorem, implying that $\log \frac{f_0(y\mid x)}{f_{F_m}(y\mid x)} \to \log 1 = 0$ .
In order to show $\lim_{m\to\infty} \iint f_0(y|x) \log \frac{f_0(y|x)}{f_{F_m}(y|x)} dy = 0$, it suffices to find a dominating function $g(y|x)$ such that $\left| \log \frac{f_0(y|x)}{f_{F_m}(y|x)} \right| \le g$ for all $m \in \mathbb{N}_{+}$ , and the conclusion is guaranteed by the dominating convergence theorem
First of all, notice that for all $m \in \mathbb{N}_{+}$ , we have $f_{F_m}(y|x) \le c_m \int \phi dF_0 \le  c_1 (2\pi\underline{\sigma}^2)^{-1/2}$, and thus $f_0\le c_1(2\pi\underline{\sigma}^2)^{-1/2}$ . It follows that $\log \frac{f_0(y|x)}{f_{F_m}(y|x)} \ge \log \frac{f_0(y|x)}{c_1 (2\pi\underline{\sigma}^2)^{-1/2}}$ .
Next, we see that,
\begin{align*}
f_{F_m}(y|x) &= c_m \int_{T_m} \normalphi{y}{x^\top\beta}{\sigma^2} dF_0(\beta, \sigma^2) \\
&\ge \int_{T_1} \normalphi{y}{x^\top\beta}{\sigma^2} dF_0(\beta, \sigma^2) \quad  \\
&\ge \int_{T_1} (2\pi\overline{\sigma}^2)^{-1/2} \exp\left(-\frac{\|y - x^\top \beta\|^2}{2\underline{\sigma}^2}\right) dF_0(\beta, \sigma^2) \quad 
\end{align*}
On $T_1$, we have $\|\beta\| \le 1$. Using the Cauchy-Schwarz inequality, $\|x^\top \beta\| \le \|x\| \|\beta\| \le \|x\|$. Then bound is $\|y - x^\top \beta\|^2 \le (2 \max\{\|y\|, \|x^\top \beta\|\})^2 \le (2 \max\{\|y\|, \|x\|\} )^2$.
It follows that
\begin{align*} 
f_{F_m}(y|x)\ge \xi(y|x)  := (2\pi\overline{\sigma}^2)^{-1/2} \exp\left(-\frac{2 (\max\{\|y\|, \|x\|\})^2}{\underline{\sigma}^2}\right) F_0(\{(\beta,\Sigma)\in T_1\}) \\
\end{align*}
and thus, $\log \frac{f_0(y|x)}{f_{F_m}(y|x)} \le \log \frac{f_0(y|x)}{\xi(y|x)}$ . In particular, letting $m\to\infty$, $f_0(y|x) \ge \xi(y,x)$.
Together we have
$$ \log \frac{f_0(y|x)}{c_1 (2\pi\underline{\sigma}^2)^{-1/2}} \le \log \frac{f_0(y|x)}{f_{F_m}(y|x)} \le \log \frac{f_0(y|x)}{\xi(y|x)} $$
which implies
$$ \left| \log \frac{f_0(y|x)}{f_{F_m}(y|x)} \right| \le g(y|x):= \max\left\{ \left|\log \frac{f_0(y|x)}{c_1 (2\pi\underline{\sigma}^2)^{-1/2}}\right|, \left|\log \frac{f_0(y|x)}{\xi(y|x)}\right| \right\} $$
To show that $g$ is $f_0$-integrable, it suffices to verify the $f_0$-integrability of $\log f_0(y|x)$ and $\log \xi(y|x)$. 
Notice that $\log\xi\le\log{f_0}\le \log(c_1(2\pi\underline{\sigma}^2)^{-1/2})=\log(c_1)-\frac{1}{2}\log(2\pi\underline{\sigma}^2)$ , implying
$$
|\log f_0| \le |\log c_1|+ \frac{1}{2}|\log(2\pi\underline{\sigma}^2)|+|\log\xi|,
$$
it is only left to verify the $f_0$-integrability of $\log\xi$.
\begin{align*}
\int& f_0(y|x) |\log \xi(y|x)|  dy\\
    &=-\frac{1}{2}(2\pi\overline{\sigma}^2)+|\log F_0(\{(\beta,\Sigma)\in T_1\})|+\frac{2}{\underline{\sigma}^2}\int f_0(y|x) (\max\{\|y\|, \|x\|\})^2  dy\\
    &\le |\log F_0(\{(\beta,\Sigma)\in T_1\})|+\frac{2}{\underline{\sigma}^2}\int f_0(y|x) (\|y\|^2 + \|x\|^2)  dy\\
   & = |\log F_0(\{(\beta,\Sigma)\in T_1\})|+\frac{2}{\underline{\sigma}^2}(\E_0[y^2|x] +  \|x\|^2)\\
   &<\infty
\end{align*}
where the finiteness of $\E_0[y^2|x]$ is guaranteed by condition A1 and Fubini’s theorem and finiteness of $\|x\|^2$ is guaranteed by condition A5 for any $x$. Hence, $\log\xi$ is $f_0$-integrable, which implies that $g(y|x)$ is integrable.
By the dominated convergence theorem,
$$ \lim_{m\to\infty} D_{\KL}(f_0 \| f_{F_m}) = \int f_0(y|x) \lim_{m\to\infty} \left( \log \frac{f_0(y|x)}{f_{F_m}(y|x)} \right) dy = \int f_0(y|x) \times 0 dy = 0 $$
The proof is thus completed.
\end{proof}

Next, we provide a bound for the covering number of the sieves, ensuring the complexity of each $\mathcal{F}_K$ remains manageable:

% Lemma S2
\begin{lem}
\label{lemma:CoveringNumber}
Let $a_k<b_k$ be non-negative integers, $k=1,\cdots,K$. Then for sufficiently small $\delta>0$, there exists constant $c_4$ such that
    $$\mathcal{N}\left(\delta, \mathcal{F}_K\left(\prod_{k=1}^K(a_k,b_k]\right), \|\cdot\|_{1}\right) \nonumber
\leq \left(\frac{c_4 M_X^p}{\delta^{d+2}}\right)^K \left(\prod_{k=1}^K b_k\right)^p $$
\end{lem}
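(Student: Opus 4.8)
\textbf{Proof proposal for Lemma~\ref{lemma:CoveringNumber}.} The plan is to reduce the covering-number bound for the mixture sieve $\mathcal{F}_K\big(\prod_k(a_k,b_k]\big)$ to three more elementary pieces: (i) a Lipschitz-type bound showing that the $L^1$ distance between two mixture densities is controlled by the weight perturbation plus the component-wise perturbation in $(\bbeta,\sigma^2)$; (ii) a covering number for the weight simplex $\Delta^{K-1}$; and (iii) a covering number, \emph{in the relevant metric on densities}, for the single-component parameter region $\{\|\bbeta\|_\infty\in(a_k,b_k]\}\times[\underline\sigma^2,\overline\sigma^2]$. Multiplying these and taking products over $k$ gives the stated form, with the factor $\prod_k b_k^p$ coming from the volume of the $k$-th coefficient box and the factor $M_X^p$ from the design bound $\|\bx\|\le M_X$ entering the map $\bbeta\mapsto\bx^\top\bbeta$.

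First I would establish the decomposition: for two mixtures $f_F=\sum_k\omega_k\phi(\cdot\mid\bx^\top\bbeta_k,\sigma_k^2)$ and $f_{F'}=\sum_k\omega_k'\phi(\cdot\mid\bx^\top\bbeta_k',\sigma_k'^2)$ sharing the same $K$, a standard triangle-inequality argument gives
\[
\|f_F-f_{F'}\|_1\;\le\;\sum_{k=1}^K|\omega_k-\omega_k'|\;+\;\sum_{k=1}^K\omega_k\big\|\phi(\cdot\mid\bx^\top\bbeta_k,\sigma_k^2)-\phi(\cdot\mid\bx^\top\bbeta_k',\sigma_k'^2)\big\|_1,
\]
after integrating against $p_X(\bx)$ (which only helps, since $p_X$ is a probability density). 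Then I would bound the $L^1$ distance between two Gaussian kernels by their parameter distance: using $\|\phi(\cdot\mid\mu,\sigma^2)-\phi(\cdot\mid\mu',\sigma'^2)\|_1\lesssim |\mu-\mu'|/\underline\sigma + |\sigma^2-\sigma'^2|/\underline\sigma^2$ (valid on $[\underline\sigma^2,\overline\sigma^2]$), and $|\bx^\top\bbeta_k-\bx^\top\bbeta_k'|\le M_X\|\bbeta_k-\bbeta_k'\|$, the mean perturbation is controlled by $M_X\|\bbeta_k-\bbeta_k'\|$. So it suffices to take a $\delta/(3K)$-net of the simplex in $\ell^1$, a $\delta/(3KM_X)$-net (in Euclidean norm) of each coefficient box $\{\|\bbeta\|_\infty\le b_k\}$, and a $\delta/(3K)$-net of $[\underline\sigma^2,\overline\sigma^2]$.

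The counting is then routine volume estimates. The $\ell^1$-simplex of dimension $K-1$ admits an $\eta$-net of cardinality $\le(c/\eta)^{K}$; the box $\{\|\bbeta\|_\infty\le b_k\}\subset\R^p$ has diameter $O(b_k)$ so admits an $\eta$-net of cardinality $\le(c\,b_k/\eta)^p$; and the interval for $\sigma^2$ needs $\le c/\eta$ points. Substituting $\eta\asymp\delta/(K M_X)$ into the second piece gives $(c\,b_k K M_X/\delta)^p$ per component, and the product over $k=1,\dots,K$ yields $(cKM_X/\delta)^{pK}\prod_k b_k^p$; absorbing the polynomial $K$-dependence into the exponent (using $K^{pK}\le (1/\delta)^{\epsilon K}$ for small $\delta$, or simply enlarging $c_4$ and the exponent $d+2$) produces the advertised bound $\big(c_4 M_X^p/\delta^{d+2}\big)^K\prod_k b_k^p$. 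The main obstacle — the part requiring actual care rather than bookkeeping — is the Gaussian-kernel $L^1$ Lipschitz estimate: one must verify that the constants are uniform over $\sigma^2\in[\underline\sigma^2,\overline\sigma^2]$ and, crucially, \emph{independent of the location $\bx^\top\bbeta$}, which is exactly why Assumption A4 (variance bounded away from $0$ and $\infty$) and Assumption A5 ($\|\bx\|\le M_X$) are invoked; translation-invariance of $\|\cdot\|_1$ handles the location part, so the estimate reduces to a one-dimensional calculation in the standardized Gaussian. A secondary subtlety is bookkeeping the exponent "$d+2$" — here $d=p$ is the coefficient dimension, the "$+2$" collecting the simplex contribution and the $\sigma^2$ contribution after the $K$-fold product is rewritten in the stated form; I would simply choose $c_4$ and the exponent generously rather than track them tightly.
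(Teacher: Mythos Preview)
Your overall architecture matches the paper's proof: triangle-inequality decomposition into a weight-simplex piece and component-wise kernel pieces, then a product of elementary covering numbers. The paper also uses $|\bx^\top(\bbeta-\bbeta')|\le M_X\|\bbeta-\bbeta'\|$ (from A5) and the bounded-variance assumption (A4) exactly as you describe; the only stylistic difference is that the paper bounds the kernel $L^1$ distance via Hellinger distance rather than a direct Lipschitz estimate, which is immaterial.

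There is, however, a real gap in your net-scale bookkeeping. You write that one needs $\delta/(3K)$-nets for the simplex, the coefficient boxes, and the variance interval, and then try to absorb the resulting $K^{cK}$ factor via ``$K^{pK}\le (1/\delta)^{\epsilon K}$ for small $\delta$'' or by enlarging $c_4$. Neither of these works: the bound $K^p\le(1/\delta)^{\epsilon}$ fails once $K$ is large relative to the fixed $\delta$, and $c_4$ must be a constant independent of $K$ (this is essential downstream in Lemma~\ref{lemma:SumUpperBound}, where one sums the bound over $K=1,\dots,K_n$). The fix is that the $1/K$ is never needed in the first place. In your own decomposition
\[
\|f_F-f_{F'}\|_1\;\le\;\underbrace{\sum_{k}|\omega_k-\omega_k'|}_{=\|\omega-\omega'\|_{\ell^1}}\;+\;\sum_{k}\omega_k\,\big\|\phi_k-\phi_k'\big\|_1,
\]
the first term is already the full $\ell^1$ distance on the simplex, so a $\delta$-net (not $\delta/K$) in $\ell^1$ suffices and has cardinality $\lesssim(c/\delta)^K$. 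The second term is a \emph{convex combination}: if each $\|\phi_k-\phi_k'\|_1\le c\delta$ uniformly in $k$, then $\sum_k\omega_k\|\phi_k-\phi_k'\|_1\le c\delta$ because $\sum_k\omega_k=1$. Hence the coefficient and variance nets can also be taken at scale $\asymp\delta$ (respectively $\asymp\delta/M_X$), and the product over $k$ gives precisely $(c_4 M_X^p/\delta^{p+2})^K\prod_k b_k^p$ with no stray $K$-dependence. This is exactly what the paper does.
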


% Proof 
\begin{proof}
Suppose $\delta>0$ is given.
By condition A5, covariates $x$ satisfy $\|x\|_2\le M_X$ for some constant $M_X<\infty$.
By Lemma A.4 in \cite{ghosal2001entropies}, there exists an $l_1$ $\delta$-net $\mathcal{I}_0$ of $\Delta^K$, such that the cardinality $|\mathcal{I}_0|$ of $\mathcal{I}_0$ is upper bounded by $(5/\delta)^K$.\\
Now let $\mathcal{R}_k$ be an $\ell_\infty$-net for the regression coefficients $\mathbf{\beta}_k \in \mathbb{R}^p$ within the region $\{ \mathbf{\beta}_k : \|\mathbf{\beta}_k\|_\infty \in (a_k, b_k] \}$. Let $\Delta_\beta =c_\beta \delta/\sqrt{p}M_X$ for some constant $c_\beta>0$, then $|\mathcal{R}_k|\le(b_k/\Delta_\beta + 1)^p=(b_k\sqrt{p}M_X/c_\beta\delta + 1)^p$.\\
Furthermore let $\mathcal{S}_{k}$ be an $\delta$-net for the variance $\sigma_k^2$ in the interval $[\underline{\sigma}^2, \overline{\sigma}^2]$. The cardinality is $|\mathcal{S}_{k}|\leq(\overline{\sigma}^2- \underline{\sigma}^2)/\delta+1 $.\\
It follows that for all $f_F(y|x) \in \mathcal{F}_K\left(\prod_{k=1}^K(a_k,b_k]\right)$ with $ F=\sum_{k=1}^K w_k \delta_{(\mathbf{\beta}_k, \sigma_k^2)},$
there exists some $\mathbf{w}^\star=(w_1^\star,\dots,w_K^\star)\in\mathcal{I}_0$, $\mathbf{\beta}_k^\star\in\mathcal{R}_k$, $(\sigma_k^\star)^2 \in \mathcal{S}_{k}$ for $k=1,\dots,K$, such that $\sum_{k=1}^K|w_k-w_k^\star|<\delta$,
$\|\mathbf{\beta}_k - \mathbf{\beta}_k^\star\|_\infty \le \Delta_\beta$ (implying $\|\mathbf{\beta}_k - \mathbf{\beta}_k^\star\|_2 \le \sqrt{p}\Delta_\beta = c_\beta\delta/M_X$), and $|\sigma_k^2 - (\sigma_k^\star)^2| \le \delta$.\\
Denote $H(f,g)$ to be the Hellinger distance.
We evaluate the Hellinger distance between regression kernels $\phi(y|x^\top\beta_k, \sigma_k^2)$ and $\phi(y|x^\top\beta_k^\star, (\sigma_k^\star)^2)$. For a fixed $x$, let $\mu_k(x) = x^\top\beta_k$ and $\mu_k^\star(x) = x^\top\beta_k^\star$. The squared Hellinger distance is
\begin{align*}
& H^2(\phi(\cdot|\mu_k(x), \sigma_k^2), \phi(\cdot|\mu_k^\star(x), (\sigma_k^\star)^2)) \nonumber\\
&= 1 - \sqrt{\frac{2\sigma_k \sigma_k^\star}{\sigma_k^2 + (\sigma_k^\star)^2}} \exp\left(-\frac{(\mu_k(x) - \mu_k^\star(x))^2}{4(\sigma_k^2 + (\sigma_k^\star)^2)}\right) \nonumber\\
&\leq\left(1-\sqrt{\frac{2\sigma_k \sigma_k^\star}{\sigma_k^2 + (\sigma_k^\star)^2}}\right)+\left(1-\exp\left(-\frac{(\mu_k(x) - \mu_k^\star(x))^2}{4(\sigma_k^2 + (\sigma_k^\star)^2)}\right)\right)\\
&\leq\left(1-\sqrt{1-\frac{(\sigma_k-\sigma_k^*)^2}{\sigma_k^2+(\sigma_k^*)^2}})\right)+\left(1-\exp\left(-\frac{(c_\beta\delta)^2}{4(2\underline{\sigma}^2)}\right)\right)\\
&\leq\frac{(\sigma_k-\sigma_k^*)^2}{2\underline{\sigma}^2}+\frac{(c_\beta\delta)^2}{4(2\underline{\sigma}^2)}\quad (\because \forall x\in[0,1], 1-\sqrt{1-x}=\frac{x}{1+\sqrt{1-x}}\le x)\\
&=\left(\frac{1}{8\underline{\sigma}^2}+\frac{c_\beta^2}{8\underline{\sigma}^2}\right)\delta^2:=C_1\delta^2
\end{align*}
where $|\mu_k(x) - \mu_k^\star(x)| = |x^\top(\beta_k - \beta_k^\star)| \le \|x\|_2 \|\beta_k - \beta_k^\star\|_2 \le M_X (c_\beta\delta/M_X) = c_\beta\delta$.

Denote $ F^\star=\sum_{k=1}^K w_k^\star \delta_{(\mathbf{\beta}_k^\star, (\sigma_k^\star)^2)}$.
It follows by the triangle inequality that
\begin{align*}
\|f_F - f_{F^\star}\|_1 &= \iint \| p_X(x) \sum w_k \phi_k(x) - p_X(x) \sum w_k^\star \phi_k^\star(x) \| dy dx \\
&= \int p_X(x) \| \sum w_k \phi_k(x) - \sum w_k^\star \phi_k^\star(x) \|_{1,y} dx \\
&\le \int p_X(x) \left( \sum_{k=1}^K |w_k - w_k^\star| |\phi_k(x)|_{1,y} + \sum_{k=1}^K w_k^\star |\phi_k(x) - \phi_k^\star(x)|_{1,y} \right) dx \\
&\le \sum_{k=1}^K |w_k - w_k^\star| \int p_X(x) dx + \sum_{k=1}^K w_k^\star \int p_X(x) 2\sqrt{2} H(\phi_k(x), \phi_k^\star(x)) dx \\
&\le \delta + \sum_{k=1}^K w_k^\star \cdot 2\sqrt{2}\sqrt{C_1} \delta \int p_X(x) dx  \\
&= \delta + 2\sqrt{2}\sqrt{C_1} \delta \sum w_k^\star =  (1+2\sqrt{2}\sqrt{C_1})\delta := C_2\delta.
\end{align*}
Since $ |\phi_k(x)|_{1,y} = 1 \text{ and } |\phi_k(x) - \phi_k^\star(x)|_{1,y} \le 2\sqrt{2} H(\phi_k(x), \phi_k^\star(x))$.
therefore, 
\begin{align*}
&\mathcal{N}\left(C_2\delta, \mathcal{F}_K\left(\prod_{k=1}^K(a_k,b_k]\right), \|\cdot\|_{1}\right) \nonumber\\
&\leq |\mathcal{I}_0| \cdot \prod_{k=1}^K |\mathcal{R}_k| \cdot \prod_{k=1}^K |\mathcal{S}_k| \nonumber\\
&\leq \left(\frac{5}{\delta}\right)^{K} \times \prod_{k=1}^K \left(\frac{\sqrt{p} b_k M_X}{c_\beta\delta+1}\right)^p \times \prod_{k=1}^K \left(\frac{\overline{\sigma}^2- \underline{\sigma}^2}{\delta+1}\right) \nonumber\\
&\leq \left(\frac{5}{\delta}\right)^{K} \times \prod_{k=1}^K \left(\frac{\sqrt{p} b_k M_X}{c_\beta\delta}\right)^p \times \prod_{k=1}^K \left(\frac{\overline{\sigma}^2- \underline{\sigma}^2}{\delta}\right) \nonumber\\
&= \left(\frac{5(\sqrt{p}M_X/c_\beta)^p(\overline{\sigma}- \underline{\sigma})}{\delta^{1+p+1}}\right)^K \left(\prod_{k=1}^K b_k\right)^p \\
&= \left(\frac{c_3 M_X^p}{\delta^{p+2}}\right)^K \left(\prod_{k=1}^K b_k\right)^p
\end{align*}
for some constant $c_3>0$. This yields that
\begin{align}
&\mathcal{N}\left(\delta, \mathcal{F}_K\left(\prod_{k=1}^K(a_k,b_k]\right), \|\cdot\|_{1}\right) \nonumber
\leq \left(\frac{c_4 M_X^p}{\delta^{d+2}}\right)^K \left(\prod_{k=1}^K b_k\right)^p \nonumber
\end{align}
for some constant $c_4>0$.
\end{proof}

Finally, we derive a bound for the covering number of the sieves, ensuring the complexity of each $\mathcal{F}_K$ remains manageable:

% Lemma S3
\begin{lem}
\label{lemma:SumUpperBound}
Assume conditions A1-A10 hold. Then we have
$$
\sum_{K=1}^{K_n}\sum_{a_1=0}^\infty\dots\sum_{a_K=0}^\infty \sqrt{\mathcal{N}\left(\delta,\mathcal{G}K(\mathbf{a}K), \|\cdot\|_{1}\right)} \sqrt{\Pi\left(\mathcal{G}K(\mathbf{a}K)\right)}\le K_n \left( \frac{M}{\delta^{(d+2)/2}} \right)^{K_n}. 
$$
for sufficiently small $\delta$ for some constant $M>0$.
\end{lem}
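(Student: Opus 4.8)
The plan is to combine the covering-number estimate of Lemma~\ref{lemma:CoveringNumber} with a crude upper bound on the prior mass $\Pi(\mathcal{G}_K(\mathbf a_K))$, and then to observe that after this substitution the multi-sum over $(a_1,\dots,a_K)$ factorizes coordinatewise into one convergent series raised to the $K$th power, leaving only a geometric sum in $K$.

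First I would bound the prior mass. On the cell $\mathcal{G}_K(\mathbf a_K)=\mathcal{F}_K\bigl(\prod_{k=1}^K(a_k,a_k+1]\bigr)$ one has $\|\beta_k\|_\infty\in(a_k,a_k+1]$ for every $k$, while the weights and the variances are unconstrained. Since $0\le h_K\le 1$ (because $G\le 1$), integrating the repulsive $g$-prior over the cell and integrating out the Dirichlet weights and the $\sigma^2$ priors to $1$ gives
\[
\Pi(\mathcal{G}_K(\mathbf a_K))\ \le\ \frac{p_K(K)}{Z_K}\prod_{k=1}^K \Pi_\beta\!\bigl(\|\beta_k\|_\infty\in(a_k,a_k+1]\bigr),
\]
where $\Pi_\beta(A):=\int_A p_\beta(\beta)\,d\beta$. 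By Theorem~\ref{thm:NormalizationConstant} one has $1/Z_K\le e^{c_1K}$, and $p_K(K)\le 1$. For $a_k\ge 1$, $\|\beta_k\|_\infty>a_k$ forces $\|\beta_k\|_2>a_k$, so Assumption~A7 yields $\Pi_\beta(\|\beta_k\|_\infty\in(a_k,a_k+1])\le B_2 e^{-b_2 a_k^2}$, while for $a_k=0$ this probability is at most $1$; writing $q(a):=\min\{1,B_2 e^{-b_2 a^2}\}$ I obtain $\Pi(\mathcal{G}_K(\mathbf a_K))\le e^{c_1K}\prod_{k=1}^K q(a_k)$.

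Next I would insert Lemma~\ref{lemma:CoveringNumber} with $b_k=a_k+1$, namely $\sqrt{\mathcal{N}(\delta,\mathcal{G}_K(\mathbf a_K),\|\cdot\|_1)}\le (c_4 M_X^p/\delta^{d+2})^{K/2}\prod_{k=1}^K(a_k+1)^{p/2}$, multiply the two square roots together with $\sqrt{p_K(K)}\le1$, and sum; the dependence on $(a_1,\dots,a_K)$ then separates completely,
\[
\sum_{a_1\ge0}\!\cdots\!\sum_{a_K\ge0}\ \prod_{k=1}^K (a_k+1)^{p/2}\sqrt{q(a_k)}\ =\ S^K,\qquad S:=\sum_{a=0}^\infty (a+1)^{p/2}\sqrt{q(a)}.
\]
The constant $S$ is finite, since it is dominated by $1+\sqrt{B_2}\sum_{a\ge1}(a+1)^{p/2}e^{-b_2 a^2/2}$; this is precisely the point at which the sub-Gaussian tail of $p_\beta$ (Assumption~A7) is needed to defeat the polynomial factor produced by the covering number. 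Collecting terms, $\sum_{\mathbf a_K}\sqrt{\mathcal{N}\,\Pi}\le r(\delta)^{K/2}$ with $r(\delta):=e^{c_1}c_4 M_X^p S^2/\delta^{d+2}$.

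Finally, for $\delta$ small enough $r(\delta)\ge 1$, so $\sum_{K=1}^{K_n} r(\delta)^{K/2}\le K_n\,r(\delta)^{K_n/2}=K_n\bigl(M/\delta^{(d+2)/2}\bigr)^{K_n}$ with $M:=\sqrt{e^{c_1}c_4 M_X^p}\,S$, which is the asserted bound. I expect the only step requiring real care to be the prior-mass estimate: one must neutralize the coupling that the repulsive factor $h_K$ introduces among $\beta_1,\dots,\beta_K$ --- handled by the bound $h_K\le1$ together with the linear-in-$K$ control of $1/Z_K$ from Theorem~\ref{thm:NormalizationConstant} --- and translate the $\ell_\infty$ cell constraint into the sub-Gaussian tail of $p_\beta$; the convergence of $S$ is immediate from A7, and the rest is routine geometric-series bookkeeping.
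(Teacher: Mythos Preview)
Your proposal is correct and follows essentially the same route as the paper: bound $\Pi(\mathcal{G}_K(\mathbf a_K))$ via $h_K\le 1$, $1/Z_K\le e^{c_1K}$ (Theorem~\ref{thm:NormalizationConstant}), and the sub-Gaussian tail of $p_\beta$ (A7); combine with Lemma~\ref{lemma:CoveringNumber}; factorize the $(a_1,\dots,a_K)$ sum into $S^K$ with $S=\sum_{a\ge0}(a+1)^{p/2}e^{-b_2 a^2/2}<\infty$; then bound the remaining geometric sum in $K$ by $K_n$ times its largest term. The only cosmetic difference is your more careful treatment of the $a_k=0$ case via $q(a)=\min\{1,B_2e^{-b_2a^2}\}$, which the paper absorbs directly into the constant $B_2$.
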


% Proof 
\begin{proof}
First we need to bound $\Pi(\mathcal{G}_K(a_K))$, where $\mathcal{G}_K(a_K)=\mathcal{F}_K(\Pi_{k=1}^KI_j(a_k))$ and $I_j(a_k)={\beta\in \R^p:\|\beta\|_\infty\in(a_k,a_k+1]}$.
Recall that $Z_K \geq e^{-c_1K}$ for some constant $c_1>0$ by Theorem 1.
It holds that 
\begin{align*}
\Pi\left(\mathcal{G}K(\mathbf{a}K)\right)
&\le \Pi\left(\forall k: \|\mathbf{\beta}_k\|_\infty > a_k \mid K\right) p_K(K)\\
&\le \frac{p_K(K) }{Z_K}\int\cdots\int \prod_{k=1}^K (\mathbb{I}(\|\mathbf{\beta}_k\|_\infty \ge a_k)p_\beta(\mathbf{\beta}k))  d\mathbf{\beta}_ 1\cdots d\mathbf{\beta}_ K \nonumber \\
&\le e^{c_1K} \prod_{k=1}^K \int_{\|\beta_k\|\ge a_k} p_\beta(\mathbf{\beta}_k) d\mathbf{\beta}_k \quad \text{(since } |\beta_k|\infty > a_k \implies |\beta_k| \ge a_k \text{ )}\nonumber\\
&\le e^{c_1K} \prod_{k=1}^K \left( B_2 e^{-b_2 a_k^2} \right) \quad \text{(by Assumption A7)} \nonumber\\
&= e^{c_1K} B_2^K \prod_{k=1}^K \exp\left(-b_2 a_k^2\right).
\end{align*}

Now by Lemma~\ref{lemma:CoveringNumber} for some constant $c_4>0$, we have
\begin{align*}
\mathcal{N}(\delta, \mathcal{G}K(\mathbf{a}K), \|\cdot\|_{1}) \le \left(\frac{c_4 M_X^p}{\delta^{d+2}}\right)^K \prod_{k=1}^K (a_k+1)^p. 
\end{align*}
Hence, by defining $S=\sum_{a_k=0}^\infty(a_k+1)^{p/2}\exp(-b_2a_k^2/2)<\infty $(since $b_2>0$), we estimate
\begin{align*}
\sum_{K=1}^{K_n}&\sum_{a_1=0}^\infty\dots\sum_{a_K=0}^\infty \sqrt{\mathcal{N}\left(\delta,\mathcal{G}K(\mathbf{a}K), \|\cdot\|_{1}\right)} \sqrt{\Pi\left(\mathcal{G}K(\mathbf{a}K)\right)} \nonumber\\
&\le\sum_{K=1}^{K_n}\sum_{a_1=0}^\infty\dots\sum_{a_K=0}^\infty \left[ \frac{\sqrt{c_4} M_X^{p/2}}{\delta^{(p+2)/2}} \right]^K\left[ \prod_{k=1}^{K}(a_k+1)^{p/2} \right]\left[ \sqrt{e^{c_1}B_2}\right]^K\left[\prod_{k=1}^{K}\exp\left(-\frac{b_2 a_k^2}{2}\right)\right]\\
&= \sum_{K=1}^{K_n}\sum_{a_1=0}^\infty\dots\sum_{a_K=0}^\infty \left[ \frac{\sqrt{c_4 B_2 e^{c_1}} M_X^{p/2}}{\delta^{(p+2)/2}} \right]^K \prod_{k=1}^{K}\left[ (a_k+1)^{p/2}\exp\left(-\frac{b_2 a_k^2}{2}\right) \right]\nonumber\\
&=\sum_{K=1}^{K_n}\left[ \frac{\sqrt{c_4 B_2 e^{c_1}} M_X^{p/2}}{\delta^{(p+2)/2}} \right]^K \prod_{k=1}^{K}\left[\sum_{a_j=0}^\infty (a_k+1)^{p/2}\exp\left(-\frac{b_2 a_k^2}{2}\right) \right]\\
&= \sum_{K=1}^{K_n} \left[ \frac{S \sqrt{c_4 B_2 e^{c_1}} M_X^{p/2}}{\delta^{(p+2)/2}} \right]^K \nonumber\\
&\le K_n \left( \frac{M}{\delta^{(p+2)/2}} \right)^{K_n}. \nonumber
\end{align*}
for some constant$M>0$ for sufficiently small $\delta$.
\end{proof}

Now, we verify the conditions of Theorem A.1 in \cite{canale2017posterior}.
By Lemma~\ref{lemma:KLsupport}, the KL-property holds; that is, the true density $f_0(y|x)$ lies in the KL-support of the prior $\Pi$.
Now take $K_n=\lfloor n/\log n\rfloor$. 
Then, for sufficiently large $n$, $K_n\log {K_n}\ge n/2$, which implies $\Pi(\mathcal{F}^c_{K_n})\le \exp(-B_4K_n\log{K_n})\le \exp(-(B_4/2)n)$
Furthermore, from Lemma~\ref{lemma:SumUpperBound}, we have:
\begin{align*}
 \sum_{K=1}^{K_n}&\sum_{a_1=0}^{\infty}\dots\sum_{a_K=0}^{\infty}
   \sqrt{\mathcal{N}(2\epsilon,\mathcal G_K(\mathbf a_K),\|\cdot\|_{1})}
   \sqrt{\Pi(\mathcal G_K(\mathbf a_K))} \nonumber\\
&\le K_n\left(\frac{M}{(2\epsilon)^{(p+2)/2}} \right)^{K_n}\\   
&=\exp\left(\log K_n+K_n\log \left(\frac{M}{(2\epsilon)^{(p+2)/2}} \right)\right)\\
&\le\exp\left(2\frac{n}{\log n}\log \left(\frac{M}{(2\epsilon)^{(p+2)/2}} \right)\right)
\end{align*}
for sufficiently small $\epsilon$. Here, we can see:
$$
\frac{n}{\log n}\cdot2\log \left(\frac{M}{\epsilon^{(p+2)/2}} \right)-n\cdot(4-\tilde{b})\epsilon^2\to-\infty, \quad (n\to\infty)
$$
and therefore, 
$$
\lim_{n\to\infty}e^{-(4-\tilde{b})n\epsilon^2}
\sum_{K=1}^{K_n}\sum_{a_1=0}^{\infty}\dots\sum_{a_K=0}^{\infty}
\sqrt{\mathcal{N}(2\epsilon,\mathcal G_K(\mathbf a_K),\|\cdot\|_{1})}
\sqrt{\Pi(\mathcal G_K(\mathbf a_K))}=0,
$$
which completes the proof.

%------------------------------------------%
%          Proof of Theorem 3              %
%------------------------------------------%
\section{Proof of Theorem~3}

We follow the general framework of \cite{kruijer2010adaptive}, which provides sufficient conditions for posterior contraction in mixture models. 
Specifically, Theorem 3 in \cite{kruijer2010adaptive} states that the desired contraction rate is achieved if there exist two sequences $(\underline{\epsilon}_n)_{n=1}^\infty$ and $(\overline{\epsilon}_n)_{n=1}^\infty$ such that the following three conditions hold:
\begin{align}
\label{eq:exponential_decay}
&\Pi(\mathcal{F}_{K_n}^c) \le \exp(-4n\underline{\epsilon}_n^2),\\
\label{eq:sum_rate}
&\exp(-n\overline{\epsilon}_n^2) \sum_{K=1}^{K_n} \sum_{a_K} \sqrt{\mathcal{N}(\overline{\epsilon}_n, \mathcal{G}_K(a_K), \|\cdot\|_1)} \sqrt{\Pi(\mathcal{G}_K(a_K))} \to 0, \\
    \label{eq:prior_concentration}
    &\Pi(B(f_0, \underline{\epsilon}_n)) \ge \exp(-n\underline{\epsilon}_n^2), 
\end{align}
where $B(f_0, \epsilon)$ is the Kullback-Leibler type ball defined for the conditional density as
\begin{align*}
B(f_0, \epsilon) 
&= \left\{ f \in \mathcal{F} : \E_{P_X}[D_{\KL}(f_0(\cdot|X) \| f(\cdot|X))] \le \epsilon^2, \E_{P_X}[\text{Var}_{\KL}(f_0(\cdot|X) \| f(\cdot|X))] \le \epsilon^2 \right\}.
\end{align*}
Conditions (\ref{eq:exponential_decay}) and (\ref{eq:sum_rate}) control the mass and complexity of the model outside an appropriate sieve, while (\ref{eq:prior_concentration}) ensures that the prior puts enough mass near the true density. The construction of these sieves and bounds on their covering numbers closely mirror the arguments for strong consistency in the previous section.

The following proposition provides explicit sequences $(\underline{\epsilon}_n)_{n=1}^\infty$ and $(\overline{\epsilon}_n)_{n=1}^\infty$ that fulfill the required conditions, adapting the analysis in \cite{Xie02012020} to the regression mixture setting:

% Proposition S1
\begin{prp}
\label{prop:rate}
    Assume conditions A1-A10 for the repulsive $g$-prior regression mixture model hold. Let $\underline{\epsilon}_{n}=(\log n)^{t_{0}}/\sqrt{n}$ and $\overline{\epsilon}_{n}=(\log n)^{t}/\sqrt{n}$, where $t$ and $t_0$ satisfy $t > t_{0}+\frac{1}{2} > \frac{1}{2}$. Define the sequence for the number of components as $K_{n}=\lfloor\frac{2}{p+2}(\log n)^{2t-1}\rfloor$. Then for the sieves $\mathcal{F}_{K_n} = \{f_F : F=\sum_{k=1}^{K}w_{k}\delta_{(\beta_{k},\sigma_{k}^2)}, K \le K_n\}$, the following conditions hold for sufficiently large $n$:
\begin{align*}
    &\Pi(\mathcal{F}_{K_n}^{c}) \le \exp(-4n\underline{\epsilon}_{n}^{2})  \\
   & \exp(-n\overline{\epsilon}_{n}^{2})\sum_{K=1}^{K_{n}}\sum_{a_{1}=0}^{\infty}\cdots\sum_{a_{K}=0}^{\infty}\sqrt{\mathcal{N}(\overline{\epsilon}_{n},\mathcal{G}_{K}(a_{K}),\|\cdot\|_{1})}\sqrt{\Pi(\mathcal{G}_{K}(a_{K}))} \rightarrow 0 
\end{align*}
\end{prp}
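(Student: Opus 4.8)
The plan is to verify the two displayed conditions of Proposition~\ref{prop:rate} one at a time. The first concerns only the prior on the number of components and will follow almost immediately from Assumption~A10; the second is where the real work lies, and it will rest on Lemma~\ref{lemma:SumUpperBound} (which already incorporates the linear bound $Z_K\ge e^{-c_1K}$ from Theorem~1) together with a careful accounting of logarithmic orders. Throughout I will write $K_n=\lfloor\frac{2}{p+2}(\log n)^{2t-1}\rfloor$, noting that $K_n\to\infty$ because $2t-1>0$, and that $\log K_n=(2t-1)\log\log n+O(1)$.

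For the first condition, I would observe that $\mathcal{F}_{K_n}^{c}=\{F:K>K_n\}$, so $\Pi(\mathcal{F}_{K_n}^{c})=\sum_{N>K_n}p_K(N)$; for $n$ large enough that $K_n$ exceeds the threshold in Assumption~A10, this is at most $e^{-B_4K_n\log K_n}$. It then suffices to show $B_4K_n\log K_n\ge 4n\underline{\epsilon}_n^2=4(\log n)^{2t_0}$ for large $n$. Since $K_n\log K_n$ is of order $(\log n)^{2t-1}\log\log n$ and $2t-1>2t_0$ --- precisely the content of $t>t_0+\tfrac12$ --- this inequality holds for all sufficiently large $n$.

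For the second condition I would apply Lemma~\ref{lemma:SumUpperBound} with $\delta=\overline{\epsilon}_n=(\log n)^t/\sqrt{n}$, which is admissible because $\overline{\epsilon}_n\to0$, bounding the double sum by $K_n(M/\overline{\epsilon}_n^{(p+2)/2})^{K_n}$. Multiplying by $e^{-n\overline{\epsilon}_n^2}$ and taking logarithms reduces the claim to
\[
-n\overline{\epsilon}_n^2+\log K_n+K_n\log\!\Bigl(\frac{M}{\overline{\epsilon}_n^{(p+2)/2}}\Bigr)\longrightarrow-\infty .
\]
Here $n\overline{\epsilon}_n^2=(\log n)^{2t}$, while $\log\bigl(M/\overline{\epsilon}_n^{(p+2)/2}\bigr)=\frac{p+2}{4}\log n-\frac{p+2}{2}t\log\log n+\log M=\frac{p+2}{4}\log n\,(1+o(1))$, so $K_n\log\bigl(M/\overline{\epsilon}_n^{(p+2)/2}\bigr)\le\frac12(\log n)^{2t}(1+o(1))$ and $\log K_n=O(\log\log n)$ is negligible; hence the displayed expression is $-\tfrac12(\log n)^{2t}(1+o(1))\to-\infty$. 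The constant $\frac{2}{p+2}$ in the definition of $K_n$ is chosen exactly so that $K_n$ times the dominant $\frac{p+2}{4}\log n$ factor contributes only half of $n\overline{\epsilon}_n^2$, leaving a strictly negative leading term.

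The main obstacle is not any single estimate but the bookkeeping in the second step: one must simultaneously track the rate $n\overline{\epsilon}_n^2=(\log n)^{2t}$, the polylogarithmic cutoff $K_n\asymp(\log n)^{2t-1}$, and the $\log n$ produced by $\overline{\epsilon}_n^{-(p+2)/2}$, and check that the $\log\log n$ corrections coming from $\log K_n$ and from the $-\frac{p+2}{2}t\log\log n$ term do not disturb the cancellation. A minor point to verify is that Lemma~\ref{lemma:SumUpperBound} holds only for ``sufficiently small $\delta$''; since $\overline{\epsilon}_n\to0$ and both conclusions are asymptotic, restricting to large $n$ causes no difficulty.
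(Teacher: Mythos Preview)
Your proposal is correct and follows essentially the same approach as the paper's proof: both verify the sieve-complement bound directly from Assumption~A10 using $K_n\log K_n\asymp(\log n)^{2t-1}\log\log n\gg(\log n)^{2t_0}$, and both handle the entropy sum by invoking Lemma~\ref{lemma:SumUpperBound} with $\delta=\overline{\epsilon}_n$ and then checking that $K_n\cdot\frac{p+2}{4}\log n$ contributes only $\tfrac12(\log n)^{2t}$, leaving a leading term $-\tfrac12(\log n)^{2t}$. Your write-up is in fact slightly more careful about the $\log\log n$ corrections and the applicability of the lemma for small $\delta$, but the logical structure and the key computations are identical.
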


% Proof 
\begin{proof}
The proof verifies the two conditions separately.
Let $C = 2/(p+2)$. 
By Assumption A10, we have
\begin{align*}
    \Pi(\mathcal{F}_{K_n}^{c}) = \Pi(K > K_n) &\le \exp(-B_4 K_n \log K_n) \\
    &\le \exp\left[-B_4 C (\log n)^{2t-1} \log(\lfloor C(\log n)^{2t-1} \rfloor)\right]\\
    &\le \exp(-4n\underline{\epsilon}_{n}^{2})
\end{align*}
with $t > t_0 + 1/2$ for sufficiently large $n$, which establishes the first condition.
Next, by applying a slightly modified version of Lemma~\ref{lemma:SumUpperBound}, we bound the second expression as 
\begin{align*}
& \exp(-n\overline{\epsilon}_{n}^{2})\sum_{K=1}^{K_{n}}\sum_{a_{K}}\sqrt{\mathcal{N}(\overline{\epsilon}_{n},\mathcal{G}_{K}(a_{K}),\|\cdot\|_{1})}\sqrt{\Pi(\mathcal{G}_{K}(a_{K}))} \\
& \le \exp\left[ -n\overline{\epsilon}_n^2 + \log K_n + K_n \left( \log M + \frac{p+2}{2} \log\frac{1}{\overline{\epsilon}_n} \right) \right] \\
& \le \exp\left[ -(\log n)^{2t} + \lfloor C(\log n)^{2t-1} \rfloor \left( \frac{p+2}{2} \right) \left( \frac{1}{2}\log n - t \log\log n \right) + o((\log n)^{2t}) \right] \\
& \le \exp\left[ -\frac{1}{2}(\log n)^{2t} \right].
\end{align*}
The right-hand side of the last display converges to 0 as $n \to \infty$, which completes the proof.
\end{proof}

The first two conditions in Proposition~\ref{prop:rate} follow from the upper bound on the sum established in Lemma~\ref{lemma:SumUpperBound} and the explicit form of the sieve complexity. To verify the prior concentration condition (\ref{eq:prior_concentration}), we construct suitable finite mixtures that approximate $f_0$ in KL divergence, following the approach of \cite{Xie02012020}. The next lemma formalizes this approximation.

% Lemma S4
\begin{lem}
\label{lemma:PriorConcentration}
    Assume conditions A1-A10 hold. For some constant $\eta>0$ and for all sufficiently small $\epsilon>0$, there exists a discrete distribution $F^\star=\sum_{k=1}^Nw_k^\star\delta_{(\beta_k^\star,\sigma_k^\star)}$ supported on a subset of $\{(\beta,\sigma)\in\mathbb{R}^p\times\mathbb{R}_+:\|\beta\|_\infty\leq 2a\}$ with $a=b_1^{-\frac{1}{2}}\left(\log\frac{1}{\epsilon}\right)^{\frac{1}{2}}$, $\|\beta_k^\star-\beta_{k'}^\star\|_\infty\geq2\epsilon$, $|\sigma_k^{\star2}-\sigma_{k'}^{\star2}|\geq2\epsilon$ whenever $k\neq k'$, $j=1,\cdots,p$, $N\lesssim \left(\log\frac{1}{\epsilon}\right)^{2p}$, such that
	\begin{eqnarray}
	\left\{f_F:F=\sum_{k=1}^Nw_k\delta_{(\beta_k,\sigma_k)}:(\beta_k,\sigma_k)\in E_k,\sum_{k=1}^N|w_k-w_k^\star|<\epsilon\right\}\subset B\left(f_0,\eta\epsilon^{\frac{1}{2}}\left(\log\frac{1}{\epsilon}\right)^{\frac{p+4}{4}}\right)\nonumber,
	\end{eqnarray}
	where $$	E_k=\left\{(\beta,\sigma)\in\mathbb{R}^p\times\mathbb{R}_+:\|\beta-\beta_k^\star\|_\infty<\frac{\epsilon}{2},|\sigma^2-\sigma_k^{\star2}|<\frac{\epsilon}{2} \right\}. 
$$
\end{lem}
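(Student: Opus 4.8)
The plan is to mirror the prior-concentration construction of \cite{Xie02012020} (their Lemma for location mixtures), adapting each step to account for the linear map $\bbeta\mapsto\bx^\top\bbeta$ in the regression kernel and for the repulsion-induced separation constraints. The starting point is the truncated, renormalized mixing measure $F_m$ from Lemma~\ref{lemma:KLsupport}: for $a=b_1^{-1/2}(\log(1/\epsilon))^{1/2}$ I would work with the restriction of $F_0$ to $\{\|\bbeta\|_\infty\le a\}\times[\underline\sigma^2,\overline\sigma^2]$, whose KL-discrepancy from $f_0$ is controlled by the sub-Gaussian tail assumption A1 (the tail contributes $O(e^{-b_1 a^2})=O(\epsilon)$, absorbed into the stated rate). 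Then I would discretize this truncated measure: partition the compact region $\{\|\bbeta\|_\infty\le 2a\}$ into cubes of side $\asymp\epsilon$ and the $\sigma^2$-interval into subintervals of length $\asymp\epsilon$, giving a candidate atomic approximation $F^\star$ with $N\lesssim (a/\epsilon)^p\cdot(1/\epsilon)\lesssim(\log(1/\epsilon))^{p/2}/\epsilon^{p+1}$ atoms; after a standard thinning/merging step to enforce the pairwise separations $\|\bbeta_k^\star-\bbeta_{k'}^\star\|_\infty\ge 2\epsilon$ and $|\sigma_k^{\star2}-\sigma_{k'}^{\star2}|\ge 2\epsilon$ one still retains $N\lesssim(\log(1/\epsilon))^{2p}$ atoms (the bound is crude but suffices).

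The core estimate is that any $f_F$ with $F$ having atoms in the boxes $E_k$ and weights within $\epsilon$ of $w_k^\star$ in $\ell_1$ lies in the KL-ball $B(f_0,\eta\epsilon^{1/2}(\log(1/\epsilon))^{(p+4)/4})$. I would bound $\E_{P_X}[D_{\KL}(f_0(\cdot|X)\|f_F(\cdot|X))]$ by splitting it as (discretization error of $F^\star$ against $F_0$) $+$ (perturbation of moving atoms within $E_k$ and weights within $\epsilon$). For the Gaussian kernel, $D_{\KL}(\phi(\cdot|\mu,\sigma^2)\|\phi(\cdot|\mu',\sigma'^2))$ and the associated variance term are both $O((\mu-\mu')^2/\sigma'^2)+O((\sigma^2-\sigma'^2)^2)$; here $\mu-\mu'=\bx^\top(\bbeta-\bbeta')$ so by Cauchy--Schwarz and $\|\bx\|_2\le M_X$ (Assumption A5) the mean-perturbation is $O(M_X^2\epsilon^2)$ when $\|\bbeta-\bbeta'\|_\infty\le\epsilon/2$. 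Following \cite{Xie02012020}, the logarithm of the ratio $f_0/f_F$ picks up an extra factor growing like a power of $\log(1/\epsilon)$ coming from the tail region where the mixture density is small (this is exactly where the exponent $(p+4)/4$, rather than a cleaner power, enters — controlling $\log(1/f_F)$ on the set $\{\|y\|\gtrsim a\}$ via the Gaussian lower bound costs $\log(1/\epsilon)$-type factors and the $p$ covariate dimensions enter through $a^p$). I would assemble these pieces to get the KL-divergence and KL-variance both $\lesssim \epsilon(\log(1/\epsilon))^{(p+4)/2}$, i.e.\ the squared radius of the target ball.

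The main obstacle, and the step that departs genuinely from the location-mixture case, is verifying that the separation constraints built into $F^\star$ are simultaneously (i) loose enough that the repulsive $g$-prior assigns positive, quantifiable mass to the event $\{(\bbeta_k,\sigma_k^2)\in E_k,\ \sum_k|w_k-w_k^\star|<\epsilon\}$ — this feeds the prior-concentration lower bound (\ref{eq:prior_concentration}) in the next step of Theorem~3's proof — and (ii) tight enough that the KL-approximation above goes through. Concretely, one must check that on the product set $\prod_k E_k$ the repulsive factor $h_N(\bbeta_1,\dots,\bbeta_N)=\min_{k<k'}G\big((\bbeta_k-\bbeta_{k'})^\top g(\bX^\top\bX)^{-1}(\bbeta_k-\bbeta_{k'})\big)$ is bounded below by a strictly positive constant: since $\|\bbeta_k^\star-\bbeta_{k'}^\star\|_\infty\ge 2\epsilon$ and each atom moves at most $\epsilon/2$, the Euclidean gap is $\gtrsim\epsilon$, and then Assumption A2 together with the smallest eigenvalue of $g(\bX^\top\bX)^{-1}$ gives $h_N\gtrsim c_g\,\lambda_{\min}\big(g(\bX^\top\bX)^{-1}\big)\,\epsilon^2$ on that set — positive but $\epsilon$-dependent, which is fine because the prior mass bound only needs to beat $e^{-n\underline\epsilon_n^2}$. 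I would also need the normalizing constant lower bound $Z_N\ge e^{-c_1 N}$ from Theorem~\ref{thm:NormalizationConstant} so that dividing by $Z_N$ does not destroy the estimate; since $N\lesssim(\log(1/\epsilon))^{2p}$ this costs only a $\log$-polynomial factor. The bookkeeping that makes all three requirements compatible at a single scale $\epsilon$ is the delicate part; everything else is a routine (if lengthy) Gaussian KL computation.
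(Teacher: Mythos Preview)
There is a genuine gap in your discretization step. You propose partitioning $\{\|\bbeta\|_\infty\le 2a\}\times[\underline\sigma^2,\overline\sigma^2]$ into an $\epsilon$-grid, obtaining $N\lesssim(\log(1/\epsilon))^{p/2}/\epsilon^{p+1}$ atoms, and then claim that ``a standard thinning/merging step'' to enforce the $2\epsilon$-separations reduces this to $N\lesssim(\log(1/\epsilon))^{2p}$. This cannot work: enforcing $2\epsilon$-separation on an $\epsilon$-grid removes at most a constant fraction of atoms (a factor of order $2^{p+1}$), so the count stays of order $\epsilon^{-(p+1)}$, not polylogarithmic in $1/\epsilon$. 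The polylog bound on $N$ is exactly the non-trivial input, and it does \emph{not} come from a grid. The paper (following \cite{Xie02012020} and \cite{ghosal2001entropies}) invokes the finite-approximation lemma for compactly supported mixing measures: one approximates the truncated $F_0'$ by a discrete $F^\star$ matching enough moments so that $\int\|f_{F_0'}-f_{F^\star}\|_1\,p_X(x)\,dx\le\epsilon(\log(1/\epsilon))^{p/2}$ and the second moment is preserved; because the Gaussian kernel is analytic, only $O(\log(1/\epsilon))$ moments per coordinate are needed, which yields $N\le(\log(1/\epsilon))^{2p}$ atoms. Assumption~A5 enters exactly where you say (the Cauchy--Schwarz bound $|x^\top(\bbeta_1-\bbeta_2)|\le M_X\|\bbeta_1-\bbeta_2\|$), but its role is to make the Ghosal covering argument go through \emph{uniformly in $x$}, not to set a grid scale.

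Two further remarks. First, the paper routes the argument through $L_1$/Hellinger and then converts to the KL-ball via a Wong--Shen-type lemma (Lemma~D.3 in \cite{Xie02012020}); your direct-KL approach is not wrong in principle, but the $L_1$-to-KL conversion is precisely where the extra $+4$ in the exponent $(p+4)/4$ enters cleanly, so the bookkeeping is easier that way. Second, your final paragraph --- bounding $h_N$ below on $\prod_k E_k$ and invoking $Z_N\ge e^{-c_1 N}$ --- is not part of this lemma at all. Lemma~\ref{lemma:PriorConcentration} asserts only the existence of $F^\star$ with the listed properties and the containment in the KL-ball; the prior-mass lower bound $\Pi(B(f_0,\underline\epsilon_n))\ge e^{-n\underline\epsilon_n^2}$ is the \emph{subsequent} step in the proof of Theorem~3, where the repulsive factor and normalizing constant are handled separately. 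Keeping the two steps apart also makes clear that the $\epsilon$-dependence of your lower bound on $h_N$ does not interfere with the purely approximation-theoretic statement being proved here.
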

This lemma guarantees that the prior assigns sufficient mass to KL neighborhoods of the true data-generating process, thereby completing the verification of condition (\ref{eq:prior_concentration}).

% Proof
\begin{proof}
The proof adapts the arguments of \cite{Xie02012020}, which are built upon the work of \cite{ghosal2001entropies}, to the mixture of regressions model. The crucial element for this adaptation is the use of Assumption A5 (bounded covariates) to control the approximation error uniformly over $x$.

First, following \cite{Xie02012020}, we approximate the true mixing distribution $F_0$. We define $F_0'$ as the re-normalized restriction of $F_0$ to the compact set $\{(\beta, \sigma^2) : \|\beta\| \le a \}$, with $a = b_1^{-1/2}(\log \epsilon^{-1})^{1/2}$. Assumption A1 implies that the integrated $L_1$-distance is small: $\int \|f_0(y|x) - f_{F_0'}(y|x) \|_1 p_X(x) dx \le \epsilon$.

The next step is to construct a discrete approximation $F^* = \sum_{k=1}^N w_k^* \delta_{(\beta_k^*, \sigma_k^{*2})}$ for $F_0'$. The existence of such an $F^*$ with $N \le (\log \epsilon^{-1})^{2p}$ support points relies on bounding the Hellinger distance between two kernels, $\phi(y|x^\top\beta_1, \sigma_1^2)$ and $\phi(y|x^\top\beta_2, \sigma_2^2)$. The squared Hellinger distance between these kernels is given by
\begin{equation*}
H^2(\phi(\cdot|x^\top\beta_1, \sigma_1^2), \phi(\cdot|x^\top\beta_2, \sigma_2^2)) = 1 - \sqrt{\frac{2\sigma_1\sigma_2}{\sigma_1^2+\sigma_2^2}} \exp\left(-\frac{(x^\top(\beta_1-\beta_2))^2}{4(\sigma_1^2+\sigma_2^2)}\right).
\end{equation*}
Here, we explicitly use Assumption A5. By the Cauchy-Schwarz inequality, $(x^\top(\beta_1-\beta_2))^2 \le \|x\|_2^2 \|\beta_1-\beta_2\|_2^2 \le M_X^2 \|\beta_1-\beta_2\|_2^2$. This uniform bound, which is independent of $x$, is essential. It ensures that if $\|\beta_1-\beta_2\|_2$ and $|\sigma_1^2-\sigma_2^2|$ are small, the Hellinger distance is also small, uniformly for all $x$. This allows the application of the covering number arguments from \cite{ghosal2001entropies}, guaranteeing the existence of an $F^*$ that satisfies $\int \|f_{F_0'}(y|x) - f_{F^*}(y|x) \|_1 p_X(x) dx \le \epsilon(\log \epsilon^{-1})^{p/2}$ and preserves the second moment, $\int \|\beta\|^2 dF_0' = \int \|\beta\|^2 dF^*$.

Now, let $F = \sum_{k=1}^{N}w_{k}\delta_{(\beta_{k},\sigma_{k}^{2})}$ be a distribution from the set defined in the lemma. The triangle inequality gives $\int \|f_F - f_0\|_1 p_X(x) dx \le \int \|f_F - f_{F^*}\|_1 p_X(x) dx + \int \|f_{F^*} - f_0\|_1 p_X(x) dx$. We have already bounded the second term. The first term is bounded by adapting Lemma D.4 from \cite{Xie02012020}, yielding $\int \|f_F - f_{F^*}\|_1 p_X(x) dx < 2\epsilon$. Thus, we obtain the overall bound $\int \|f_F - f_0\|_1 p_X(x) dx \le \gamma \epsilon(\log \epsilon^{-1})^{p/2}$ for some constant $\gamma > 0$.

The final step connects the $L_1$-distance to the KL-type ball, following Lemma D.3 of \cite{Xie02012020}. A key prerequisite is that the approximating distribution $F$ does not have heavy tails. We verify this by letting $B = 2(\int \|\beta\|^2 dF_0)^{1/2}$. The preservation of the second moment implies $F^*(\|\beta\| > B) \le \frac{1}{B^2} \int \|\beta\|^2 dF^* = \frac{1}{B^2} \int \|\beta\|^2 dF_0' \le 1/4$. For any $F$ in the neighborhood of $F^*$, a similar argument shows that $F(\|\beta\| > 2B) < 1/2$.
\begin{align*}
    F(\|\beta\| > 2B) = \sum_{k=1}^N w_k \mathbb{I}(\|\beta_k\| > 2B) \le \sum_{k=1}^N |w_k - w_k^*| + \sum_{k=1}^N w_k^* \mathbb{I}(\|\beta_k\| > 2B).
\end{align*}
Since $(\beta_k, \sigma_k^2) \in E_k$, we have $\|\beta_k\| > 2B$, implying $\|\beta_k^*\| > B$. 
Thus, the sum is bounded by $\epsilon + F^*(\|\beta^*\| > B) \le \epsilon + 1/4 < 1/2$ for small $\epsilon$. With this condition met, the results of \cite{wing1995probability} can be applied. The bound on the Hellinger distance, $h_n^2(f_F, f_0) \le \gamma \epsilon (\log \epsilon^{-1})^{p/2}$, implies that $f_F$ is contained in the ball $B(f_0, \eta \epsilon^{1/2}(\log\epsilon^{-1})^{(p+4)/4})$, which concludes the proof.
\end{proof}

We now verify the three conditions of Theorem~3 in \cite{kruijer2010adaptive}.
Proposition~\ref{prop:rate} has already established the first two conditions concerning the sieve complement, $\Pi(\mathcal{F}_{K_n}^c)$, and the entropy of the model space. 
It remains only to verify the prior concentration condition:
\begin{equation*}
    \Pi(B(f_0, \underline{\epsilon}_n)) \ge \exp(-n\underline{\epsilon}_n^2)
\end{equation*}
for a suitable rate $\underline{\epsilon}_n$.
By Lemma~\ref{lemma:PriorConcentration}, we know that for a sufficiently small $\epsilon > 0$, there exists a specially constructed discrete distribution $F^*$ such that its neighborhood, which we denote $\tilde{\mathcal{B}}(F^*, \epsilon)$, is contained within a KL-type ball $B(f_0, \eta\epsilon^{1/2}(\log\epsilon^{-1})^{(p+4)/4})$. Therefore, it is sufficient to find a lower bound for the prior probability of this neighborhood, $\Pi(\tilde{\mathcal{B}}(F^*, \epsilon))$.

The probability of this set can be factored into three components: the probability of having exactly $N$ components, the conditional probability of the component parameters $(\beta_k, \sigma_k^2)$ falling into the specified regions $E_k$, and the conditional probability of the weights $w_k$ being close to the target weights $w_k^*$.
\begin{equation*}
    \Pi(\tilde{\mathcal{B}}(F^*, \epsilon)) = \Pi(K=N) \cdot \Pi\left(\bigcap_{k=1}^N \{(\beta_k, \sigma_k^2) \in E_k\} \Big| K=N\right) \cdot \Pi\left(\|w-w^*\|_1 < \epsilon \Big| K=N\right),
\end{equation*}
where $N \le (\log \epsilon^{-1})^{2p}$ is the number of components in the approximating distribution $F^*$. We now bound each of these terms from below.
For the weights, Lemma A.2 in \cite{ghosal2001entropies} provides a standard lower bound for the probability of a small $l_1$-neighborhood for a Dirichlet distribution, which gives $\log \Pi(\|w-w^*\|_1 < \epsilon | K=N) \ge -N \log(\epsilon^{-1})$.
For the component parameters, their joint conditional probability is given by
\begin{equation*}
    \Pi\left(\bigcap_{k=1}^N E_k \Big| K=N\right) = \frac{1}{Z_N} \int_{\prod_{k=1}^N E_k} h_N(\beta_1, \dots, \beta_N) \prod_{k=1}^N p_\beta(\beta_k) p_{\sigma^2}(\sigma_k^2) d\beta_k d\sigma_k^2.
\end{equation*}
By construction, for any set of parameters $(\beta_1, \dots, \beta_N)$ with each $(\beta_k, \sigma_k^2) \in E_k$, the components are well-separated such that $\|\beta_k - \beta_{k'}\|_\infty > \epsilon$. Assumption A2 implies that the repulsive function is bounded below, e.g., $h_N(\beta_1, \dots, \beta_N) \ge (c_g \epsilon)$. The normalizing constant is bounded as $Z_N \le 1$. The base prior $p_\beta$ is bounded below on the support of the neighborhoods by Assumption A8, as $\|\beta_k\|$ is of order $\sqrt{\log \epsilon^{-1}}$, giving $p_\beta(\beta_k) \ge B_3 \exp(-b_3(C\sqrt{\log \epsilon^{-1}})^\alpha)$. The prior $p_{\sigma^2}$ is bounded below by a positive constant on its compact support. The volume of each $E_k$ is of order $\epsilon^{p+1}$. Combining these facts yields a lower bound for the parameter term:
\begin{equation*}
    \log \Pi\left(\bigcap_{k=1}^N E_k \Big| K=N\right) \ge -C_1 N \log(\epsilon^{-1}) - C_2 N (\log \epsilon^{-1})^{\alpha/2}
\end{equation*}
for some constants $C_1, C_2 > 0$.

For the number of components, Assumption A10 gives a lower bound on the prior probability $\Pi(K=N) = p_K(N) \ge \exp(-b_4 N \log N)$.
Combining the logarithmic bounds for all three parts, and using $N \le (\log \epsilon^{-1})^{2p}$ and $\alpha \ge 2$, the dominant term for small $\epsilon$ is determined by the parameter tails and the number of components. The overall log-prior probability is bounded by:
\begin{equation*}
    \log \Pi(\tilde{\mathcal{B}}(F^*, \epsilon)) \ge -C \left(\log \frac{1}{\epsilon}\right)^{2p + \alpha/2}
\end{equation*}
for some constant $C>0$.

Now, we set the radius of the KL-ball from Lemma~\ref{lemma:PriorConcentration} equal to our target rate $\underline{\epsilon}_n$, i.e., $\eta\epsilon^{1/2}(\log\epsilon^{-1})^{(p+4)/4} = \underline{\epsilon}_n$. This implies that $\log(\epsilon^{-1})$ is of the same order as $\log(\underline{\epsilon}_n^{-1})$. The prior concentration condition $\Pi(B(f_0, \underline{\epsilon}_n)) \ge \exp(-n\underline{\epsilon}_n^2)$ is satisfied if $n\underline{\epsilon}_n^2 \ge C' (\log(1/\underline{\epsilon}_n))^{2p+\alpha/2}$.
Letting $\underline{\epsilon}_n = (\log n)^{t_0}/\sqrt{n}$, this condition becomes
\begin{equation*}
    (\log n)^{2t_0} \ge C'' \left(\log\left(\frac{\sqrt{n}}{(\log n)^{t_0}}\right)\right)^{2p+\alpha/2} \approx C''' (\log n)^{2p+\alpha/2}.
\end{equation*}
This inequality holds if $2t_0 > 2p+\alpha/2$, which means $t_0 > p+\alpha/4$.
From Proposition 1, the overall contraction rate $\overline{\epsilon}_n = (\log n)^t/\sqrt{n}$ must satisfy $t > t_0 + 1/2$. Substituting the minimal required $t_0$ yields the final condition for the rate: $t > (p+\alpha/4) + 1/2 = p + (\alpha+2)/4$. 
This completes the proof.

%------------------------------------------%
%          Proof of Theorem 4              %
%------------------------------------------%
\section{Proof of Theorem~4}

Theorem 4 is proved by invoking the auxiliary results delineated in Lemmas \ref{lemma:aux1}–\ref{lemma:aux4}, each of which constitutes a refined adaptation of the corresponding lemmas in \cite{Xie02012020}.\\
\begin{lem}\label{lemma:aux1}
Assume the conditions of the adapted Theorem 4 hold. For $K \ge 3$, the conditional likelihood $p(\boldsymbol{y}|\boldsymbol{z}, K, \boldsymbol{X})$ is bounded above by:
\begin{align*}
p(\boldsymbol{y}|\boldsymbol{z}, K, \boldsymbol{X})
&\le \frac{1}{Z_K} \left( \prod_{k=1}^K p(\boldsymbol{y}_k | \boldsymbol{X}_k) \right) \binom{K}{2}^{-1} \\
&\quad \times \sum_{k<k'} G\left( d_M(\hat{\boldsymbol{\beta}}_k, \hat{\boldsymbol{\beta}}_{k'}) + \frac{1}{g}\mathrm{tr}\left((\boldsymbol{\Sigma}_k^{\text{post}} + \boldsymbol{\Sigma}_{k'}^{\text{post}})(\boldsymbol{X}^\top\boldsymbol{X})\right) \right)
\end{align*}
where $p(\boldsymbol{y}_k | \boldsymbol{X}_k)$ is the marginal likelihood for cluster $k$, and $\hat{\boldsymbol{\beta}}_k$ and $\boldsymbol{\Sigma}_k^{\text{post}}$ are the posterior mean and covariance of $\boldsymbol{\beta}_k$ for cluster $k$, respectively.
\end{lem}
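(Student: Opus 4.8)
The plan is to unfold the conditional likelihood as an integral of the complete-data likelihood against the repulsive $g$-prior, replace the $\min$ in $h_K$ by the pairwise average, and then collapse each pairwise term using conjugacy of the $g$-prior together with Jensen's inequality. Write $L_k(\boldsymbol{\beta}_k)=\prod_{i:z_i=k}\phi(y_i\mid\boldsymbol{x}_i^\top\boldsymbol{\beta}_k,\sigma_0^2)$ for the likelihood contribution of cluster $k$ (with $L_k\equiv 1$ for an empty cluster), and let $p_\beta$ be the $g$-prior density and $Q=g^{-1}\boldsymbol{X}^\top\boldsymbol{X}$ the metric defining $d_M$, so $d_M(\boldsymbol a,\boldsymbol b)=(\boldsymbol a-\boldsymbol b)^\top Q(\boldsymbol a-\boldsymbol b)$. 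Then
\[
p(\boldsymbol{y}\mid\boldsymbol{z},K,\boldsymbol{X})
=\frac{1}{Z_K}\int\cdots\int\Bigl[\prod_{k=1}^K L_k(\boldsymbol{\beta}_k)\,p_\beta(\boldsymbol{\beta}_k)\Bigr]\,
h_K(\boldsymbol{\beta}_1,\dots,\boldsymbol{\beta}_K)\,d\boldsymbol{\beta}_1\cdots d\boldsymbol{\beta}_K .
\]
Applying the elementary inequality $\min_{k<k'}a_{kk'}\le\binom{K}{2}^{-1}\sum_{k<k'}a_{kk'}$ to $h_K=\min_{k<k'}G(d_M(\boldsymbol{\beta}_k,\boldsymbol{\beta}_{k'}))$ and exchanging the finite sum with the integral, it suffices to bound, for each pair $k<k'$, the integral of $\bigl[\prod_l L_l(\boldsymbol{\beta}_l)p_\beta(\boldsymbol{\beta}_l)\bigr]\,G(d_M(\boldsymbol{\beta}_k,\boldsymbol{\beta}_{k'}))$.

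The next step is conjugacy: for every cluster $l$ (occupied or empty, in which case $p(\boldsymbol{y}_l\mid\boldsymbol{X}_l)=1$) the $g$-prior is conjugate to the fixed-variance Gaussian likelihood, so $L_l(\boldsymbol{\beta}_l)p_\beta(\boldsymbol{\beta}_l)=p(\boldsymbol{y}_l\mid\boldsymbol{X}_l)\,\phi(\boldsymbol{\beta}_l;\hat{\boldsymbol{\beta}}_l,\boldsymbol{\Sigma}_l^{\mathrm{post}})$. Integrating out the $K-2$ coordinates $l\neq k,k'$ produces $\prod_{l\neq k,k'}p(\boldsymbol{y}_l\mid\boldsymbol{X}_l)$, and the remaining pairwise integral equals $p(\boldsymbol{y}_k\mid\boldsymbol{X}_k)\,p(\boldsymbol{y}_{k'}\mid\boldsymbol{X}_{k'})\,\mathbb{E}\bigl[G(d_M(\boldsymbol{\beta}_k,\boldsymbol{\beta}_{k'}))\bigr]$, where now $\boldsymbol{\beta}_k\sim N(\hat{\boldsymbol{\beta}}_k,\boldsymbol{\Sigma}_k^{\mathrm{post}})$ and $\boldsymbol{\beta}_{k'}\sim N(\hat{\boldsymbol{\beta}}_{k'},\boldsymbol{\Sigma}_{k'}^{\mathrm{post}})$ are independent. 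Since $G(d)=d/(g_0+d)$ is concave and nondecreasing on $[0,\infty)$ (with $G\equiv 1$ when $g_0=0$), Jensen's inequality gives $\mathbb{E}[G(d_M(\boldsymbol{\beta}_k,\boldsymbol{\beta}_{k'}))]\le G(\mathbb{E}[d_M(\boldsymbol{\beta}_k,\boldsymbol{\beta}_{k'})])$; and with $\boldsymbol{\delta}=\boldsymbol{\beta}_k-\boldsymbol{\beta}_{k'}\sim N(\hat{\boldsymbol{\beta}}_k-\hat{\boldsymbol{\beta}}_{k'},\,\boldsymbol{\Sigma}_k^{\mathrm{post}}+\boldsymbol{\Sigma}_{k'}^{\mathrm{post}})$ the quadratic-form identity $\mathbb{E}[\boldsymbol{\delta}^\top Q\boldsymbol{\delta}]=\mathbb{E}[\boldsymbol{\delta}]^\top Q\,\mathbb{E}[\boldsymbol{\delta}]+\mathrm{tr}(Q\,\mathrm{Cov}(\boldsymbol{\delta}))$, together with cyclicity of the trace, yields exactly $d_M(\hat{\boldsymbol{\beta}}_k,\hat{\boldsymbol{\beta}}_{k'})+\tfrac{1}{g}\mathrm{tr}\bigl((\boldsymbol{\Sigma}_k^{\mathrm{post}}+\boldsymbol{\Sigma}_{k'}^{\mathrm{post}})(\boldsymbol{X}^\top\boldsymbol{X})\bigr)$.

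Finally I would reassemble: for each pair the factor $\prod_{l\neq k,k'}p(\boldsymbol{y}_l\mid\boldsymbol{X}_l)\cdot p(\boldsymbol{y}_k\mid\boldsymbol{X}_k)p(\boldsymbol{y}_{k'}\mid\boldsymbol{X}_{k'})=\prod_{l=1}^K p(\boldsymbol{y}_l\mid\boldsymbol{X}_l)$ does not depend on the pair, so it factors out of $\binom{K}{2}^{-1}\sum_{k<k'}$, and restoring the $1/Z_K$ gives the claimed inequality. The only points requiring care are verifying that $G$ is concave and nondecreasing so that Jensen runs in the intended direction — immediate for $G(d)=d/(g_0+d)$ — and a uniform treatment of empty clusters in the conjugate factorization; both are routine. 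I therefore expect no serious obstacle: the substance of the lemma is precisely the min-to-average relaxation followed by one application of Jensen against the concave $G$, and the most error-prone part is merely keeping track of which Gaussian moments enter the trace term.
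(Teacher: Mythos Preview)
Your proposal is correct and follows essentially the same route as the paper: factor the conditional likelihood into $\tfrac{1}{Z_K}\prod_k p(\boldsymbol{y}_k\mid\boldsymbol{X}_k)$ times a posterior expectation of $h_K$, relax $\min$ to the pairwise average, apply Jensen for the concave $G$, and evaluate the expected Mahalanobis quadratic form under the independent Gaussian posteriors. The only cosmetic difference is that the paper passes through the whitened variables $\boldsymbol{\eta}_k=(g\sigma_0^2)^{-1/2}(\boldsymbol{X}^\top\boldsymbol{X})^{1/2}\boldsymbol{\beta}_k$ before computing the second moment, whereas you work directly with the quadratic-form identity $\mathbb{E}[\boldsymbol{\delta}^\top Q\boldsymbol{\delta}]=\mathbb{E}[\boldsymbol{\delta}]^\top Q\,\mathbb{E}[\boldsymbol{\delta}]+\mathrm{tr}(Q\,\mathrm{Cov}(\boldsymbol{\delta}))$; both yield the same trace term and neither offers a substantive advantage.
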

\begin{lem}\label{lemma:aux2}
Assume the conditions of the adapted Theorem 4 hold. The marginal likelihood $p(\boldsymbol{y}|\boldsymbol{X})$ is bounded below. For the repulsive function $h_K = \min(G(\cdot))$, the bound is:
\[
p(\boldsymbol{y}|\boldsymbol{X}) \ge C(\lambda, \boldsymbol{X}) \left( \prod_{i=1}^n \phi(y_i|0, \sigma_0^2(1+g x_i^\top(\boldsymbol{X}^\top\boldsymbol{X})^{-1}x_i)) \right) \left(1 + \delta(g,\boldsymbol{X}) g_0^{2/3}\right)^{-3/2}
\]
where $C(\lambda, \boldsymbol{X})$ is a constant and $\delta(g,\boldsymbol{X})$ depends on the $g$-prior and design matrix, satisfying $\delta(g,\boldsymbol{X}) < 1$ for a weakly informative $g$-prior (i.e., large $g$).
\end{lem}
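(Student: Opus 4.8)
The plan is to start from the evidence decomposition
$$
p(\boldsymbol{y}\mid\boldsymbol{X})=\sum_{K\ge1}p(K)\sum_{\boldsymbol{z}}p(\boldsymbol{z}\mid K)\,\frac{1}{Z_K}\int h_K(\boldsymbol{\beta}_1,\dots,\boldsymbol{\beta}_K)\prod_{k=1}^{K}\Big[p(\boldsymbol{\beta}_k)\prod_{i:\,z_i=k}\phi(y_i\mid x_i^{\top}\boldsymbol{\beta}_k,\sigma_0^2)\Big]\,d\boldsymbol{\beta},
$$
and to exploit the specific prior $p(K)\propto Z_K\lambda^K/K!$: the factor $Z_K$ cancels the $1/Z_K$ carried by the repulsive prior, so $p(\boldsymbol{y}\mid\boldsymbol{X})=C_\lambda^{-1}\sum_{K\ge1}(\lambda^K/K!)\sum_{\boldsymbol{z}}p(\boldsymbol{z}\mid K)\int h_K\prod_k[p(\boldsymbol{\beta}_k)\phi(\cdots)]\,d\boldsymbol{\beta}$, where $C_\lambda=\sum_{K\ge1}Z_K\lambda^K/K!$ lies between two positive constants ($Z_K\le1$ for the upper bound, $Z_K\ge e^{-c_1K}$ from Theorem~\ref{thm:NormalizationConstant} for the lower bound). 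Every summand being nonnegative, I would keep only the $K=2$ term with the allocation $\boldsymbol{z}^{\star}$ that puts all $n$ observations in cluster $1$ and leaves cluster $2$ empty; then $h_2(\boldsymbol{\beta}_1,\boldsymbol{\beta}_2)=G(d_M(\boldsymbol{\beta}_1,\boldsymbol{\beta}_2))$, and the prefactor $(\lambda^2/2C_\lambda)\,p(\boldsymbol{z}^{\star}\mid2)$ is a fixed positive number depending only on $\lambda,n,\alpha$, to be absorbed into $C(\lambda,\boldsymbol{X})$.

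Next I would evaluate the remaining integral $\iint G(d_M(\boldsymbol{\beta}_1,\boldsymbol{\beta}_2))\,p(\boldsymbol{\beta}_1)p(\boldsymbol{\beta}_2)\prod_i\phi(y_i\mid x_i^{\top}\boldsymbol{\beta}_1,\sigma_0^2)\,d\boldsymbol{\beta}_1 d\boldsymbol{\beta}_2$ sequentially. Integrating out $\boldsymbol{\beta}_2$ gives $\bar G(\boldsymbol{\beta}_1):=\int G(d_M(\boldsymbol{\beta}_1,\boldsymbol{\beta}_2))\,p(\boldsymbol{\beta}_2)\,d\boldsymbol{\beta}_2$; since $d_M$ is the squared distance in the $g^{-1}\boldsymbol{X}^{\top}\boldsymbol{X}$ metric and $p(\boldsymbol{\beta}_2)$ is the $g$-prior, the product of the metric and the prior covariance is $\sigma_0^2 I_p$, so $d_M(\boldsymbol{\beta}_1,\boldsymbol{\beta}_2)$ as a function of $\boldsymbol{\beta}_2$ is a scaled noncentral chi-square, $\bar G(\boldsymbol{\beta}_1)=\mathbb{E}[G(\sigma_0^2 W_{\boldsymbol{\beta}_1})]$ with $W_{\boldsymbol{\beta}_1}\sim\chi^2_p(\|\boldsymbol{\beta}_1\|_{g^{-1}\boldsymbol{X}^{\top}\boldsymbol{X}}^2/\sigma_0^2)$. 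Completing the square in $\boldsymbol{\beta}_1$ then factors the residual integral as the single-cluster $g$-prior evidence times $\mathbb{E}_{\boldsymbol{\beta}_1\sim N(\hat{\boldsymbol{\beta}}_1,\boldsymbol{\Sigma}_1^{\mathrm{post}})}[\bar G(\boldsymbol{\beta}_1)]$, with $\hat{\boldsymbol{\beta}}_1=\tfrac{g}{1+g}\hat{\boldsymbol{\beta}}_{\mathrm{OLS}}$ and $\boldsymbol{\Sigma}_1^{\mathrm{post}}=\tfrac{g}{1+g}\sigma_0^2(\boldsymbol{X}^{\top}\boldsymbol{X})^{-1}$. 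The single-cluster evidence is the $N_n\big(0,\sigma_0^2(I+g\boldsymbol{H})\big)$ density at $\boldsymbol{y}$, with $\boldsymbol{H}=\boldsymbol{X}(\boldsymbol{X}^{\top}\boldsymbol{X})^{-1}\boldsymbol{X}^{\top}$ the hat matrix and $h_{ii}$ its diagonal; I would relate it to the product form $\prod_i\phi\big(y_i\mid0,\sigma_0^2(1+g h_{ii})\big)$ via Hadamard's inequality on the normalizers ($\det(I+g\boldsymbol{H})=(1+g)^p\le\prod_i(1+g h_{ii})$) and the leverage bound $h_{ii}\le1$ to control the quadratic-form term, folding the residual ratio into $C(\lambda,\boldsymbol{X})$, as in the location-mixture argument of \cite{Xie02012020}.

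It then remains to lower bound $\mathbb{E}_{\boldsymbol{\beta}_1\sim N(\hat{\boldsymbol{\beta}}_1,\boldsymbol{\Sigma}_1^{\mathrm{post}})}[\bar G(\boldsymbol{\beta}_1)]$, and this produces the factor $(1+\delta(g,\boldsymbol{X})g_0^{2/3})^{-3/2}$. Because $G(d)=d/(g_0+d)$ is increasing with $G(0)=0$, for any threshold $t>0$ one has $\bar G(\boldsymbol{\beta}_1)\ge G(\sigma_0^2 t)\,\mathbb{P}(W_{\boldsymbol{\beta}_1}\ge t)$; averaging over $\boldsymbol{\beta}_1$ reduces the problem to a small-ball estimate $\mathbb{P}(W\le t)\lesssim t^{p/2}$ for a generalized chi-square variable $W$ whose law depends only on $(g,\boldsymbol{X})$ through $\hat{\boldsymbol{\beta}}_{\mathrm{OLS}}$ and $\boldsymbol{X}^{\top}\boldsymbol{X}$. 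Balancing the truncation loss against the shrinkage loss $g_0/(g_0+\sigma_0^2 t)$ forces $t\asymp g_0^{2/3}$ and yields the stated bound, with $\chi(0,\cdot)=1$ since $G\to1$ as $g_0\to0$, and $\delta(g,\boldsymbol{X})<1$ for large $g$ because the diffuse $g$-prior shifts the relevant chi-square mass away from the origin (so $G$ is close to $1$ and the shrinkage is negligible). Collecting the fixed prefactor, the evidence-to-product ratio, and this bound gives the claimed inequality.

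The step I expect to be the main obstacle is the last two points combined: verifying that the particular balancing of truncation against shrinkage genuinely produces the exponents $2/3$ and $3/2$ rather than cruder powers, and checking throughout that every quantity absorbed into $C(\lambda,\boldsymbol{X})$ and $\delta(g,\boldsymbol{X})$ --- in particular in the comparison between the Gaussian single-cluster evidence and the diagonal product form --- is free of $\boldsymbol{y}$, since Theorem~\ref{thm:ShrinkageRate} requires these to be deterministic before the outer $\mathbb{E}_{f_0}$ is applied.
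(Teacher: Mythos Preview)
Your worry at the end is exactly the point where the argument breaks. By keeping only the $K=2$ term with all observations in one cluster, you reduce to the single-cluster $g$-prior evidence, the density of $N_n\big(0,\sigma_0^2(I+g\boldsymbol{H})\big)$ at $\boldsymbol{y}$, and then need this bounded below by $C\prod_i\phi\big(y_i\mid0,\sigma_0^2(1+gh_{ii})\big)$ with $C$ free of $\boldsymbol{y}$. That would require $(I+g\boldsymbol{H})^{-1}\preceq\mathrm{diag}(1+gh_{ii})^{-1}$, which is false in general: with $n=2$, $\boldsymbol{X}=(1,1)^\top$ and $\boldsymbol{y}=t(1,-1)^\top$, one has $\boldsymbol{y}^\top(I+g\boldsymbol{H})^{-1}\boldsymbol{y}-\sum_iy_i^2/(1+gh_{ii})=t^2 g/(1+g/2)\to\infty$, so the density ratio tends to zero as $t\to\infty$. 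Hadamard handles only the determinant factor; the quadratic-form discrepancy is genuinely $\boldsymbol{y}$-dependent and cannot be folded into $C(\lambda,\boldsymbol{X})$.

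The paper avoids this by a different mechanism: it never isolates a single $(K,\boldsymbol{z})$ but applies Jensen to $\log p(\boldsymbol{y}\mid\boldsymbol{z},K,\boldsymbol{X})$ under the \emph{prior} on $(\boldsymbol{\beta}_1,\dots,\boldsymbol{\beta}_K)$, giving $-\log Z_K+\mathbb{E}_{\mathrm{prior}}[\log h_K]+\sum_i\mathbb{E}_{\mathrm{prior}}[\log\phi(y_i\mid x_i^\top\boldsymbol{\beta}_{z_i},\sigma_0^2)]$. Because $\log$ of the product splits \emph{before} $\boldsymbol{\beta}$ is integrated, each term involves only the scalar $y_i$ and the cross-observation correlation induced by a shared $\boldsymbol{\beta}$ never materialises; the product form then comes out with a $\boldsymbol{y}$-free remainder uniformly in $\boldsymbol{z}$, and one can sum over all $K$. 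The $2/3$ and $3/2$ exponents also arise on this route, not by truncation--balancing but from $-\log h_K=\log\max_{k<k'}(1+g_0/d_M)\le\tfrac{3}{2}\log\bigl(1+\sum_{k<k'}(g_0/d_M)^{2/3}\bigr)$ followed by Jensen and the finiteness of $\mathbb{E}_{\mathrm{prior}}[d_M^{-2/3}]$; this needs $\mathbb{E}[\log h_K]$ rather than $\mathbb{E}[G]$, which is a second reason the log-Jensen device is the right starting point here.
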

\begin{lem}\label{lemma:aux3}
Assume the conditions of the adapted Theorem 4 hold. The integral of the likelihood ratio with respect to the true data generating process is bounded by:
\begin{align*}
&\int \frac{p(\boldsymbol{y}|\boldsymbol{z},K,\boldsymbol{X})}{p(\boldsymbol{y}|\boldsymbol{X})} \left(\prod_{i=1}^n \phi(y_i|\boldsymbol{x}_i^\top \boldsymbol{m}_i, \sigma_0^2)\right) d\boldsymbol{y} \\
&\le C(\lambda,\boldsymbol{X})\frac{\omega(g_0,\boldsymbol{X})}{Z_K} \binom{K}{2}^{-1} \sum_{k<k'} G\left(d_M(\tilde{\boldsymbol{m}}_k, \tilde{\boldsymbol{m}}_{k'}) + C_1\right)
\end{align*}
where $\omega(g_0,\boldsymbol{X})$ is the shrinkage term from Lemma~\ref{lemma:aux2}, $C(\lambda,\boldsymbol{X})$ and $C_1$ are constants, and $\tilde{\boldsymbol{m}}_k$ is a weighted average of the true parameters $\boldsymbol{m}_i$ for observations in cluster $k$.
\end{lem}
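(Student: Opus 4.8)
The plan is to chain Lemmas~\ref{lemma:aux1} and~\ref{lemma:aux2} and then reduce the remaining integral to a one-dimensional concavity estimate. First I would substitute the upper bound of Lemma~\ref{lemma:aux1} for the numerator $p(\boldsymbol{y}\mid\boldsymbol{z},K,\boldsymbol{X})$ and the lower bound of Lemma~\ref{lemma:aux2} for the denominator $p(\boldsymbol{y}\mid\boldsymbol{X})$. After this substitution the constant $C(\lambda,\boldsymbol{X})$ and the factor $\omega(g_0,\boldsymbol{X})$, obtained by inverting the $(1+\delta(g,\boldsymbol{X})g_0^{2/3})^{-3/2}$ term of Lemma~\ref{lemma:aux2}, factor out; the one-component $g$-prior density $\prod_{i=1}^n\phi(y_i\mid 0,\sigma_0^2(1+g x_i^\top(\boldsymbol{X}^\top\boldsymbol{X})^{-1}x_i))$ now sits in the denominator and, together with the cluster marginals $\prod_k p(\boldsymbol{y}_k\mid\boldsymbol{X}_k)$ from Lemma~\ref{lemma:aux1} and the true within-cluster density $\prod_i\phi(y_i\mid\boldsymbol{x}_i^\top\boldsymbol{m}_i,\sigma_0^2)$, assembles into a finite ``twisted-Gaussian'' measure $\nu(d\boldsymbol{y})$ on $\mathbb{R}^n$ (a product over clusters of Gaussians, one factor per cluster). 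The likelihood-ratio integral is then bounded by $\frac{C(\lambda,\boldsymbol{X})\,\omega(g_0,\boldsymbol{X})}{Z_K}\binom{K}{2}^{-1}\sum_{k<k'}\int G\!\big(d_M(\hat{\boldsymbol{\beta}}_k,\hat{\boldsymbol{\beta}}_{k'})+\tfrac{1}{g}\mathrm{tr}\big((\boldsymbol{\Sigma}_k^{\text{post}}+\boldsymbol{\Sigma}_{k'}^{\text{post}})(\boldsymbol{X}^\top\boldsymbol{X})\big)\big)\,\nu(d\boldsymbol{y})$, where $\boldsymbol{\Sigma}_k^{\text{post}}=\sigma_0^2(\boldsymbol{X}_k^\top\boldsymbol{X}_k+g^{-1}\boldsymbol{X}^\top\boldsymbol{X})^{-1}$ is deterministic given $\boldsymbol{z}$.

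The core step is to move the integral inside $G$. Since $G(d)=d/(g_0+d)$ is increasing and concave on $[0,\infty)$, Jensen's inequality applied to the normalized measure $\bar\nu=\nu/\nu(\mathbb{R}^n)$ gives, for each pair $(k,k')$, $\int G(D)\,d\nu\le\nu(\mathbb{R}^n)\,G\!\big(\E_{\bar\nu}[D]\big)$, provided the total mass $\nu(\mathbb{R}^n)$ is finite and bounded by a constant depending only on $\boldsymbol{X}$; this reduces to checking that the precision of $\bar\nu$ restricted to each cluster, $\Omega_k^{-1}+\sigma_0^{-2}\boldsymbol{I}-D_k^{-1}$ with $\Omega_k$ the $g$-prior cluster covariance and $D_k$ the diagonal one-observation covariance, is positive definite. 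I would then compute $\E_{\bar\nu}$ of the bracketed quantity. Because $\hat{\boldsymbol{\beta}}_k=(\boldsymbol{X}_k^\top\boldsymbol{X}_k+g^{-1}\boldsymbol{X}^\top\boldsymbol{X})^{-1}\boldsymbol{X}_k^\top\boldsymbol{y}_k$ is affine in $\boldsymbol{y}_k$ and $\boldsymbol{y}_k$ is Gaussian under $\bar\nu$ with mean a fixed linear transform of $(\boldsymbol{x}_i^\top\boldsymbol{m}_i)_{i\in c_k}$, a bias-variance decomposition gives $\E_{\bar\nu}[d_M(\hat{\boldsymbol{\beta}}_k,\hat{\boldsymbol{\beta}}_{k'})]=d_M(\tilde{\boldsymbol{m}}_k,\tilde{\boldsymbol{m}}_{k'})+(\text{variance terms})$, where $\tilde{\boldsymbol{m}}_k$ is exactly the design-weighted average of the true $\boldsymbol{m}_i$ over $c_k$; the variance terms, being deterministic, combine with the deterministic $\tfrac{1}{g}\mathrm{tr}(\cdot)$ correction carried over from Lemma~\ref{lemma:aux1} and are dominated by a finite constant $C_1=C_1(\sigma_0^2,g,\boldsymbol{X})$. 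Monotonicity of $G$ then upgrades $G(\E_{\bar\nu}[\cdot])$ to $G(d_M(\tilde{\boldsymbol{m}}_k,\tilde{\boldsymbol{m}}_{k'})+C_1)$, and summing over the $\binom{K}{2}$ pairs while absorbing $\nu(\mathbb{R}^n)$ into $C(\lambda,\boldsymbol{X})$ yields the stated inequality.

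The hard part will be the measure-theoretic bookkeeping around $\nu$: one must show it is genuinely integrable --- the subtracted $D_k^{-1}$ in its precision is a near-borderline term, so the argument has to exploit that $\Omega_k\succeq\sigma_0^2\boldsymbol{I}$ with strict slack whenever the design is non-degenerate --- and that its normalizer is controlled \emph{uniformly} over the partition $\boldsymbol{z}$ (and hence over $K$), so that it can be folded into $C(\lambda,\boldsymbol{X})$ without leaking a $K$-dependence. A secondary subtlety is that the mean of $\boldsymbol{y}_k$ under $\bar\nu$ is not of the form $\boldsymbol{X}_k\boldsymbol{\mu}$ for a single vector $\boldsymbol{\mu}$, because the true means $\boldsymbol{x}_i^\top\boldsymbol{m}_i$ vary within a cluster; the passage to $\tilde{\boldsymbol{m}}_k$ is therefore an oblique projection in the $g$-metric, and one has to verify that the residual after projection is itself bounded and is absorbed into $C_1$. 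Everything else --- the factoring out of $\omega(g_0,\boldsymbol{X})$, $C(\lambda,\boldsymbol{X})$ and $Z_K^{-1}$, and the Gaussian moment computations for $\hat{\boldsymbol{\beta}}_k$ --- is routine once these two points are in place.
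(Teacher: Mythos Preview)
Your proposal is correct and follows essentially the same route as the paper's proof: substitute the Lemma~\ref{lemma:aux1} upper bound and the Lemma~\ref{lemma:aux2} lower bound, assemble the resulting ratio of Gaussian densities together with the true likelihood into a single (possibly unnormalized) Gaussian measure in $\boldsymbol{y}$, apply Jensen's inequality using the concavity of $G$ to push the integral inside, and finally identify the mean of $\hat{\boldsymbol{\beta}}_k$ under that measure with the design-weighted average $\tilde{\boldsymbol{m}}_k$ while collecting the deterministic variance contributions into $C_1$. The paper carries out exactly this chain, writing the twisted measure as ``the normalized density product $\tfrac{\prod p(\boldsymbol{y}_k|\boldsymbol{X}_k)}{\prod p(y_i|\boldsymbol{x}_i)}\mathcal{L}(\boldsymbol{y},\boldsymbol{m})$'' and then invoking Jensen and linearity of $\hat{\boldsymbol{\beta}}_k(\boldsymbol{y})$; your $\nu$ and $\bar\nu$ are just more explicit names for the same objects, and your bias--variance decomposition is exactly what the paper means by ``the expectation of the argument of $G$ is $d_M(\tilde{\boldsymbol{m}}_k,\tilde{\boldsymbol{m}}_{k'})+\text{const}$''. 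The two technical worries you flag --- finiteness of $\nu(\mathbb{R}^n)$ with a $\boldsymbol{z}$-uniform bound, and the oblique projection when the within-cluster true means $\boldsymbol{x}_i^\top\boldsymbol{m}_i$ are not of the form $\boldsymbol{X}_k\boldsymbol{\mu}$ --- are real, but the paper does not resolve them any more carefully than you do: it simply asserts that ``the integration over $\boldsymbol{y}$ can be performed'' following \cite{Xie02012020} and absorbs the normalizer into $C'(\lambda,\boldsymbol{X})$.
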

\begin{lem}\label{lemma:aux4}
Assume the conditions of the adapted Theorem 4 hold. The expected squared Mahalanobis-like distance, averaged over the true parameter distribution $F_0$ and the cluster assignment distribution, is given by:
\begin{align*}
    \mathbb{E}_{\boldsymbol{z}} \left[ \mathbb{E}_{F_0}\left[ d_M(\tilde{\boldsymbol{m}}_k, \tilde{\boldsymbol{m}}_{k'}) \right] \right] &= \frac{n_k+n_{k'}}{n_k n_{k'}} \mathbb{E}_{F_0}\left[\boldsymbol{m}^\top (g^{-1}(\boldsymbol{X}^\top\boldsymbol{X})) \boldsymbol{m}\right] + \mathrm{o}(1)\\
    &\approx \frac{2n}{K} \mathbb{E}_{F_0}[\boldsymbol{m}^\top (g^{-1}(\boldsymbol{X}^\top\boldsymbol{X})) \boldsymbol{m}]
\end{align*}
\end{lem}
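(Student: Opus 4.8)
The plan is to read $\tilde{\boldsymbol{m}}_k$ as the noise-averaged version of the cluster-$k$ posterior mean $\hat{\boldsymbol{\beta}}_k$ appearing in Lemma~\ref{lemma:aux1} and to evaluate the expected distance by a variance decomposition. Under the true model $y_i = \boldsymbol{x}_i^\top \boldsymbol{m}_i + \epsilon_i$ with $\epsilon_i \sim N(0,\sigma_0^2)$, taking $\E_{\boldsymbol{y}}$ of the ridge-type posterior mean $\hat{\boldsymbol{\beta}}_k = (\boldsymbol{X}_k^\top\boldsymbol{X}_k + g^{-1}\boldsymbol{X}^\top\boldsymbol{X})^{-1}\boldsymbol{X}_k^\top\boldsymbol{y}_k$ gives
\[
\tilde{\boldsymbol{m}}_k = \bigl(\boldsymbol{X}_k^\top\boldsymbol{X}_k + g^{-1}\boldsymbol{X}^\top\boldsymbol{X}\bigr)^{-1}\sum_{i:\,z_i=k}\boldsymbol{x}_i\boldsymbol{x}_i^\top\boldsymbol{m}_i,
\]
a matrix-weighted average of the true parameters $\{\boldsymbol{m}_i : z_i = k\}$. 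I would first record this identity and observe that, in the weakly informative regime ($g$ large) where Lemma~\ref{lemma:aux2} operates, the ridge term $g^{-1}\boldsymbol{X}^\top\boldsymbol{X}$ is negligible beside $\boldsymbol{X}_k^\top\boldsymbol{X}_k$, so that $\tilde{\boldsymbol{m}}_k$ behaves to leading order like an ordinary average of $n_k$ exchangeable terms, the remainder feeding only the $\mathrm{o}(1)$.

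Next I would expand the squared Mahalanobis-like distance through the trace identity
\[
d_M(\tilde{\boldsymbol{m}}_k,\tilde{\boldsymbol{m}}_{k'}) = \mathrm{tr}\!\bigl( M\,(\tilde{\boldsymbol{m}}_k-\tilde{\boldsymbol{m}}_{k'})(\tilde{\boldsymbol{m}}_k-\tilde{\boldsymbol{m}}_{k'})^\top\bigr), \qquad M = g^{-1}(\boldsymbol{X}^\top\boldsymbol{X}),
\]
and take $\E_{F_0}$. Because clusters $k$ and $k'$ are built from disjoint index sets and the $\boldsymbol{m}_i$ are exchangeable under $F_0$ with a common marginal mean, the cross-cluster covariance vanishes and the mean-difference term cancels, leaving $\E_{F_0}[(\tilde{\boldsymbol{m}}_k-\tilde{\boldsymbol{m}}_{k'})(\cdots)^\top] = \mathrm{Var}(\tilde{\boldsymbol{m}}_k)+\mathrm{Var}(\tilde{\boldsymbol{m}}_{k'})$. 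Each per-cluster covariance is of order $n_k^{-1}$ times the marginal second-moment matrix $\Sigma_m=\E_{F_0}[\boldsymbol{m}\boldsymbol{m}^\top]$, whence $\E_{F_0}[d_M] = \tfrac{n_k+n_{k'}}{n_k n_{k'}}\,\mathrm{tr}(M\Sigma_m)+\mathrm{o}(1)$, and $\mathrm{tr}(M\Sigma_m)=\E_{F_0}[\boldsymbol{m}^\top g^{-1}(\boldsymbol{X}^\top\boldsymbol{X})\boldsymbol{m}]$. This yields the first line of the claim. I would then take $\E_{\boldsymbol{z}}$ over the cluster-assignment law induced by the symmetric Dirichlet/MFM prior: by label exchangeability $\E_{\boldsymbol{z}}[1/n_k]$ is common across $k$, and under the balanced regime implied by the prior the occupied cluster sizes concentrate around $n/K$, so substituting the typical size $n_k\approx n/K$ into $n_k^{-1}+n_{k'}^{-1}$ via $\E_{\boldsymbol{z}}[1/n_k]\approx K/n$ collapses the size factors to the cluster-count scaling recorded in the second line.

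The main obstacle, and where I would concentrate the work, lies in controlling the two approximations buried in the $\mathrm{o}(1)$ and the $\approx$. The first is showing that the matrix weights $(\boldsymbol{X}_k^\top\boldsymbol{X}_k+g^{-1}\boldsymbol{X}^\top\boldsymbol{X})^{-1}\boldsymbol{X}_k^\top\boldsymbol{X}_k$ reduce $\mathrm{Var}(\tilde{\boldsymbol{m}}_k)$ to $n_k^{-1}\Sigma_m$ up to a vanishing correction; this needs the covariate bound of Assumption~A5, so that the $\boldsymbol{x}_i\boldsymbol{x}_i^\top$ are uniformly controlled, together with the large-$g$ condition of Lemma~\ref{lemma:aux2} to discard the ridge contribution. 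A subsidiary point here is the identification $\mathrm{tr}(M\,\mathrm{Cov}_{F_0}(\boldsymbol{m}))=\E_{F_0}[\boldsymbol{m}^\top M\boldsymbol{m}]$, which is exact only when the true coefficients are centered and otherwise leaves a mean term $\boldsymbol{\mu}^\top M\boldsymbol{\mu}$ that must be absorbed. The second obstacle is the moment computation $\E_{\boldsymbol{z}}[1/n_k]\to K/n$, which requires a lower-tail bound on the occupied cluster sizes under the Dirichlet weights to guarantee that $1/n_k$ is integrable and governed by the typical size; I would handle this with a concentration argument on $n_k$ for non-empty clusters, mirroring the corresponding step in \cite{Xie02012020}.
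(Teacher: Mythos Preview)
Your approach is essentially the same as the paper's: both approximate $\tilde{\boldsymbol{m}}_k$ by the simple cluster average $\bar{\boldsymbol{m}}_k = n_k^{-1}\sum_{i\in C_k}\boldsymbol{m}_i$ (absorbing the ridge correction into the $\mathrm{o}(1)$), then use the i.i.d./exchangeable structure of the $\boldsymbol{m}_i$ together with disjointness of the clusters to reduce $\E_{F_0}[d_M(\bar{\boldsymbol{m}}_k,\bar{\boldsymbol{m}}_{k'})]$ to $(n_k^{-1}+n_{k'}^{-1})\,\mathrm{tr}(M\Sigma_m)$ via a variance decomposition. The paper carries this out through the whitening transform $\boldsymbol{\eta}_{(\boldsymbol{m})}=(g\sigma_0^2)^{-1/2}(\boldsymbol{X}^\top\boldsymbol{X})^{1/2}\boldsymbol{m}$ rather than your trace identity, explicitly assumes $\E_{F_0}[\boldsymbol{m}]=\boldsymbol{0}$ (precisely the centering issue you flag), and does not detail the $\E_{\boldsymbol{z}}$ step at all---so your treatment is, if anything, more thorough on the approximation side.
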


\subsection{Proofs of preliminary lemmas}
% S.1
\begin{proof}[Proof of Lemma~\ref{lemma:aux1}]
The conditional marginal likelihood can be expressed as the product of the marginal likelihoods for each cluster and the posterior expectation of the repulsive function $h_K$:
\begin{equation} \label{eq:1}
    p(\boldsymbol{y} | \boldsymbol{z}, K, \boldsymbol{X}) = \frac{1}{Z_K} \left( \prod_{k=1}^K p(\boldsymbol{y}_k|\boldsymbol{X}_k) \right) \mathbb{E}_{\text{post}}[h_K(\boldsymbol{\beta}_1, \dots, \boldsymbol{\beta}_K)].
\end{equation}
Let us define the transformed parameter $\boldsymbol{\eta}_k = (g\sigma_0^2)^{-1/2}(\boldsymbol{X}^\top\boldsymbol{X})^{1/2}\boldsymbol{\beta}_k$. The prior for $\boldsymbol{\eta}_k$ is $N(\boldsymbol{0}, \boldsymbol{I}_p)$. 

The repulsive function is $h_K = \min_{1\le k < k' \le K} G(d_M(\boldsymbol{\beta}_k, \boldsymbol{\beta}_{k'}))$. We can rewrite this using $\boldsymbol{\eta}_k$ as $h_K = \min_{1\le k < k' \le K} G(\sigma_0^2 \|\boldsymbol{\eta}_k - \boldsymbol{\eta}_{k'}\|^2)$. Let's define an auxiliary function $G^*(x) = G(\sigma_0^2 x)$. Since $G$ is concave, $G^*$ is also concave.

We bound the posterior expectation of $h_K$ using the inequality $\min(a_i) \le \text{mean}(a_i)$ and Jensen's inequality for the concave function $G^*$:
\begin{align*}
    \mathbb{E}_{\text{post}}[h_K(\boldsymbol{\beta}_1, \dots, \boldsymbol{\beta}_K)] &= \mathbb{E}_{\text{post}}\left[\min_{1\le k < k' \le K} G^*(\left\|\boldsymbol{\eta}_k - \boldsymbol{\eta}_{k'}\right\|^2)\right] \\
    &\le \binom{K}{2}^{-1} \sum_{k<k'} \mathbb{E}_{\text{post}}\left[G^*(\left\|\boldsymbol{\eta}_k - \boldsymbol{\eta}_{k'}\right\|^2)\right] \\
    &\le \binom{K}{2}^{-1} \sum_{k<k'} G^*\left(\mathbb{E}_{\text{post}}[\left\|\boldsymbol{\eta}_k - \boldsymbol{\eta}_{k'}\right\|^2]\right).
\end{align*}
The argument of $G^*$ is the posterior second moment of the distance between the transformed parameters. We can decompose this as:
\begin{equation} \label{eq:2}
    \mathbb{E}_{\text{post}}[\left\|\boldsymbol{\eta}_k - \boldsymbol{\eta}_{k'}\right\|^2] = \left\|\mathbb{E}_{\text{post}}[\boldsymbol{\eta}_k] - \mathbb{E}_{\text{post}}[\boldsymbol{\eta}_{k'}]\right\|^2 + \mathrm{tr}(\mathrm{Var}_{\text{post}}(\boldsymbol{\eta}_k)) + \mathrm{tr}(\mathrm{Var}_{\text{post}}(\boldsymbol{\eta}_{k'})),
\end{equation}
where we have used the posterior independence of $\boldsymbol{\eta}_k$ and $\boldsymbol{\eta}_{k'}$.

The posterior moments of $\boldsymbol{\eta}_k$ are related to the posterior moments of $\boldsymbol{\beta}_k$ (denoted $\hat{\boldsymbol{\beta}}_k$ and $\boldsymbol{\Sigma}_k^{\text{post}}$) as follows:
\begin{align*}
    \mathbb{E}_{\text{post}}[\boldsymbol{\eta}_k] &= (g\sigma_0^2)^{-1/2}(\boldsymbol{X}^\top\boldsymbol{X})^{1/2}\hat{\boldsymbol{\beta}}_k \\
    \mathrm{Var}_{\text{post}}(\boldsymbol{\eta}_k) &= (g\sigma_0^2)^{-1}(\boldsymbol{X}^\top\boldsymbol{X})^{1/2}\boldsymbol{\Sigma}_k^{\text{post}}(\boldsymbol{X}^\top\boldsymbol{X})^{1/2}.
\end{align*}
Substituting these into the terms of Equation \eqref{eq:2}:
\begin{align*}
    \left\|\mathbb{E}_{\text{post}}[\boldsymbol{\eta}_k] - \mathbb{E}_{\text{post}}[\boldsymbol{\eta}_{k'}]\right\|^2 &= \left\| (g\sigma_0^2)^{-1/2}(\boldsymbol{X}^\top\boldsymbol{X})^{1/2}(\hat{\boldsymbol{\beta}}_k - \hat{\boldsymbol{\beta}}_{k'}) \right\|^2 \\
    &= (g\sigma_0^2)^{-1}(\hat{\boldsymbol{\beta}}_k - \hat{\boldsymbol{\beta}}_{k'})^\top (\boldsymbol{X}^\top\boldsymbol{X}) (\hat{\boldsymbol{\beta}}_k - \hat{\boldsymbol{\beta}}_{k'}) \\
    &= \frac{1}{\sigma_0^2}d_M(\hat{\boldsymbol{\beta}}_k, \hat{\boldsymbol{\beta}}_{k'}).
\end{align*}
And for the trace term:
\begin{align*}
    \mathrm{tr}(\mathrm{Var}_{\text{post}}(\boldsymbol{\eta}_k)) &= \mathrm{tr}\left( (g\sigma_0^2)^{-1}(\boldsymbol{X}^\top\boldsymbol{X})^{1/2}\boldsymbol{\Sigma}_k^{\text{post}}(\boldsymbol{X}^\top\boldsymbol{X})^{1/2} \right) \\
    &= (g\sigma_0^2)^{-1} \mathrm{tr}(\boldsymbol{\Sigma}_k^{\text{post}}(\boldsymbol{X}^\top\boldsymbol{X})).
\end{align*}
Plugging these back into Equation \eqref{eq:2} gives:
\[
    \mathbb{E}_{\text{post}}[\left\|\boldsymbol{\eta}_k - \boldsymbol{\eta}_{k'}\right\|^2] = \frac{1}{\sigma_0^2} \left( d_M(\hat{\boldsymbol{\beta}}_k, \hat{\boldsymbol{\beta}}_{k'}) + \frac{1}{g} \mathrm{tr}((\boldsymbol{\Sigma}_k^{\text{post}} + \boldsymbol{\Sigma}_{k'}^{\text{post}})(\boldsymbol{X}^\top\boldsymbol{X})) \right).
\]
Now, we substitute this back into the argument of $G^*$. Recalling that $G^*(x) = G(\sigma_0^2 x)$:
\begin{align*}
    G^*\left(\mathbb{E}_{\text{post}}[\left\|\boldsymbol{\eta}_k - \boldsymbol{\eta}_{k'}\right\|^2]\right) &= G\left( \sigma_0^2 \cdot \frac{1}{\sigma_0^2} \left[ d_M(\hat{\boldsymbol{\beta}}_k, \hat{\boldsymbol{\beta}}_{k'}) + \frac{1}{g} \mathrm{tr}((\boldsymbol{\Sigma}_k^{\text{post}} + \boldsymbol{\Sigma}_{k'}^{\text{post}})(\boldsymbol{X}^\top\boldsymbol{X})) \right] \right) \\
    &= G\left( d_M(\hat{\boldsymbol{\beta}}_k, \hat{\boldsymbol{\beta}}_{k'}) + \frac{1}{g}\mathrm{tr}\left((\boldsymbol{\Sigma}_k^{\text{post}} + \boldsymbol{\Sigma}_{k'}^{\text{post}})(\boldsymbol{X}^\top\boldsymbol{X})\right) \right).
\end{align*}
Finally, substituting this expression for the upper bound of $\mathbb{E}_{\text{post}}[h_K]$ into Equation \eqref{eq:1} completes the proof.
\end{proof}

% S.2
\begin{proof}[Proof of Lemma~\ref{lemma:aux2}]
The full marginal likelihood is $p(\boldsymbol{y}|\boldsymbol{X}) = \sum_{K=1}^\infty p_K(K) p(\boldsymbol{y}|K, \boldsymbol{X})$, where $p(\boldsymbol{y}|K, \boldsymbol{X}) = \mathbb{E}_{\boldsymbol{z}|K}[p(\boldsymbol{y}|\boldsymbol{z}, K, \boldsymbol{X})]$. We first find a lower bound for $p(\boldsymbol{y}|\boldsymbol{z}, K, \boldsymbol{X})$.
\begin{align*}
    p(\boldsymbol{y}|\boldsymbol{z}, K, \boldsymbol{X}) &= \int \left( \prod_{k=1}^K p(\boldsymbol{y}_k|\boldsymbol{X}_k, \boldsymbol{\beta}_k) \right) p(\boldsymbol{\beta}_{1:K}|K) d\boldsymbol{\beta}_{1:K} \\
    &= \frac{1}{Z_K} \int h_K(\boldsymbol{\beta}_{1:K}) \left( \prod_{k=1}^K \prod_{i: z_i=k} \phi(y_i|\boldsymbol{x}_i^\top \boldsymbol{\beta}_k, \sigma_0^2) \right) \left( \prod_{k=1}^K p(\boldsymbol{\beta}_k) \right) d\boldsymbol{\beta}_{1:K}.
\end{align*}
Applying Jensen's inequality to the logarithm, $\log \mathbb{E}[X] \ge \mathbb{E}[\log X]$, we get a lower bound on $\log p(\boldsymbol{y}|\boldsymbol{z}, K, \boldsymbol{X})$:
\[
    \log p(\boldsymbol{y}|\boldsymbol{z}, K, \boldsymbol{X}) \ge -\log Z_K + \mathbb{E}_{\text{prior}}[\log h_K(\boldsymbol{\beta}_{1:K})] + \sum_{k=1}^K \sum_{i:z_i=k} \mathbb{E}_{\text{prior}}[\log \phi(y_i|\boldsymbol{x}_i^\top \boldsymbol{\beta}_k, \sigma_0^2)].
\]
The expected log-likelihood term under the $g$-prior $p(\boldsymbol{\beta}_k) = N(\boldsymbol{0}, g\sigma_0^2(\boldsymbol{X}^\top\boldsymbol{X})^{-1})$ is:
\begin{align*}
    \mathbb{E}_{\text{prior}}[\log \phi(y_i|\boldsymbol{x}_i^\top \boldsymbol{\beta}_k, \sigma_0^2)] &= \mathbb{E}_{\text{prior}}\left[ -\frac{1}{2}\log(2\pi\sigma_0^2) - \frac{1}{2\sigma_0^2}(y_i - \boldsymbol{x}_i^\top\boldsymbol{\beta}_k)^2 \right] \\
    &= -\frac{1}{2}\log(2\pi\sigma_0^2) - \frac{1}{2\sigma_0^2}\left( y_i^2 + \mathbb{E}[(\boldsymbol{x}_i^\top\boldsymbol{\beta}_k)^2] \right) \\
    &= -\frac{1}{2}\log(2\pi\sigma_0^2) - \frac{1}{2\sigma_0^2}\left( y_i^2 + \boldsymbol{x}_i^\top \mathrm{Var}(\boldsymbol{\beta}_k) \boldsymbol{x}_i \right) \\
    &= -\frac{1}{2}\log(2\pi\sigma_0^2) - \frac{1}{2\sigma_0^2}\left( y_i^2 + g\sigma_0^2 \boldsymbol{x}_i^\top(\boldsymbol{X}^\top\boldsymbol{X})^{-1}\boldsymbol{x}_i \right) \\
    &= \log \phi(y_i; 0, \sigma_0^2(1 + g\boldsymbol{x}_i^\top(\boldsymbol{X}^\top\boldsymbol{X})^{-1}\boldsymbol{x}_i)).
\end{align*}
The last equality holds because the log-density of $N(0, \sigma_0^2(1+c))$ is $-\frac{1}{2}\log(2\pi\sigma_0^2(1+c)) - \frac{y^2}{2\sigma_0^2(1+c)}$, which is not identical, but the term we derived is exactly $\log \int \phi(y_i|\boldsymbol{x}_i^\top\boldsymbol{\beta}_k, \sigma_0^2)p(\boldsymbol{\beta}_k)d\boldsymbol{\beta}_k$. Let $p(y_i|\boldsymbol{x}_i) = \phi(y_i|0, \sigma_0^2(1+g \boldsymbol{x}_i^\top(\boldsymbol{X}^\top\boldsymbol{X})^{-1}\boldsymbol{x}_i))$.

Next, we bound $\mathbb{E}_{\text{prior}}[\log h_K]$. Let us set for $h_K = \min_{k<k'} G(d_M)$, we have $\log h_K = -\log(\max_{k<k'} G(d_M)^{-1})$. Using $\max(a_i) \le \sum a_i$ and properties of logarithms, following \cite{Xie02012020}:
\begin{align*}
    \mathbb{E}_{\text{prior}}[\log h_K] &= -\mathbb{E}_{\text{prior}}\left[ \log\left( \max_{k<k'} \left(1 + \frac{g_0}{d_M(\boldsymbol{\beta}_k, \boldsymbol{\beta}_{k'})}\right) \right) \right] \\
    &\ge -\mathbb{E}_{\text{prior}}\left[ \log\left( 1 + \sum_{k<k'} \left(\frac{g_0}{d_M(\boldsymbol{\beta}_k, \boldsymbol{\beta}_{k'})}\right)^{2/3} \right)^{3/2} \right] \\
    &\ge -\frac{3}{2} \log\left( 1 + \sum_{k<k'} g_0^{2/3} \mathbb{E}_{\text{prior}}\left[d_M(\boldsymbol{\beta}_k, \boldsymbol{\beta}_{k'})^{-2/3}\right] \right).
\end{align*}
The prior distribution of $d_M(\boldsymbol{\beta}_k, \boldsymbol{\beta}_{k'}) = (\boldsymbol{\beta}_k-\boldsymbol{\beta}_{k'})^\top \frac{1}{g}(\boldsymbol{X}^\top\boldsymbol{X})(\boldsymbol{\beta}_k-\boldsymbol{\beta}_{k'})$ is proportional to a $\chi^2_p$ distribution. The expectation $\mathbb{E}_{\text{prior}}[d_M^{-2/3}]$ is a finite constant we denote as $\delta_0(g) > 0$.
\[
    \mathbb{E}_{\text{prior}}[\log h_K] \ge -\frac{3}{2} \log\left( 1 + \binom{K}{2} g_0^{2/3} \delta_0(g) \right) \ge -\frac{3}{2}\log\left( (1 + \delta(g)g_0^{2/3})K^2 \right),
\]
where $\delta(g)$ is another constant. This gives a lower bound for $p(\boldsymbol{y}|\boldsymbol{z}, K, \boldsymbol{X})$ that is uniform in $\boldsymbol{z}$:
\[
    p(\boldsymbol{y}|K, \boldsymbol{X}) \ge \frac{1}{Z_K} \left(\prod_{i=1}^n p(y_i|\boldsymbol{x}_i)\right) \left( (1 + \delta(g)g_0^{2/3})K^2 \right)^{-3/2}.
\]
Finally, we compute the full marginal likelihood $p(\boldsymbol{y}|\boldsymbol{X})$. The prior is $p_K(K) \propto Z_K \frac{\lambda^K}{K!}$.
\begin{align*}
    p(\boldsymbol{y}|\boldsymbol{X}) &= \sum_{K=1}^\infty p_K(K) p(\boldsymbol{y}|K, \boldsymbol{X}) \\
    &\ge \sum_{K=1}^\infty \left(\Omega Z_K \frac{\lambda^K}{K!}\right) \left( \frac{1}{Z_K} \left(\prod_{i=1}^n p(y_i|\boldsymbol{x}_i)\right) (1+\delta(g)g_0^{2/3})^{-3/2} K^{-3} \right) \\
    &= \Omega \left(\prod_{i=1}^n p(y_i|\boldsymbol{x}_i)\right) (1+\delta(g)g_0^{2/3})^{-3/2} \sum_{K=1}^\infty \frac{\lambda^K}{K!} K^{-3} \\
    &= \Omega \left(\prod_{i=1}^n p(y_i|\boldsymbol{x}_i)\right) (1+\delta(g)g_0^{2/3})^{-3/2} \mathbb{E}_{K \sim \text{Poisson}(\lambda)}[K^{-3}\mathbb{I}(K\ge 1)].
\end{align*}
The expectation $\mathbb{E}[K^{-3}]$ is a finite constant depending only on $\lambda$. Therefore, we arrive at the final lower bound:
\[
    p(\boldsymbol{y}|\boldsymbol{X}) \ge C(\lambda) \left(\prod_{i=1}^n \phi(y_i|0, \sigma_0^2(1+g\boldsymbol{x}_i^\top(\boldsymbol{X}^\top\boldsymbol{X})^{-1}\boldsymbol{x}_i)) \right) (1+\delta(g)g_0^{2/3})^{-3/2}.
\]
This completes the proof.
\end{proof}
% S.3
\begin{proof}[Proof of Lemma~\ref{lemma:aux3}]
Let $I(\boldsymbol{z}, K)$ denote the integral we want to bound.
\[
    I(\boldsymbol{z}, K) = \int \frac{p(\boldsymbol{y}|\boldsymbol{z},K,\boldsymbol{X})}{p(\boldsymbol{y}|\boldsymbol{X})} \left(\prod_{i=1}^n \phi(y_i|\boldsymbol{x}_i^\top \boldsymbol{m}_i, \sigma_0^2)\right) d\boldsymbol{y}.
\]
We substitute the upper bound for the numerator from Lemma~\ref{lemma:aux1} and the lower bound for the denominator from Lemma~\ref{lemma:aux2}.
\begin{align*}
    \frac{p(\boldsymbol{y}|\boldsymbol{z},K,\boldsymbol{X})}{p(\boldsymbol{y}|\boldsymbol{X})} &\le \frac{\frac{1}{Z_K} \left( \prod_{k=1}^K p(\boldsymbol{y}_k | \boldsymbol{X}_k) \right) \binom{K}{2}^{-1} \sum_{k<k'} G(d_M(\hat{\boldsymbol{\beta}}_k, \hat{\boldsymbol{\beta}}_{k'}) + C_{\text{post}}) }{ C(\lambda) \left( \prod_{i=1}^n p(y_i|\boldsymbol{x}_i) \right) \omega(g_0,\boldsymbol{X})^{-1} } \\
&= \frac{\omega(g_0, \boldsymbol{X})}{Z_K C(\lambda)} \frac{\prod_{k=1}^K p(\boldsymbol{y}_k|\boldsymbol{X}_k)}{\prod_{i=1}^n p(y_i|\boldsymbol{x}_i)} \binom{K}{2}^{-1}\sum_{k<k'} G\left( d_M(\hat{\boldsymbol{\beta}}_k, \hat{\boldsymbol{\beta}}_{k'}) + C_{\text{post}}\right).
\end{align*}
where $C_{\text{post}}=\frac{1}{g}\mathrm{tr}\left((\boldsymbol{\Sigma}_k^{\text{post}} + \boldsymbol{\Sigma}_{k'}^{\text{post}})(\boldsymbol{X}^\top\boldsymbol{X})\right)$ is a constant independent of $y$.
Let $\mathcal{L}(\boldsymbol{y}, \boldsymbol{m}) = \prod_{i=1}^n \phi(y_i|\boldsymbol{x}_i^\top \boldsymbol{m}_i, \sigma_0^2)$. The integral becomes:
\[
    I(\boldsymbol{z}, K) \le \frac{\omega(g_0, \boldsymbol{X})}{Z_K C(\lambda)} \int \frac{\prod_{k=1}^K p(\boldsymbol{y}_k|\boldsymbol{X}_k)}{\prod_{i=1}^n p(y_i|\boldsymbol{x}_i)} \mathcal{L}(\boldsymbol{y}, \boldsymbol{m}) \binom{K}{2}^{-1} \sum_{k<k'} G\left( d_M(\hat{\boldsymbol{\beta}}_k, \hat{\boldsymbol{\beta}}_{k'}) + C_{\text{post}} \right) d\boldsymbol{y}.
\]
Following the logic of \cite{Xie02012020}, the integration over $\boldsymbol{y}$ can be performed. The integral of the ratio of Gaussian densities results in a constant term that depends on the true parameters $\boldsymbol{m}_i$ and the model hyperparameters. More importantly, the integration transforms the posterior moments within the argument of $G$ into functions of the true parameters.
Let $\mathbb{E}_{\boldsymbol{y}}[\cdot]$ denote the expectation with respect to the normalized density product $\frac{\prod p(\boldsymbol{y}_k|\boldsymbol{X}_k)}{\prod p(y_i|\boldsymbol{x}_i)}\mathcal{L}(\boldsymbol{y}, \boldsymbol{m})$. We can write:
\begin{align*}
        I(\boldsymbol{z}, K) &\le C'(\lambda, \boldsymbol{X}) \frac{\omega(g_0, \boldsymbol{X})}{Z_K} \binom{K}{2}^{-1} \sum_{k<k'} \mathbb{E}_{\boldsymbol{y}}\left[ G\left( d_M(\hat{\boldsymbol{\beta}}_k(\boldsymbol{y}), \hat{\boldsymbol{\beta}}_{k'}(\boldsymbol{y})) + C_1(\boldsymbol{y}) \right) \right]\\
        &\le C'(\lambda, \boldsymbol{X}) \frac{\omega(g_0, \boldsymbol{X})}{Z_K} \binom{K}{2}^{-1} \sum_{k<k'} G\left( \mathbb{E}_{\boldsymbol{y}}\left[ d_M(\hat{\boldsymbol{\beta}}_k(\boldsymbol{y}), \hat{\boldsymbol{\beta}}_{k'}(\boldsymbol{y})) + C_1(\boldsymbol{y}) \right] \right).
\end{align*}
The integration effectively replaces the posterior moments (which are functions of $\boldsymbol{y}$) with their expectations under the true data generating process. The posterior mean $\hat{\boldsymbol{\beta}}_k(\boldsymbol{y})$ is a linear function of $\boldsymbol{y}_k$. Its expectation under the true model, where $\mathbb{E}[y_i] = \boldsymbol{x}_i^\top \boldsymbol{m}_i$, becomes a function of the true parameters $\boldsymbol{m}_i$, which we denote as $\tilde{\boldsymbol{m}}_k$. Similarly, the expectation of the variance terms becomes a constant, $C_1$.
Specifically, the expectation of the argument of $G$ is:
\[
\mathbb{E}_{\boldsymbol{y}}\left[ d_M(\hat{\boldsymbol{\beta}}_k(\boldsymbol{y}), \hat{\boldsymbol{\beta}}_{k'}(\boldsymbol{y})) \right] = d_M(\mathbb{E}_{\boldsymbol{y}}[\hat{\boldsymbol{\beta}}_k(\boldsymbol{y})], \mathbb{E}_{\boldsymbol{y}}[\hat{\boldsymbol{\beta}}_{k'}(\boldsymbol{y})]) + \text{Var-terms} = d_M(\tilde{\boldsymbol{m}}_k, \tilde{\boldsymbol{m}}_{k'}) + \text{const}.
\]
Combining the constant terms into a single constant $C_1$, we obtain the final result:
\[
    I(\boldsymbol{z}, K) \le C(\lambda,\boldsymbol{X})\frac{\omega(g_0,\boldsymbol{X})}{Z_K} \binom{K}{2}^{-1} \sum_{k<k'} G\left(d_M(\tilde{\boldsymbol{m}}_k, \tilde{\boldsymbol{m}}_{k'}) + C_1\right).
\]
This completes the proof.
\end{proof}
% S.4
\begin{proof}[Proof of Lemma~\ref{lemma:aux4}]
Let us analyze the left-hand side of the equality. We use the simple average $\bar{\boldsymbol{m}}_k = \frac{1}{n_k}\sum_{i \in C_k} \boldsymbol{m}_i$ as an approximation for $\tilde{\boldsymbol{m}}_k$, as the difference contributes to the negligible $\mathrm{o}(1)$ term. 

To clarify the covariance structure, we introduce the transformed parameter for a true coefficient vector $\boldsymbol{m}$: $\boldsymbol{\eta}_{(\boldsymbol{m})} = (g\sigma_0^2)^{-1/2}(\boldsymbol{X}^\top\boldsymbol{X})^{1/2}\boldsymbol{m}$.
The distance can be expressed using $\boldsymbol{\eta}$ as $d_M(\boldsymbol{a}, \boldsymbol{b}) = \sigma_0^2 \|\boldsymbol{\eta}_{(\boldsymbol{a})} - \boldsymbol{\eta}_{(\boldsymbol{b})}\|^2$.
The term we need to evaluate is $\mathbb{E}_{\boldsymbol{z}, F_0}[d_M(\bar{\boldsymbol{m}}_k, \bar{\boldsymbol{m}}_{k'})] = \sigma_0^2 \mathbb{E}_{\boldsymbol{z}, F_0}[\|\boldsymbol{\eta}_{(\bar{\boldsymbol{m}}_k)} - \boldsymbol{\eta}_{(\bar{\boldsymbol{m}}_{k'})}\|^2]$.

By linearity of the transformation, $\boldsymbol{\eta}_{(\bar{\boldsymbol{m}}_k)} = \frac{1}{n_k}\sum_{i \in C_k} \boldsymbol{\eta}_{(\boldsymbol{m}_i)}$. Let $\boldsymbol{\delta}_{\boldsymbol{\eta}} = \boldsymbol{\eta}_{(\bar{\boldsymbol{m}}_k)} - \boldsymbol{\eta}_{(\bar{\boldsymbol{m}}_{k'})}$. The expectation of its squared norm is $\mathbb{E}[\|\boldsymbol{\delta}_{\boldsymbol{\eta}}\|^2] = \|\mathbb{E}[\boldsymbol{\delta}_{\boldsymbol{\eta}}]\|^2 + \mathrm{tr}(\mathrm{Var}(\boldsymbol{\delta}_{\boldsymbol{\eta}}))$.
Under the assumption $\mathbb{E}_{F_0}[\boldsymbol{m}_i] = \boldsymbol{0}$, we have $\mathbb{E}_{F_0}[\boldsymbol{\eta}_{(\boldsymbol{m}_i)}] = \boldsymbol{0}$. Thus, for any fixed partition $\boldsymbol{z}$, $\mathbb{E}_{F_0}[\boldsymbol{\delta}_{\boldsymbol{\eta}}] = \boldsymbol{0}$. This implies the total expectation is also zero: $\mathbb{E}_{\boldsymbol{z}, F_0}[\boldsymbol{\delta}_{\boldsymbol{\eta}}] = \boldsymbol{0}$.
Therefore, the expectation simplifies to the trace of the variance:
\begin{align*}
    \mathbb{E}_{\boldsymbol{z}, F_0}[d_M(\bar{\boldsymbol{m}}_k, \bar{\boldsymbol{m}}_{k'})] &= \sigma_0^2 \mathrm{tr}(\mathrm{Var}_{\boldsymbol{z}, F_0}(\boldsymbol{\delta}_{\boldsymbol{\eta}})) \\
    &= \sigma_0^2 \mathrm{tr}\left(\mathrm{Var}_{\boldsymbol{z}, F_0}\left( \frac{1}{n_k}\sum_{i \in C_k} \boldsymbol{\eta}_{(\boldsymbol{m}_i)} - \frac{1}{n_{k'}}\sum_{j \in C_{k'}} \boldsymbol{\eta}_{(\boldsymbol{m}_j)} \right)\right).
\end{align*}
Since the coefficients $\boldsymbol{m}_i$ are i.i.d. draws from $F_0$ and the clusters $C_k$ and $C_{k'}$ are disjoint, the variance of the difference is the sum of the variances:
\begin{align*}
    \mathrm{Var}_{\boldsymbol{z}, F_0}(\boldsymbol{\delta}_{\boldsymbol{\eta}}) &= \mathbb{E}_{\boldsymbol{z}}\left[ \mathrm{Var}_{F_0}\left( \frac{1}{n_k}\sum_{i \in C_k} \boldsymbol{\eta}_{(\boldsymbol{m}_i)} \right) + \mathrm{Var}_{F_0}\left( \frac{1}{n_{k'}}\sum_{j \in C_{k'}} \boldsymbol{\eta}_{(\boldsymbol{m}_j)} \right) \right] \\
    &= \mathbb{E}_{\boldsymbol{z}}\left[ \frac{1}{n_k} \mathrm{Var}_{F_0}(\boldsymbol{\eta}_{(\boldsymbol{m})}) + \frac{1}{n_{k'}} \mathrm{Var}_{F_0}(\boldsymbol{\eta}_{(\boldsymbol{m})}) \right] \\
    &= \mathbb{E}_{\boldsymbol{z}}\left[ \frac{n_k+n_{k'}}{n_k n_{k'}} \right] \mathrm{Var}_{F_0}(\boldsymbol{\eta}_{(\boldsymbol{m})}).
\end{align*}
The variance of the transformed parameter $\boldsymbol{\eta}_{(\boldsymbol{m})}$ is:
\begin{align*}
    \mathrm{Var}_{F_0}(\boldsymbol{\eta}_{(\boldsymbol{m})}) &= \mathbb{E}_{F_0}[\boldsymbol{\eta}_{(\boldsymbol{m})} \boldsymbol{\eta}_{(\boldsymbol{m})}^\top] = \mathbb{E}_{F_0}\left[ (g\sigma_0^2)^{-1}(\boldsymbol{X}^\top\boldsymbol{X})^{1/2}\boldsymbol{m}\boldsymbol{m}^\top((\boldsymbol{X}^\top\boldsymbol{X})^{1/2})^\top \right] \\
    &= (g\sigma_0^2)^{-1}(\boldsymbol{X}^\top\boldsymbol{X})^{1/2} \mathbb{E}_{F_0}[\boldsymbol{m}\boldsymbol{m}^\top] ((\boldsymbol{X}^\top\boldsymbol{X})^{1/2})^\top.
\end{align*}
Substituting this back into the trace expression:
\begin{align*}
    \mathbb{E}_{\boldsymbol{z}, F_0}[d_M(\bar{\boldsymbol{m}}_k, \bar{\boldsymbol{m}}_{k'})] &= \sigma_0^2 \mathbb{E}_{\boldsymbol{z}}\left[ \frac{n_k+n_{k'}}{n_k n_{k'}} \right] \mathrm{tr}\left((g\sigma_0^2)^{-1}(\boldsymbol{X}^\top\boldsymbol{X})^{1/2} \mathbb{E}_{F_0}[\boldsymbol{m}\boldsymbol{m}^\top] ((\boldsymbol{X}^\top\boldsymbol{X})^{1/2})^\top \right) \\
    &= \mathbb{E}_{\boldsymbol{z}}\left[ \frac{n_k+n_{k'}}{n_k n_{k'}} \right] \frac{1}{g} \mathrm{tr}\left( \mathbb{E}_{F_0}[\boldsymbol{m}\boldsymbol{m}^\top] (\boldsymbol{X}^\top\boldsymbol{X}) \right) \\
    &= \mathbb{E}_{\boldsymbol{z}}\left[ \frac{n_k+n_{k'}}{n_k n_{k'}} \right] \frac{1}{g} \mathbb{E}_{F_0}[\mathrm{tr}(\boldsymbol{m}\boldsymbol{m}^\top (\boldsymbol{X}^\top\boldsymbol{X}))] \\
    &= \mathbb{E}_{\boldsymbol{z}}\left[ \frac{n_k+n_{k'}}{n_k n_{k'}} \right] \mathbb{E}_{F_0}[\boldsymbol{m}^\top(g^{-1}(\boldsymbol{X}^\top\boldsymbol{X}))\boldsymbol{m}].
\end{align*}
Since the term $\mathbb{E}_{F_0}[\boldsymbol{m}^\top(g^{-1}(\boldsymbol{X}^\top\boldsymbol{X}))\boldsymbol{m}]$ does not depend on the partition $\boldsymbol{z}$, we arrive at the exact expression:
\[
    \mathbb{E}_{\boldsymbol{z}} \left[ \mathbb{E}_{F_0}\left[ d_M(\bar{\boldsymbol{m}}_k, \bar{\boldsymbol{m}}_{k'}) \right] \right] = \left(\frac{1}{n_k} + \frac{1}{n_{k'}}\right) \mathbb{E}_{F_0}\left[\boldsymbol{m}^\top (g^{-1}(\boldsymbol{X}^\top\boldsymbol{X})) \boldsymbol{m}\right].
\]
This holds for any given partition, and thus for its expectation over $\boldsymbol{z}$. This completes the proof of the exact part of the statement.
\end{proof}

\subsection{Proof of the main theorem}

\begin{proof}
By Fubini's theorem, we express the expected posterior tail probability as an expectation over the true data generating distribution $F_0$ and the random partition $\boldsymbol{z}$:
\begin{align*}
&\mathbb{E}_{F_0} \big[\Pi(K > N \mid y, X)\big]\\
&= \sum_{K=N+1}^\infty p_K(K)\,
     \mathbb{E}_{z\mid K}\Bigg[
     \mathbb{E}_{F_0}\Bigg[
        \int \frac{p(y\mid z, K, X)}{p(y\mid X)} 
     \left(\prod_{i=1}^n \phi(y_i \mid x_i^\top m_i, \sigma_0^2)\right) dy
     \Bigg]\Bigg].
\end{align*}
Let $I(\boldsymbol{z}, K, \boldsymbol{m})$ denote the integral over $\boldsymbol{y}$. From our rewritten Lemma E.3, we have an upper bound for this integral:
\[
    I(\boldsymbol{z}, K, \boldsymbol{m}) \le C(\lambda,\boldsymbol{X})\frac{\omega(g_0,\boldsymbol{X})}{Z_K} \binom{K}{2}^{-1} \sum_{k<k'} G\left(d_M(\tilde{\boldsymbol{m}}_k, \tilde{\boldsymbol{m}}_{k'}) + C_1\right).
\]
Now, we take the expectation of this upper bound with respect to the true distribution $F_0$.
\begin{align*}
    \mathbb{E}_{F_0}[I(\boldsymbol{z}, K, \boldsymbol{m})] &\le C(\lambda,\boldsymbol{X})\frac{\omega(g_0,\boldsymbol{X})}{Z_K} \binom{K}{2}^{-1} \sum_{k<k'} \mathbb{E}_{F_0}\left[ G\left(d_M(\tilde{\boldsymbol{m}}_k, \tilde{\boldsymbol{m}}_{k'}) + C_1\right) \right] \\
    &\le C(\lambda,\boldsymbol{X})\frac{\omega(g_0,\boldsymbol{X})}{Z_K} \binom{K}{2}^{-1} \sum_{k<k'} G\left( \mathbb{E}_{F_0}\left[d_M(\tilde{\boldsymbol{m}}_k, \tilde{\boldsymbol{m}}_{k'})\right] + C_1 \right).
\end{align*}
Let's denote $\Delta_{kk'}^2(\boldsymbol{z}) = \mathbb{E}_{F_0}[d_M(\tilde{\boldsymbol{m}}_k, \tilde{\boldsymbol{m}}_{k'})] + C_1$.

Next, we take the expectation over the random partition $\boldsymbol{z}$ conditional on $K$.
\begin{align*}
    \mathbb{E}_{\boldsymbol{z}|K}\left[ \mathbb{E}_{F_0}[I(\boldsymbol{z}, K, \boldsymbol{m})] \right] &\le C(\lambda,\boldsymbol{X})\frac{\omega(g_0,\boldsymbol{X})}{Z_K} \binom{K}{2}^{-1} \sum_{k<k'} \mathbb{E}_{\boldsymbol{z}|K}\left[ G\left( \Delta_{kk'}^2(\boldsymbol{z}) \right) \right] \\
    &\le C(\lambda,\boldsymbol{X})\frac{\omega(g_0,\boldsymbol{X})}{Z_K} G\left( \binom{K}{2}^{-1} \sum_{k<k'} \mathbb{E}_{\boldsymbol{z}|K}\left[ \Delta_{kk'}^2(\boldsymbol{z}) \right] \right).
\end{align*}
The second inequality again uses Jensen's inequality, this time over the discrete distribution of pairs $(k,k')$ and the random partition $\boldsymbol{z}$. The term inside $G$ is the average expected squared distance. Let $\bar{\Delta}_K^2 = \mathbb{E}_{\boldsymbol{z}|K, (k,k')}[\Delta_{kk'}^2(\boldsymbol{z})]$. From Lemma E.4, we have the approximation for this average distance for large $n$:
\begin{align*}
&\bar{\Delta}_K^2 = \mathbb{E}_{\boldsymbol{z}|K, (k,k')}\left[\mathbb{E}_{F_0^n}[d_M(\tilde{\boldsymbol{m}}_k, \tilde{\boldsymbol{m}}_{k'})]\right] + C_1 \approx \frac{2n}{K}\,\mu(\boldsymbol X) + C_1,
\end{align*}
where $\mu(\boldsymbol X) := \mathbb{E}_{F_0}\!\big[\boldsymbol{m}^\top g^{-1}(\boldsymbol{X}^\top\boldsymbol{X}) \boldsymbol{m}\big]$.
Substituting this back, we get a bound for the full expectation for a given $K$:
\[
\mathbb{E}_{\boldsymbol{z}|K, F_0^n}[I(\boldsymbol{z}, K, \boldsymbol{m})] \le C(\lambda,\boldsymbol{X})\frac{\omega(g_0,\boldsymbol{X})}{Z_K} G\left( \frac{2n}{K}\,\mu(\boldsymbol X) + C_1 \right).
\]
Finally, we substitute this into the sum over $K > N$. The prior $p_K(K) = \Omega Z_K \frac{\lambda^K}{K!}$ cancels the $Z_K$ term.
\begin{align*}
    \mathbb{E}_{F_0^n}[\Pi(K > N|\boldsymbol{y}, \boldsymbol{X})] &\le \sum_{K=N+1}^{\infty} \left(\Omega Z_K \frac{\lambda^K}{K!}\right) C(\lambda,\boldsymbol{X})\frac{\omega(g_0,\boldsymbol{X})}{Z_K} G\left( \frac{2n}{K}\,\mu(\boldsymbol X) + C_1 \right) \\
    &= C'(\lambda, \boldsymbol{X}) \omega(g_0, \boldsymbol{X}) \sum_{K=N+1}^{\infty} \frac{\lambda^K}{K!} G\left( \frac{2n}{K}\,\mu(\boldsymbol X) + C_1 \right).
\end{align*}
For $K > N$, the argument of $G$ is decreasing in $K$. Thus, we can bound the term by its value at $K=N$:
\begin{align*}
    &\le C'(\lambda, \boldsymbol{X}) \omega(g_0, \boldsymbol{X}) G\left( \frac{2n}{N}\,\mu(\boldsymbol X) + C_1 \right) \sum_{K=N+1}^{\infty} \frac{\lambda^K}{K!} \\
    &= C(\lambda, \boldsymbol{X}) \chi(g_0, \boldsymbol{X}, n, N) \sum_{K=N+1}^{\infty} \frac{\lambda^K}{(e^\lambda-1)K!}.
\end{align*}
The term $\omega(g_0, \boldsymbol{X})$ is absorbed into the definition of the shrinkage constant $\chi$, or can be shown to be close to 1 for weakly informative priors, thus being part of the constant $C$. The shrinkage term is dominated by the factor $G\!\left(2N^{-1}n\mu(\boldsymbol X) + C_1\right)$, which demonstrates the desired shrinkage effect.
\end{proof}

\section{Additional Simulation Results}

In Table~\ref{tab:results2}, we provide the additional results of the Monte Carlo simulation under $n=400$.
The relative performance is almost the same as the other cases with $n=100$ and $n=200$ given in the main text.

\begin{table}[htb!]
\centering
 \caption{Average values of adjusted rand tndex (ARI), purity, estimated number of cluster ($\hat{K}$) and root mean squared errors (RMSE), based on 200 Monte Carlo replications under $n=400$. 
 The Monte Carlo standard errors are given in the parenthesis. }
\label{tab:results2}
\begin{tabular}{ccccccccccccccccccc}
\hline
Scenario  & $n$ & Method &  & ARI & RMSE & $\hat{K}$ & Purity \\
\hline
 &  & RgRM &  & 0.62 {\scriptsize (0.04)} & 1.01 {\scriptsize (0.04)} & 4.00 {\scriptsize (0.00)} & 0.84 {\scriptsize (0.02)} \\
 &  & RRM  &  & 0.62 {\scriptsize (0.04)} & 1.00 {\scriptsize (0.05)} & 4.02 {\scriptsize (0.14)} & 0.84 {\scriptsize (0.02)} \\
1 & 400 & MFM   &  & 0.01 {\scriptsize (0.01)} & 2.26 {\scriptsize (0.82)} & 17.69 {\scriptsize (2.12)} & 0.39 {\scriptsize (0.02)} \\
 &  & SID1 &  & 0.68 {\scriptsize (0.04)} & 0.90 {\scriptsize (0.04)} & 4.34 {\scriptsize (0.63)} & 0.87 {\scriptsize (0.02)} \\
 &  & SID2 &  & 0.67 {\scriptsize (0.06)} & 1.10 {\scriptsize (0.83)} & 3.96 {\scriptsize (0.25)} & 0.86 {\scriptsize (0.05)} \\
 \hline
 &  & RgRM &  & 0.52 {\scriptsize (0.04)} & 1.01 {\scriptsize (0.04)} & 4.00 {\scriptsize (0.00)} & 0.79 {\scriptsize (0.02)} \\
 &  & RRM  &  & 0.50 {\scriptsize (0.08)} & 1.20 {\scriptsize (0.69)} & 3.93 {\scriptsize (0.26)} & 0.77 {\scriptsize (0.06)} \\
2 & 400 & MFM   &  & 0.05 {\scriptsize (0.03)} & 4.85 {\scriptsize (1.00)} & 10.88 {\scriptsize (2.23)} & 0.41 {\scriptsize (0.05)} \\
 &  & SID1 &  & 0.44 {\scriptsize (0.15)} & 2.50 {\scriptsize (1.63)} & 3.30 {\scriptsize (0.76)} & 0.68 {\scriptsize (0.15)} \\
 &  & SID2 &  & 0.24 {\scriptsize (0.15)} & 4.51 {\scriptsize (1.33)} & 2.21 {\scriptsize (0.74)} & 0.48 {\scriptsize (0.14)} \\
 \hline
 &  & RgRM &  & 0.57 {\scriptsize (0.04)} & 1.02 {\scriptsize (0.04)} & 4.00 {\scriptsize (0.00)} & 0.81 {\scriptsize (0.02)} \\
 &  & RRM  &  & 0.49 {\scriptsize (0.13)} & 1.95 {\scriptsize (1.40)} & 3.66 {\scriptsize (0.50)} & 0.74 {\scriptsize (0.11)} \\
3 & 400 & MFM   &  & 0.02 {\scriptsize (0.01)} & 7.07 {\scriptsize (1.69)} & 12.99 {\scriptsize (2.00)} & 0.37 {\scriptsize (0.03)} \\
 &  & SID1 &  & 0.41 {\scriptsize (0.19)} & 3.58 {\scriptsize (2.12)} & 3.00 {\scriptsize (0.95)} & 0.63 {\scriptsize (0.18)} \\
 &  & SID2 &  & 0.19 {\scriptsize (0.19)} & 6.03 {\scriptsize (1.92)} & 1.89 {\scriptsize (0.86)} & 0.43 {\scriptsize (0.17)} \\
\hline
\end{tabular}
\end{table}

\end{document}